\newtheorem{Theorem}{Theorem}
\newtheorem{Lemma}{Lemma}
\newtheorem{Proposition}{Proposition}
\newtheorem{Corollary}{Corollary}
\newtheorem{Assumption}{Assumption}
\newtheorem*{Assumption1}{Assumption A.1}
\newtheorem*{Assumption2}{Assumption A.2}
\newtheorem{Example}{Example}
\newtheorem{Remark}{Remark}
\begin{document}

\doublespacing

\author{Takuya Ishihara\footnote{Tohoku University, Graduate School of Economics and Management, email: takuya.ishihara.b7@tohoku.ac.jp. I would like to thank the editor, associate editor, and anonymous referees for their careful reading and comments. I would also like to thank Hidehiko Ichimura, Hiroyuki Kasahara, Masayuki Sawada, Katsumi Shimotsu, and the seminar participants at the University of Tokyo, Kobe University, and Tohoku University. This research was supported by a Grant-in-Aid for JSPS Fellows (20J00900) from the JSPS.}}

\title{Panel Data Quantile Regression for Treatment Effect Models}
\date{\today}
\maketitle

\begin{abstract}
In this study, we develop a novel estimation method for quantile treatment effects (QTE) under rank invariance and rank stationarity assumptions. 
\cite{ishihara2020identification} explores identification of the nonseparable panel data model under these assumptions and proposes a parametric estimation based on the minimum distance method.
However, when the dimensionality of the covariates is large, the minimum distance estimation using this process is computationally demanding.
To overcome this problem, we propose a two-step estimation method based on the quantile regression and minimum distance methods.
We then show the uniform asymptotic properties of our estimator and the validity of the nonparametric bootstrap.
The Monte Carlo studies indicate that our estimator performs well in finite samples.
Finally, we present two empirical illustrations, to estimate the distributional effects of insurance provision on household production and TV watching on child cognitive development.
\if0
Keywords: continuous treatments, nonseparable models, heterogenous treatment effects
\fi
\end{abstract}

\section{Introduction}

In the literature on program evaluation, it is important to learn about the distributional effects beyond the average effects of the treatment.
Policymakers are more likely to prefer a policy that tends to increase outcomes in the lower tail of the outcome distribution to one that tends to increase outcomes in the middle or upper tail of the outcome distribution.
Such effects can be captured by comparing the quantiles of the treated and control potential outcomes.
The parameter of interest here is the quantile treatment effects (QTE) or the quantile treatment effects on the treated (QTT).
For example, \cite{abadie2002instrumental} estimated the distributional impact of the Job Training Partnership Act (JTPA) program on earnings.
They showed that, for women, the JTPA program had the largest proportional impact at low quantiles.
However, the training impact for men was largest in the upper half of the distribution, with no significant effect on the lower quantiles.
This result could not have been achieved using a mean impact analysis.
Empirical researchers have estimated distributional effects, such as the QTE or QTT, in many areas of empirical economic researches.
For example, \cite{chernozhukov2004effects} estimated the QTE of participation in a 401(k) plan on several measures of wealth; \cite{james2006mean} estimated the QTE of welfare reforms on earnings, transfers, and income; \cite{martincus2010beyond} estimated the QTE of trade promotion activities; and \cite{havnes2015universal} and \cite{kottelenberg2017targeted} estimated the QTT of universal child care.

There is also a rich literature on the identification and estimation of the QTE and QTT in various contexts.
\cite{firpo2007efficient} explored the identification and estimation of the QTE under unconfoundedness.
\cite{abadie2002bootstrap}, \cite{chernozhukov2005iv}, \cite{chernozhukov2006instrumental}, and \cite{frolich2013unconditional} showed how instrumental variables can be used to identify the QTE.
\cite{athey2006identification}, \cite{melly2015changes}, and \cite{callaway2019quantile} provided the identification and estimation results for the QTT in a difference-in-differences (DID) setting by using repeated cross-sections or panel data.
Further, \cite{d2013nonlinear} studied the identification of nonseparable models with continuous treatments using repeated cross sections.

In this study, we use use panel data to develop a novel estimation method for the QTE under rank invariance and rank stationarity assumptions.
We propose a two-step estimator, based on the quantile regression and minimum distance methods.
The rank invariance assumption is used in many nonseparable models, such as those in \cite{matzkin2003nonparametric}, \cite{chernozhukov2005iv}, \cite{d2015identification}, \cite{torgovitsky2015identification}, \cite{feng2020estimation}, and \cite{ishihara2020identification}.
This assumption implies that a scalar unobserved factor determines the potential outcomes across treatment status.
The rank stationarity assumption implies that the conditional distribution of the unobserved factor, given explanatory variables and covariates, does not change over time.
In the literature on nonseparable panel data models, similar assumptions were employed by \cite{athey2006identification}, \cite{hoderlein2012nonparametric}, \cite{graham2012identification}, \cite{d2013nonlinear}, \cite{chernozhukov2013average}, \cite{chernozhukov2015nonparametric}, and \cite{ishihara2020identification}.

\cite{ishihara2020identification} also explores the identification of the nonseparable panel data model under the rank invariance and rank stationarity assumptions.
In this work, the structural function depends on the time period in an arbitrary way and does not require the existence of "stayers" - individuals with the same regressor values in two time periods.
It is important to consider nonlinear time trends when modeling the quantile function.
In this case, additive time trends may be restrictive.
For example, if the quantile function of $Y_t$ is written as $q_t(\tau) = g(\tau) + \mu_t$, the distribution of $Y_t$ is the same across time, up to the location.
However, such an assumption is not valid for many empirical applications.
In contrast, the nonseparable panel data model proposed by \cite{ishihara2020identification} captures nonlinear time effects.

Many nonseparable panel data models require the existence of stayers; this is included in \cite{evdokimov2010identification}, \cite{hoderlein2012nonparametric}, and \cite{chernozhukov2015nonparametric}.
In particular, \cite{evdokimov2010identification} requires the existence of stayers for any value of the treatment variable.
However, many empirically important models do not satisfy this assumption.
For example, in standard DID models, no individuals are treated during both time periods.
The identification approach of \cite{ishihara2020identification} does not require the existence of stayers and allows the support conditions that are employed in standard DID models.

\cite{ishihara2020identification} also proposes a parametric estimation based on the minimum distance method.
However, when the dimensionality of the covariates is large, the minimum distance estimator is computationally demanding.
Hence, when we add many covariates into the model, it is difficult to compute the estimator.
To overcome this problem, we propose a two-step estimation method based on the quantile regression and minimum distance methods.
Using quantile regression, we can obtain an estimator of the QTE by optimizing the objective function over a low-dimensional parameter.
This two-step estimation method is similar to the instrumental variable quantile regression method proposed by \cite{chernozhukov2006instrumental}.

In a DID setting, our model is similar to the changes-in-changes (CIC) model.
\cite{athey2006identification} suggest the CIC model as an alternative to the DID model.
The CIC model allows QTT estimation.
Their model is less restrictive than our model because their approach does not require the rank invariance assumption.
However, their approach does not work when the treatment is a continuous variable or a discrete variable with many different possible values.
There exist many empirical applications in which the treatment variable is continuous, such as when many researchers use panel data to estimate the effect of class size on children's test scores.
Our estimation method, contrary to the CIC model, works when the treatment variable is continuous.

\cite{d2013nonlinear} study the identification of nonseparable models with continuous treatments using repeated cross sections.
They allow for nonlinear time effects by assuming that the structural function $g_t(x,u)$ can be written as $m_t(h(x,u))$, where $m_t$ is a monotonic transformation.
Their study proposes a nonparametric estimation method of the QTT.
However, if we add many covariates into the model, their estimation method does not work because of the curse of dimensionality.

\cite{melly2015changes}, \cite{kottelenberg2017targeted}, and \cite{sawada2019noncompliance} also consider the estimation of the CIC model in the presence of covariates.
\cite{melly2015changes} suggest a flexible semiparametric estimator based on a quantile regression analysis.
They estimate the conditional distribution of outcomes for both treatment and control groups and both periods by using quantile regression, and then apply the changes-in-changes transformations.
Similar to \cite{melly2015changes}, \cite{sawada2019noncompliance} proposes a semiparametric estimator based on distribution regressions.
\cite{kottelenberg2017targeted} rely on the Firpo's (2007) extension to quantiles of the inverse propensity scores method.
However, none of them allow for continuous treatments.

An alternative approach estimates the distributional effects using panel data.
\cite{callaway2019quantile} provide identification and estimation results for the QTT under a straightforward extension of the most common DID assumption.
To identify the QTT, they employ two key assumptions: the distributional difference-in-differences assumption and the copula stability assumption.
The first assumption means that the distribution of the change in potential untreated outcomes does not depend on whether the individual belongs to the treatment or control group.
The second assumption means that the copula between the change in the untreated potential outcomes for the treated group and the initial untreated outcome for the treated group is stable over time.

The rest of the paper is organized as follows.
Section 2 introduces the model and assumptions and demonstrates that our model is nonparametrically identified.
In Section 3, we review the minimum distance estimator, as suggested by \cite{ishihara2020identification}.
We then propose a two-step estimator and show the uniform asymptotic properties of our estimator and the validity of the nonparametric bootstrap.
Section 4 contains the results of several Monte Carlo simulations and we illustrate our estimation method in two empirical settings in Section 5.
The paper concludes in Section 6.
The proofs of the theorems and auxiliary lemmas are provided in the Appendix.

\section{Assumptions and nonparametric identification}

We consider the following potential outcome framework.
The potential outcomes are indexed against the potential values $x$ of the treatment variable $X_{it} \in \mathbb{R}^{d_X}$ and denoted by $Y_{it}(x)$.
We cannot observe $Y_{it}(x)$ directly and the observed outcome is given by $Y_{it}\equiv Y_{it}(X_t)$.
Furthermore, we observe a vector of covariates, $Z_{it}$.
We define $\mathbf{Y}_i \equiv (Y_{i1}, \cdots , Y_{iT})'$, $\mathbf{X}_i \equiv (X_{i1}', \cdots , X_{iT}')'$, $\mathbf{Z}_i \equiv (Z_{i1}', \cdots, Z_{iT}')'$, $W_{it} \equiv (Y_{it},X_{it}',Z_{it}')'$ and $\mathbf{W}_i \equiv (W_{i1}',\cdots,W_{iT}')'$.
Let $\mathcal{X}_t$, $\mathcal{X}_{1:T}$, and $\mathcal{Z}$ denote the supports of $X_{it}$, $\mathbf{X}_i$, and $\mathbf{Z}_i$.

We assume that the potential outcome can be expressed as
\begin{eqnarray}
Y_{it}(x) &=& q_t\left( x,Z_{it},U_{it}(x) \right), \ \ \ \ i = 1, \cdots , n, \ \ t = 1, \cdots, T, \label{potential_outcome}
\end{eqnarray}
where $q_t(x,z_t,\tau)$ is the conditional $\tau$-th quantile of $Y_{it}(x)$ conditional on $\mathbf{Z}_i=(z_1, \cdots, z_T)'$ and $U_{it}(x)$ is uniformly distributed conditional on $\mathbf{Z}_i$.
This implies that the conditional distribution of $Y_{it}(x)$ conditional on $\mathbf{Z}_i$ depends only on $Z_{it}$.
When all covariates are time-invariant, this condition does not restrict the conditional distribution and expression (\ref{potential_outcome}) is known as the Skorohod representation.
Following \cite{chernozhukov2005iv}, we refer to $U_{it}(x)$ as the rank variable.
Additionally, we allow $q_t$ to depend on the time period in an arbitrary manner, similar to the work of \cite{ishihara2020identification}.

First, we impose the rank invariance assumption.

\begin{Assumption}
(i) For all $x$, we have $U_{it}(x) = U_{it}$.
(ii) For all $t$, $U_{it}$ is uniformly distributed on $[0,1]$ conditional on $\mathbf{Z}_i$.
(iii) For all $t$, $\mathbf{x}$, and $\mathbf{z}$, the support of $U_{it}|\mathbf{X}_i=\mathbf{x},\mathbf{Z}_i=\mathbf{z}$ is $[0,1]$. 
\end{Assumption}

Assumption 1 (i) is referred to as the rank invariance assumption.
For example, \cite{matzkin2003nonparametric}, \cite{chernozhukov2005iv}, \cite{d2015identification}, \cite{torgovitsky2015identification}, \cite{feng2020estimation}, and \cite{ishihara2020identification} also employ similar assumptions.
This model is restrictive because the potential outcomes $\{Y_{it}(x)\}_{x \in \mathcal{X}_t}$ are not truly multivariate and, are jointly degenerate.
As discussed in \cite{chernozhukov2005iv}, we can relax the rank invariance assumption to the rank similarity assumption. That is, $U_{it}(x)|\mathbf{X}_i,\mathbf{Z}_i \overset{d}{=} U_{it}(\tilde{x})|\mathbf{X}_i,\mathbf{Z}_i$ for all $x$ and $\tilde{x}$.

Under the rank invariance assumption, the observed outcome can be written as
\begin{eqnarray}
Y_{it} &=& q_t \left( X_{it}, Z_{it}, U_{it} \right),  \ \ \ \ i = 1, \cdots , n, \ \ t = 1, \cdots, T. \label{observed_outcome} 
\end{eqnarray} 
This is the nonseparable model with a scalar unobserved variable and model (\ref{observed_outcome}) is the same as the model proposed by \cite{ishihara2020identification} when there are no covariates.
If $U_{it}$ is independent of $X_{it}$ and $Z_{it}$, then this model is identical with the usual quantile regression model.
However, our model allows for correlation between $U_{it}$ and the treatment variable.
Hence, to achieve point identification, we require additional assumptions.

Next, we impose the rank stationarity assumption.

\begin{Assumption}
For all $t \neq s$, $\mathbf{x}$, and $\mathbf{z}$, the conditional distribution of $U_{it}|\mathbf{X}_i=\mathbf{x},\mathbf{Z}_i =\mathbf{z}$ is the same as that of $U_{is}|\mathbf{X}_i=\mathbf{x}, \mathbf{Z}_i = \mathbf{z}$.
\end{Assumption}

Assumption 2 implies that the rank variable is stationary across the time period.
In the literature on nonseparable panel data models, similar assumptions were employed by \cite{athey2006identification}, \cite{hoderlein2012nonparametric}, \cite{graham2012identification}, \cite{d2013nonlinear}, \cite{chernozhukov2013average}, \cite{chernozhukov2015nonparametric}, and \cite{ishihara2020identification}.
\cite{chernozhukov2013average} referred to Assumption 2 as ``time is randomly assigned'' or ``time is an instrument.''

Assumption 2 can be viewed as a quantile version of the identification condition of the following conventional linear panel data model:
$$
Y_{it} = X_{it}'\alpha + A_i + \epsilon_{it}, \ \ \ \ E[X_{is} \epsilon_{it}]=0 \ \text{for all $t$ and $s$,}
$$
where $A_i$ is a fixed effect and $\epsilon_{it}$ is a time-variant unobserved variable.
Let $\bar{E}[\cdot|\mathbf{X}_i]$ denote the linear projection on $\mathbf{X}_i$, as in \cite{chamberlain1982multivariate}.
\cite{chernozhukov2013average} show that the above equation is satisfied if and only if there is $\tilde{\epsilon}_{it}$ with
$$
Y_{it} = X_{it}'\alpha + \tilde{\epsilon}_{it}, \ \bar{E}[\tilde{\epsilon}_{it}|\mathbf{X}_i]=\bar{E}[\tilde{\epsilon}_{is}|\mathbf{X}_i] \ \text{for all $t$ and $s$.}
$$
In contrast, if the conditional quantile function is linear in $X_{it}$ and there are no covariates, then we can rewrite model (\ref{observed_outcome}) as 
\begin{equation}
Y_{it} = X_{it}'\alpha(\tau) + \epsilon_{it}(\tau), \nonumber
\end{equation}
where $\epsilon_{it}(\tau) = X_{it}'(\alpha(U_{it})-\alpha(\tau))$.
Then, under Assumption 2, $\epsilon_{it}(\tau)$ satisfies $F_{\epsilon_t(\tau)| \mathbf{X}}(0|\mathbf{x}) = F_{\epsilon_s(\tau)| \mathbf{X}}(0|\mathbf{x})$ for all $t\neq s$ and $\mathbf{x}$.
Hence, the rank stationarity assumption can be viewed as a quantile version of the identification condition of the conventional linear panel data model.

Under Assumptions 1 and 2 and additional assumptions in Appendix 1, we can show that $q_t(x,z_t,\tau)$ is nonparametrically identified.
The following proposition is essentially the same as Corollary 1 in \cite{ishihara2020identification}.

\begin{Proposition}
Under Assumptions 1, 2, A.1, and A.2, the conditional quantile function $q_t$ is point identified.
\end{Proposition}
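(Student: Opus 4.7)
The plan is to reduce the identification problem with covariates to the covariate-free panel setting studied in \cite{ishihara2020identification} by conditioning on $\mathbf{Z}_i = \mathbf{z}$ for each fixed $\mathbf{z} \in \mathcal{Z}$, and then invoking Corollary 1 of that paper pointwise in $\mathbf{z}$.

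First, I would fix an arbitrary value $\mathbf{z} \in \mathcal{Z}$ and view the model conditionally on $\mathbf{Z}_i = \mathbf{z}$. The observed outcome equation (\ref{observed_outcome}) then reads $Y_{it} = \tilde{q}_t(X_{it}, U_{it})$ with $\tilde{q}_t(x,\tau) := q_t(x, z_t, \tau)$, which is structurally identical to the covariate-free nonseparable panel model of \cite{ishihara2020identification}, but now indexed by the value of $\mathbf{z}$. Since $q_t(x, z_t, \cdot)$ is by definition the conditional $\tau$-th quantile of $Y_{it}(x)$ given $\mathbf{Z}_i = \mathbf{z}$, the function $\tilde{q}_t(x, \cdot)$ is strictly increasing, i.e., it is a well-defined conditional quantile function in the reduced model.

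Second, I would verify that all hypotheses of Corollary 1 of \cite{ishihara2020identification} hold in this conditional formulation. Rank invariance, Assumption 1(i), is preserved under conditioning on $\mathbf{Z}_i$. Assumption 1(ii) gives uniformity of $U_{it}$ on $[0,1]$ conditional on $\mathbf{Z}_i = \mathbf{z}$, which is exactly the Skorohod normalization required. Assumption 1(iii) yields the full conditional support of $U_{it}$ given $\mathbf{X}_i$ (and $\mathbf{Z}_i = \mathbf{z}$), and Assumption 2 gives the rank stationarity condition $U_{it}|\mathbf{X}_i=\mathbf{x}, \mathbf{Z}_i=\mathbf{z} \stackrel{d}{=} U_{is}|\mathbf{X}_i=\mathbf{x}, \mathbf{Z}_i=\mathbf{z}$ needed in the covariate-free argument. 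Assumptions A.1 and A.2 supply the remaining regularity and support conditions on $\mathbf{X}_i$ (continuity, monotonicity, and the "linking" support overlap across time periods) conditionally on $\mathbf{Z}_i = \mathbf{z}$.

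Third, I would invoke Corollary 1 of \cite{ishihara2020identification} applied to the conditional model to conclude that $\tilde{q}_t(x, \tau) = q_t(x, z_t, \tau)$ is identified on its support for this fixed $\mathbf{z}$. Since $\mathbf{z}$ was arbitrary, this gives pointwise identification of $q_t(x, z_t, \tau)$ for all $(x, z_t, \tau)$ in the relevant support, which is the statement of the proposition.

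The main obstacle is bookkeeping rather than substance: one has to check that Assumptions A.1 and A.2 in Appendix 1 are formulated conditionally on $\mathbf{Z}_i = \mathbf{z}$ in exactly the way the covariate-free Corollary 1 requires, so that the reduction is valid uniformly in $\mathbf{z} \in \mathcal{Z}$ rather than only for a subset of $\mathbf{z}$'s. Provided the Appendix assumptions are stated in this conditional form (as the cross-reference suggests), no further identification argument beyond that of \cite{ishihara2020identification} is needed.
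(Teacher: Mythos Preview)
Your proposal is correct and takes essentially the same approach as the paper: the paper's proof fixes $\mathbf{z}$ throughout and reproduces the argument of Corollary~1 of \cite{ishihara2020identification} in the conditional model, which is precisely the reduction you describe (and Assumption~A.2(ii) is indeed stated conditionally on $\mathbf{Z}_i=\mathbf{z}$, so the support condition carries over uniformly). The only difference is cosmetic: the paper writes out the inductive construction over the sets $\mathcal{S}_t^m(\bar{x})$ explicitly rather than citing the corollary.
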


From the proof of Proposition 1, for any $t \neq s$, we have
$$
F_{Y_t|\mathbf{X},\mathbf{Z}}\left( q_t(x_t,z_t,\tau) | \mathbf{x},\mathbf{z} \right) = F_{Y_s|\mathbf{X},\mathbf{Z}}\left( q_s(x_s,z_s,\tau) | \mathbf{x},\mathbf{z} \right),
$$
where $\mathbf{x} = (x_1, \cdots , x_T)'$ and $\mathbf{z} = (z_1, \cdots , z_T)'$.
\cite{ishihara2020identification} demonstrates that this condition provides point identification when the support of $\mathbf{X}_i$ satisfies Assumption A.2.
In Section 3, we propose an estimation method based on this condition.

\begin{Remark}
From the proof of Proposition 1, we can identify $q_t$ from the conditional distribution of $Y_{it}|\mathbf{X}_i$.
This implies that we do not need to observe $(Y_{i1}, \cdots , Y_{iT})$ simultaneously and $q_t$ is identified from repeated cross-sections.
Even when we do not have panel data, we can sometimes observe $(Y_{it},\mathbf{X}_i)$ from repeated cross-sections. 
For example, when $X_{it}$ is the minimum wage at time $t$ in the county where the unit $i$ lives, we can observe $(Y_{it},\mathbf{X}_i)$ from repeated cross-sections if we know the county where the unit $i$ lives.
Although the identification results do not require panel data, we need the existence of panel data in the estimation part.
Hence, in this study, we assume that panel data is obtained.
\end{Remark}

To illustrate our model, we consider the following two examples:

\begin{Example}[The CIC model]
In standard DID models, the support of $(X_{i1},X_{i2})$ becomes $\{(0,0),(0,1)\}$.
Then, $G_i \equiv \mathbf{1} \{ X_{i2} = 1 \}$ denotes an indicator for the treatment group.
In this setting, only individuals in group 1 in period 2 are treated.
Our model is then similar to the CIC model proposed by \cite{athey2006identification}.
Under the assumptions of Proposition 1, we can obtain
\begin{eqnarray}
F_{Y_2(0)|G=1}(y) &=& F_{Y_1|G=1}\left( F_{Y_1|G=0}^{-1}\left( F_{Y_2|G=0}(y) \right) \right), \label{Identification_AI1} \\
F_{Y_2(1)|G=0}(y) &=& F_{Y_1|G=0}\left( F_{Y_1|G=1}^{-1}\left( F_{Y_2|G=1}(y) \right) \right). \label{Identification_AI2}
\end{eqnarray}
\cite{athey2006identification} proved (\ref{Identification_AI1}) without the rank invariance assumption.
Hence, if the target parameter is the QTT, the rank invariance assumption is not required; whereas, if we focus on the QTE, the rank invariance assumption is required.

In this setting, Assumption 2 implies that $U_{i1}|G_i=g,\mathbf{Z}_i=\mathbf{z} \ \overset{d}{=} \ U_{i2}|G_i=g, \mathbf{Z}_i=\mathbf{z}$ for all $g$ and $\mathbf{z}$.
This allows the treatment and control groups to differ in terms of unobservable ability because it does not assume that $U_{it}|G_i=0,\mathbf{Z}_i=\mathbf{z} \ \overset{d}{=} \ U_{it}|G_i=1, \mathbf{Z}_i=\mathbf{z}$.
Hence, the rank stationarity assumption allows the treatment group to contain more high-ability people than the control group, implying that the conditional distribution of $Y_{i2}(0)|G_i = 1, \mathbf{Z}_i=\mathbf{z}$ may be different from that of $Y_{i2}(0)|G_i = 0, \mathbf{Z}_i=\mathbf{z}$.
\end{Example}

\begin{Example}[The TV effect on test scores]
Let $X_{it}$ denote the daily TV watching hours and $Y_{it}$ denote the test score of student $i$ in year $t$.
We assume that $Y_{it}$ can be written as
\[
Y_{it} = q_t(X_{it},Z_{it},U_{it}),
\]
where $Z_{it}$ is a vector of observed characteristics.
We assume that the unobserved factor $U_{it}$ can be decomposed into time-variant and time-invariant parts.
Let $U_{it} = U(A_i,\epsilon_{it})$, where $A_i$ and $\epsilon_{it}$ represent the students' ability and idiosyncratic shocks, respectively.
Here, Assumption 2 is satisfied when we have
\[
\epsilon_{it} | \mathbf{X}_i , \mathbf{Z}_i, A_i \ \overset{d}{=} \ \epsilon_{is} | \mathbf{X}_i , \mathbf{Z}_i, A_i.
\]
Hence, the rank stationarity assumption does not impose any restrictions on the dependence between unobserved ability and TV watching.

In this example, it is important to model nonlinear time effects; for example, if $q_t$ does not change over time, it follows from Assumption 2 that the conditional distribution of $Y_{it}$ is the same across time.
However, this is not plausible because the difficulty level of the test changes over time.
\end{Example}

\section{Estimation and inference}

In this section, we consider the estimation method of the QTE.
First, in Section 3.1, we review the minimum distance method proposed by \cite{ishihara2020identification} and show that the minimum distance estimator does not work when there are many covariates.
Second, in Section 3.2, we propose a two-step estimator based on the quantile regression and minimum distance methods and show that our estimator is computationally convenient.
Finally, in Sections 3.3 and 3.4, we demonstrate the consistency and uniform asymptotic normality of our estimator.

\subsection{The minimum distance estimator}

In this section, for simplicity, we assume that $T=2$ and there are no covariates.
\cite{ishihara2020identification} considers the following parametric model:
\[
q_t(x_t,\tau) = g_t(x_t,\tau ; \theta_0).
\]
The structural functions are parameterized by $\theta \in \Theta \subset \mathbb{R}^{d_{\theta}}$, where $\theta_0 \in \Theta$ is the true parameter.
Then, from Assumption 2, we obtain
\begin{eqnarray}
E\left[ \mathbf{1} \left\{ Y_{i1} \leq g_1(X_{i1},\tau ; \theta_0) \right\} | \mathbf{X}_i \right] = E\left[ \mathbf{1} \left\{ Y_{i2} \leq g_2(X_{i2},\tau ; \theta_0) \right\} | \mathbf{X}_i \right]. \label{conditional_moment_condition}
\end{eqnarray}
Thus, \cite{ishihara2020identification} proposes a minimum distance estimator based on (\ref{conditional_moment_condition}).

Let $\| \cdot \|_{\mu}$ denote the $L_2$-norm with respect to a probability measure $\mu$ with support $[0,1] \times \mathcal{V}$.
The minimum distance estimator $\hat{\theta}$ is then obtained from the following optimization:
\begin{eqnarray}
\hat{\theta} &=& \min_{\theta} \| \hat{D}_{\theta} \|_{\mu}, \label{MD_estimator} \\
\hat{D}_{\theta}(\tau,v) &\equiv & \frac{1}{n} \sum_{i=1}^n \left( \mathbf{1} \left\{ Y_{i1} \leq g_1(X_{i1},\tau ; \theta_0) \right\} - \mathbf{1} \left\{ Y_{i2} \leq g_2(X_{i2},\tau ; \theta) \right\} \right) \omega(\mathbf{X}_i,v), \nonumber
\end{eqnarray}
where $\omega(\mathbf{x},v)$ is a weight function.
Since $\hat{D}_{\theta}(\tau,v)$ is not continuous in $\theta$, the minimum distance estimator requires minimizing the discontinuous objective function over $\theta \in \Theta$.
If the dimension of $\theta$ is large, the optimization (\ref{MD_estimator}) is computationally demanding.
Therefore, adding many covariates into the model makes it difficult to compute the minimum distance estimator.

\subsection{A two-step estimator}

For estimation, we focus on the following linear-in-parameter model:
\begin{equation}
q_t(x_t,z_t,\tau) = x_t'\alpha(\tau) + z_t'\beta_t(\tau). \label{Model}
\end{equation}
The observed outcome is then written as
$$
Y_{it} = X_{it}'\alpha(U_{it})+Z_{it}'\beta_t(U_{it}), \ \ \ U_{it}|\mathbf{Z}_i \sim U(0,1),
$$
where $Z_{it}$ contains a constant term.
Hereafter, we set $\mathbf{Z}_i \in \mathbb{R}^{d_z}$ as a vector of all the variables of $Z_{i1}, \cdots, Z_{iT}$.
For example, if all covariates are time-invariant, we have $Z_{i1} = \cdots = Z_{iT} = \mathbf{Z}_i$.
We assume that $\mathcal{X}_{t}$ and $\mathcal{Z}$ are bounded.
In this model, we have $\partial q_t(x,z,\tau) / \partial x = \alpha(\tau)$; hence, our target parameter is $\alpha(\tau)$.
Because $\beta_t(\tau)$ depends on the time period, this model captures nonlinear time effects.
This model is similar to the IV quantile regression model proposed by \cite{chernozhukov2006instrumental}.

Using Proposition 1, we can identify $\alpha(\tau)$ and $\beta_t(\tau)$ using the following conditions:
\begin{eqnarray}
F_{Y_t|\mathbf{X},\mathbf{Z}}(x_t'\alpha(\tau)+z_t'\beta_t(\tau)|\mathbf{x},\mathbf{z}) &=& F_{Y_s|\mathbf{X},\mathbf{Z}}(x_s'\alpha(\tau)+z_s'\beta_s(\tau)|\mathbf{x},\mathbf{z}) \label{moment_1} \\
F_{Y_t-X_t'\alpha(\tau)|\mathbf{Z}}(z_t'\beta_t(\tau)|\mathbf{z}) &=& \tau, \label{moment_2}
\end{eqnarray}
where $\mathbf{x} \equiv (x_1, \cdots ,x_T)'$ and $\mathbf{z} \equiv (z_1, \cdots ,z_T)'$, respectively.
Similar to (\ref{MD_estimator}), we can construct a minimum distance estimator using (\ref{moment_1}) and (\ref{moment_2}).
However, if the dimensionality of covariates is high, the minimum distance approach cannot be directly applied because the minimum distance estimator is computationally demanding.

We propose the following two-step estimator based on the quantile regression and minimum distance methods.
Fix $\tau \in (0,1)$.
In the first step, we define $\tilde{\beta}_t(a,\tau)$ as
\begin{eqnarray}
\tilde{\beta}_t(a,\tau) & \equiv & \text{arg} \min_{b_t \in \mathcal{B}_t} \frac{1}{n} \sum_{i=1}^n \rho_{\tau}\left( Y_{it} - X_{it}'a - Z_{it}'b_t \right) \nonumber \\
&=& \text{arg} \min_{b_t \in \mathcal{B}_t} \frac{1}{n} \sum_{i=1}^n R_{\tau}(W_{it};a,b_t),  \label{first step estimator}
\end{eqnarray}
where $\rho_{\tau}(u) \equiv (\tau - \mathbf{1}\{u<0\})u$, $\mathcal{B}_t$ is the parameter space of $\beta_t(\tau)$, and $R_{\tau}(W_{it};a,b_t) \equiv \rho_{\tau}\left( Y_{it} - X_{it}'a - Z_{it}'b_t \right)$.
This is an ordinary quantile regression of $Y_{it}-X_{it}'a$ on $Z_{it}$.
Then, from (\ref{moment_2}), $\tilde{\beta}_t \left( \alpha(\tau),\tau \right)$ becomes a consistent estimator of $\beta_t(\tau)$.

In the second step, we construct an estimator of $\alpha(\tau)$ using the minimum distance approach.
We define 
\begin{eqnarray}
g_t(\mathbf{W}_i;a,b,v) &\equiv & \left( \mathbf{1}\{Y_{it} \leq X_{it}'a + Z_{it}'b_t\} - \frac{1}{T} \sum_{s=1}^T \mathbf{1}\{Y_{is} \leq X_{is}'a + Z_{is}'b_s\} \right) \omega(\mathbf{X}_i,\mathbf{Z}_i,v), \nonumber \\
\omega(\mathbf{X}_i,\mathbf{Z}_i,v) &\equiv & \exp\left(v_{\mathbf{x}}'\tilde{\mathbf{X}}_i + v_{\mathbf{z}}'\tilde{\mathbf{Z}}_{i} \right), \nonumber
\end{eqnarray}
where $b = (b_1', \cdots , b_T')'$, $v = (v_{\mathbf{x}}', v_{\mathbf{z}}')'$, and $\tilde{\mathbf{X}}_i$ and $\tilde{\mathbf{Z}}_{i}$ are standardized versions of $\mathbf{X}_i$ and $\mathbf{Z}_i$, where each component has a mean of 0 and a standard deviation of 1.
It follows from (\ref{moment_1}) that we have
\begin{equation}
E\left[ g_t(\mathbf{W}_i;\alpha(\tau),\beta(\tau),v) \right] = 0 \ \ \text{for all $t$ and $v$,} \label{moment_second_step}
\end{equation}
where $\beta(\tau) \equiv (\beta_1(\tau)', \cdots, \beta_T(\tau)')'$.
As shown in \cite{stinchcombe1998consistent}, if (\ref{moment_second_step}) holds for all $t$ and $v \in \mathcal{V} \equiv [-0.5,0.5]^{d_{X}\cdot T + d_{Z}}$, the conditional moment condition (\ref{moment_1}) is satisfied.
Let $\|\cdot\|_{L_2}$ be the $L_2$-norm over a compact set $\mathcal{V}$; that is, $\|f(v)\|_{L_2}^2 = \int_{\mathcal{V}} f(v)^2 dv$.
Using this norm, we obtain the following estimator of $\alpha(\tau)$:
\begin{eqnarray}
\hat{\alpha}(\tau) &\equiv & \text{arg} \min_{a \in \mathcal{A}} \frac{1}{T} \sum_{t=1}^T \left\| \frac{1}{n} \sum_{i=1}^n g_t(\mathbf{W}_i;a,\tilde{\beta}(a,\tau),v) \right\|_{L_2}^2 \nonumber \\
& = & \text{arg} \min_{a \in \mathcal{A}} \frac{1}{T} \sum_{t=1}^T \left\| \hat{D}^t_n(v;a,\tilde{\beta}(a,\tau)) \right\|_{L_2}^2, \label{Estimator alpha}
\end{eqnarray}
where $\tilde{\beta}(a,\tau) \equiv (\tilde{\beta}_1(a,\tau)', \cdots , \tilde{\beta}_T(a,\tau))'$, $\hat{D}^t_n(v;a,b)\equiv \frac{1}{n} \sum_{i=1}^n g_t(\mathbf{W}_i;a,b,v)$, and $\mathcal{A}$ is the parameter space of $\alpha(\tau)$.
Finally, we estimate $\beta_t(\tau)$ by $\hat{\beta}_t(\tau) \equiv \tilde{\beta}_t(\hat{\alpha}_t(\tau),\tau)$.

We briefly explain our two-step estimation method.
As discussed above, because the covariates are independent of $U_{it}$, $\tilde{\beta}_t \left( \alpha(\tau),\tau \right)$ becomes a consistent estimator of $\beta_t(\tau)$.
Using this result, under regularity conditions, we obtain
\[
\hat{D}_n^t \left( v ; \alpha(\tau), \tilde{\beta}(\alpha(\tau),\tau) \right) \rightarrow_p E[g_t(\mathbf{W}_i; \alpha(\tau), \beta(\tau),v)] = 0,
\]
which implies that the objective function of (\ref{Estimator alpha}) converges to zero for $a = \alpha(\tau)$.
Hence, we expect to obtain a consistent estimator of $\alpha(\tau)$ by minimizing (\ref{Estimator alpha}).

In practice, we can implement this estimation procedure as follows:
\begin{enumerate}
\item For fixed $\tau$, we run the ordinary $\tau$-th quantile regression of $Y_{it}-X_{it}'a$ on $Z_{it}$ and calculate $\tilde{\beta}_t(a,\tau)$ as a function of $a$.
\item We approximate $\| \hat{D}^t_n(v;a,\tilde{\beta}(a,\tau)) \|_{L_2}^2$ using a numerical integration method; that is, we approximate the objective function as $\frac{1}{J} \sum_{j = 1}^J \hat{D}^t_n(v_j;a,\tilde{\beta}(a,\tau))^2$ for an appropriate sequence $\{v_j\}_{j=1}^J$.
\item Minimize $\frac{1}{T} \sum_{t=1}^T \| \hat{D}^t_n(v;a,\tilde{\beta}(a,\tau)) \|_{L_2}^2$ over $a \in \mathcal{A}$ and obtain $\hat{\alpha}(\tau)$.
The estimate $\hat{\beta}_t(\tau)$ is given by $\tilde{\beta}_t(\hat{\alpha}(\tau),\tau)$.
\end{enumerate}

Our estimator is similar to that proposed by \cite{chernozhukov2006instrumental}.
They consider the IV quantile regression for heterogeneous treatment effect models and simultaneous equation models with nonadditive errors.
Similarly, our estimator is attractive from a computational point of view.
As ordinary quantile regressions are obtained by convex optimization, our first step estimation (\ref{first step estimator}) is computationally convenient.
Our second step estimation (\ref{Estimator alpha}) requires non-convex optimization; hence, it seems to be computationally demanding.
However, we can obtain (\ref{Estimator alpha}) by optimizing the objective function over the $\alpha$ parameter (typically one-dimensional).
This fact makes our estimator computationally convenient.

\begin{Remark}
Although we assume that $\alpha(\tau)$ does not depend on the time period, we can relax this assumption.
Even if the QTE parameter is $\alpha_t(\tau)$, we can estimate $\alpha_t(\tau)$ in a similar manner.
However, in such a case, our second step estimation requires non-convex optimization with respect to $a_1, \cdots, a_T$.
Hence, if $T$ is large, this estimation method becomes computationally demanding.
\end{Remark}

\subsection{Identification}

In this section, we show that $\alpha(\tau)$ and $\beta(\tau) = (\beta_1(\tau)', \cdots , \beta_T(\tau)')'$ uniquely solve the limit problems.
We define 
\begin{equation}
\beta_t(a,\tau) \equiv \text{arg} \min_{b_t \in \mathcal{B}_t} E[ R_{\tau}(W_{it};a,b_t)], \label{beta limit problem}
\end{equation}
and
\begin{equation}
\alpha^*(\tau) \in \text{arg} \min_{a \in \mathcal{A}} \frac{1}{T} \sum_{t=1}^T \left\|D^t(v;a,\beta(a,\tau)) \right\|_{L_2}^2, \nonumber
\end{equation}
where $\beta(a,\tau) \equiv (\beta_1(a,\tau)', \cdots , \beta_T(a,\tau)')'$ and $D^t(v;a,b) \equiv E[g_t(\mathbf{W}_{i};a,b,v)]$.
Hence, to prove consistency, we need to show that $\alpha^*(\tau)$ is unique and $\alpha^*(\tau) = \alpha(\tau)$.

We define $e_t(a,\tau,\mathbf{z}) \equiv P ( Y_{it} \leq X_{it}'a + Z_{it}'\beta_t(a,\tau) | \mathbf{Z}_i=\mathbf{z} )$ and impose the following assumptions.

\begin{Assumption}
(i) A matrix $E[Z_{it}Z_{it}']$ has full rank for all $t$, and $E[X_{it}X_{it}']$ has full rank for some $t$.
(ii) For all $t$, $E[|Y_{it}|]$ is finite.
(iii) For all $t$, $a$, and $\tau$, $\beta_t(a,\tau)$ uniquely solves (\ref{beta limit problem}).
\end{Assumption}

\begin{Assumption}
For all $t$ and $a \in \mathcal{A}$, $e_t(a,\tau,\mathbf{z}) = \tau$ for some $\mathbf{z} \in \mathcal{Z}$.
\end{Assumption}

When the support of $(X_{i1},X_{i2})$ is $\{(0,0),(0,1)\}$, we have $E[X_{i1}^2] = 0$ but $E[X_{i2}^2]$ is positive.
Hence, Assumption 3 (i) holds in standard DID settings.
Assumption 4 is a technical condition that is satisfied in many situations.
Using the proof of Theorem 2 in \cite{angrist2006quantile}, it follows from the first-order condition of (\ref{first step estimator}) that we have $E\left[ \left( \mathbf{1}\{Y_{it} \leq X_{it}'a + Z_{it}'\beta_t(a,\tau)\} - \tau \right) Z_{it} \right]=0$, which implies that $E\left[ \left( e_t(a,\tau,\mathbf{Z}_i) - \tau \right) Z_{it} \right]=0$.
Hence, we have $E[e_t(a,\tau,\mathbf{Z}_{i})]=\tau$ because $Z_{it}$ contains a constant.
When $\mathbf{Z}_i$ has continuous covariates and $e_t(a,\tau,\mathbf{z})$ is continuous in $\mathbf{z}$, $e_t(a,\tau,\mathbf{z}) = \tau$ holds for some $\mathbf{z} \in \mathcal{Z}$.
Even when all covariates are discrete, if $Z_{it}$ is time invariant and the model is saturated, that is, the cardinality of $\mathcal{Z}$ is equal to the dimension of $\beta_t(a,\tau)$, then we have $e_t(a,\tau,\mathbf{z})= \tau$ for all $\mathbf{z} \in \mathcal{Z}$.

\begin{Theorem}
Suppose that (\ref{Model}) and Assumptions 1--4, A.1, and A.2 hold.
Then, for all $\tau \in (0,1)$, $\alpha(\tau)$ and $\beta(\tau)$ uniquely solve the limit problems.
That is, we have $\beta_t(\alpha(\tau),\tau) = \beta_t(\tau)$ and
\begin{equation}
\frac{1}{T} \sum_{t=1}^T \left\| D^t(v;a,\beta(a,\tau)) \right\|_{L_2}^2=0,  \ \ a \in \mathcal{A} \ \ \Leftrightarrow \ \  a = \alpha(\tau). \label{Identification Estimator}
\end{equation}
\end{Theorem}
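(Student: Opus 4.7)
The plan is to prove the two claims separately. Both hinge on the observation that under (\ref{Model}) with Assumption 1, the map $u \mapsto x_t'\alpha(u) + z_t'\beta_t(u)$ is strictly increasing (implicit in the quantile representation), so that almost surely
$$\mathbf{1}\{Y_{it} \leq X_{it}'\alpha(\tau) + Z_{it}'\beta_t(\tau)\} = \mathbf{1}\{U_{it} \leq \tau\}.$$
For the first claim $\beta_t(\alpha(\tau),\tau) = \beta_t(\tau)$, I would verify the first-order condition for the population problem (\ref{beta limit problem}) at $(a, b_t) = (\alpha(\tau), \beta_t(\tau))$: by this identity together with Assumption 1(ii) and iterated expectations, $E[(\mathbf{1}\{Y_{it} \leq X_{it}'\alpha(\tau) + Z_{it}'\beta_t(\tau)\} - \tau) Z_{it}] = E[Z_{it}(E[\mathbf{1}\{U_{it} \leq \tau\}|\mathbf{Z}_i] - \tau)] = 0$, and uniqueness (Assumption 3(iii)) then closes the argument. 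For the easy direction $\Leftarrow$ of (\ref{Identification Estimator}), I substitute $a = \alpha(\tau)$, reuse the indicator identity, and apply Assumption 2, which makes $E[\mathbf{1}\{U_{it} \leq \tau\}|\mathbf{X}_i, \mathbf{Z}_i]$ constant in $t$; hence $D^t(v; \alpha(\tau), \beta(\tau)) = 0$ for all $t, v$.

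The nontrivial direction is $\Rightarrow$. Assume the objective vanishes at $a$, so $D^t(v; a, \beta(a,\tau)) = 0$ for every $v \in \mathcal{V}$ and every $t$. By the completeness property of the exponential weight family (the Stinchcombe-White result cited in the discussion of (\ref{moment_second_step})), this upgrades to the pointwise conditional moment
$$F_{Y_t|\mathbf{X},\mathbf{Z}}(x_t'a + z_t'\beta_t(a,\tau)|\mathbf{x},\mathbf{z}) = F_{Y_s|\mathbf{X},\mathbf{Z}}(x_s'a + z_s'\beta_s(a,\tau)|\mathbf{x},\mathbf{z})$$
for almost every $(\mathbf{x}, \mathbf{z})$ and every $s, t$. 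This is precisely the functional equation that drives Proposition 1. Passing through the model representation, $F_{Y_t|\mathbf{X},\mathbf{Z}}(y|\mathbf{x},\mathbf{z}) = F_{U_t|\mathbf{X},\mathbf{Z}}(q_t^{-1}(y|x_t,z_t)|\mathbf{x},\mathbf{z})$, which by Assumptions 2 and 1(iii) is common in $t$ and strictly increasing, and inverting gives $q_t^{-1}(x_t'a + z_t'\beta_t(a,\tau)|x_t,z_t) = q_s^{-1}(x_s'a + z_s'\beta_s(a,\tau)|x_s,z_s)$ on the joint support. Appealing to Assumption A.2 in the spirit of the identification proof for Proposition 1, this common value can only depend on $(a,\tau)$, so there exists $\tilde{\tau} = \tilde{\tau}(a,\tau)$ with $x_t'a + z_t'\beta_t(a,\tau) = x_t'\alpha(\tilde{\tau}) + z_t'\beta_t(\tilde{\tau})$ for every $t$ and every $(x_t, z_t)$ on the support of $(X_{it}, Z_{it})$.

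Equating coefficients via Assumption 3(i) — the index $t^*$ at which $E[X_{it^*}X_{it^*}']$ has full rank pins down $a = \alpha(\tilde{\tau})$, while full rank of each $E[Z_{it}Z_{it}']$ pins down $\beta_t(a,\tau) = \beta_t(\tilde{\tau})$ — reduces the problem to showing $\tilde{\tau} = \tau$. Here I invoke Assumption 4: choose $\mathbf{z}^* \in \mathcal{Z}$ with $e_t(a, \tau, \mathbf{z}^*) = \tau$; substituting $a = \alpha(\tilde{\tau})$ and $\beta_t(a,\tau) = \beta_t(\tilde{\tau})$ and using the indicator identity gives
$$e_t(a, \tau, \mathbf{z}^*) = P(U_{it} \leq \tilde{\tau} | \mathbf{Z}_i = \mathbf{z}^*) = \tilde{\tau}$$
by Assumption 1(ii), so $\tilde{\tau} = \tau$ and hence $a = \alpha(\tau)$. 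The main obstacle I expect is the middle paragraph: passing from the Stinchcombe-White conditional moment to a single global quantile level $\tilde{\tau}$ independent of $(\mathbf{x}, \mathbf{z})$ requires carefully leveraging the support condition in Assumption A.2, and the ensuing identification of the candidate quantile function as $x_t'\alpha(\tilde{\tau}) + z_t'\beta_t(\tilde{\tau})$ is where the real content of Proposition 1 is used.
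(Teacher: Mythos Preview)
Your approach is essentially the paper's, and the first claim and the $\Leftarrow$ direction are fine. The gap is precisely where you suspect it, but the resolution is different from what you anticipate. You assert that the common value $\tilde{\tau}$ ``can only depend on $(a,\tau)$'' by appealing to Assumption A.2. The chain argument of Proposition~1 (via A.2) does give you constancy of $\tilde{\tau}$ in $\mathbf{x}$, but A.2 is a condition on the support of $\mathbf{X}$ and says nothing about $\mathcal{Z}$. In the leading case where the covariates are time-invariant, $Z_{i1}=\cdots=Z_{iT}$, the relation $g_t(z_t)=g_s(z_s)$ that you would extract imposes no restriction at all on the $\mathbf{z}$-dependence of $\tilde{\tau}$. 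So the claim that $\tilde{\tau}$ is globally constant is not justified by the stated assumptions.

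The good news is that constancy in $\mathbf{z}$ is not needed, and dropping it brings your argument exactly in line with the paper. Allow $\tilde{\tau}=\tilde{\tau}(\mathbf{z})$; your coefficient-matching step still yields $a=\alpha(\tilde{\tau}(\mathbf{z}))$ for every $\mathbf{z}\in\mathcal{Z}$ (since $a$ does not depend on $\mathbf{z}$). Moreover, because $q_t^{-1}(x_t'a+z_t'\beta_t(a,\tau)\mid x_t,z_t)=\tilde{\tau}(\mathbf{z})$ for all $x_t\in\mathcal{X}_t$, a direct computation gives $e_t(a,\tau,\mathbf{z})=P(U_{it}\le\tilde{\tau}(\mathbf{z})\mid\mathbf{Z}_i=\mathbf{z})=\tilde{\tau}(\mathbf{z})$ by Assumption~1(ii). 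Assumption~4 then supplies $\mathbf{z}^*$ with $\tilde{\tau}(\mathbf{z}^*)=\tau$, whence $a=\alpha(\tau)$. This is precisely what the paper does: it works with $e_t(a^*,\tau,\mathbf{z})$ throughout, never claims it is constant in $\mathbf{z}$, and reaches $a^*=\alpha(e_t(a^*,\tau,\mathbf{z}))$ for all $\mathbf{z}$ by applying the map $G^t_{x_t|\mathbf{z}}$ from the proof of Proposition~1 --- which is your $q_t^{-1}$ step after integrating out $X_t$.
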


Theorem 1 implies that $\alpha(\tau)$ minimizes $\frac{1}{T} \sum_{t=1}^T \left\|D^t(v;a,\beta(a,\tau)) \right\|_{L_2}^2$.
Hence, if the objective function of (\ref{Estimator alpha}) converges to $\frac{1}{T} \sum_{t=1}^T \left\|D^t(v;a,\beta(a,\tau)) \right\|_{L_2}^2$ uniformly, then we obtain the consistency of $\hat{\alpha}(\tau)$.

\subsection{Asymptotic distribution}

In this section, we show the uniform asymptotic normality of our estimator and prove the validity of the nonparametric bootstrap.
Our asymptotic result also implies that our estimator is consistent.

Let $\mathcal{T}$ be a closed subset of $[\epsilon, 1- \epsilon]$ for $\epsilon > 0$.
 In addition, we define $J_t^b(a,\tau) \equiv E\left[ f_{Y_t-X_t'a|Z_t}(Z_{it}'\beta_t(a,\tau)|Z_{it}) Z_{it} Z_{it}' \right]$ and $J_t^b(\tau) \equiv J_t^b(\alpha(\tau),\tau)$.
The following assumption is sufficient for the consistency of $\hat{\alpha}(\tau)$ and $\hat{\beta}_t(\tau)$.

\begin{Assumption}
(i) The data $\{\mathbf{W}_i\}_{i=1}^n$ are independent and identically distributed.
(ii) For all $\tau \in \mathcal{T}$, $\alpha(\tau)$ and $\beta_{t}(\tau)$ are contained in the compact parameter spaces $\mathcal{A}$ and $\mathcal{B}_t$, respectively.
(iii) For all $t$, $E[|Y_{it}|]$ is finite, and $\mathcal{X}_{1:T}$ and $\mathcal{Z}$ are bounded.
(iv) For all $a$, $t$, and $\tau$, $\beta_t(a,\tau)$ uniquely solves (\ref{beta limit problem}).
(v) For all $a$ and $t$, the conditional density $f_{Y_t-X_t'a|Z_t}(y|z_t)$ exists and $f_{Y_t-X_t'a|Z_t}(y|z_t)$ is continuous in $y$ and bounded above.
(vi) For all $t$ and $\tau$, $J_t^b(a,\tau)$ has full rank for all $a \in \mathcal{A}$ and $\tau \in \mathcal{T}$, and $J_t^b(a,\tau)$ is continuous in $a$ at $\alpha(\tau)$.
(vii) For all $t$, $F_{Y_t|\mathbf{X},\mathbf{Z}}(y|\mathbf{x},\mathbf{z})$ is uniformly continuous in $y$.
(viii) For all $t$, $\beta_t(a,\tau)$ is continuously differentiable in $a$ and its derivative $B_t(a,\tau) \equiv \frac{\partial}{\partial a'} \beta_t(a,\tau)$ is bounded.
\end{Assumption}

Condition (iii) imposes that $X_{it}$ and $Z_{it}$ are bounded.
If $X_{it}$ and $Z_{it}$ are unbounded, then for $q_t(x_t,z_t,\tau)$ to be monotonically increasing in $\tau$, $\alpha(\tau) = \alpha$ and $\beta_t(\tau) = \beta_t$ must hold.
Hence, we assume the boundedness of $\mathcal{X}_{1:T}$ and $\mathcal{Z}$.
Condition (v) means that there exists a continuous density $f_{Y_t-X_t'a|Z_t}(y|z_t)$ for all $a \in \mathcal{A}$.
Because we have
\begin{eqnarray*}
F_{Y_t-X_t'a|Z_t}(y|z_t) &=& \int F_{Y_t|X_t,Z_t}(y + x_t'a |x_t, z_t) dF_{X_t}(x_t),
\end{eqnarray*}
we obtain $f_{Y_t-X_t'a|Z_t}(y|z_t) = \int f_{Y_t|X_t,Z_t}(y + x_t'a |x_t, z_t) dF_{X_t}(x_t)$ if $f_{Y_t|X_t,Z_t}(y|x_t,z_t)$ is bounded.
Hence, condition (v) holds if $f_{Y_t|X_t,Z_t}(y|x_t,z_t)$ is bounded and continuous in $y$.
In addition, this implies that $J_t^b(a,\tau)$ is continuous in $a$ if $\beta_t(a,\tau)$ is continuous in $a$.

We define
\begin{eqnarray}
\gamma_1^t(v;a,\tau) &\equiv & E \left[ f_{Y_t|\mathbf{X},\mathbf{Z}}(X_{it}'a + Z_{it}'\beta_t(a,\tau)|\mathbf{X}_i,\mathbf{Z}_i) \omega(\mathbf{X}_i,\mathbf{Z}_i,v) (X_{it} + B_t(a,\tau)'Z_{it}) \right] \nonumber \\
\Gamma_1^t(v;a,\tau) &\equiv & \gamma_1^t(v;a,\tau) - \frac{1}{T} \sum_{s=1}^T \gamma_1^s(v;a,\tau) \nonumber \\
\gamma_2^{t,s}(v;a,b) &\equiv & \begin{cases}
    \frac{T-1}{T} E \left[ f_{Y_t|\mathbf{X},\mathbf{Z}}(X_{it}'a + Z_{it}'b_t|\mathbf{X}_i,\mathbf{Z}_i) \omega(\mathbf{X}_i,\mathbf{Z}_i,v) Z_{it} \right], & \text{if $s=t$} \\
    -\frac{1}{T} E \left[ f_{Y_s|\mathbf{X},\mathbf{Z}}(X_{is}'a + Z_{is}'b_s|\mathbf{X}_i,\mathbf{Z}_i) \omega(\mathbf{X}_i,\mathbf{Z}_i,v) Z_{is} \right], & \text{if $s \neq t$}
  \end{cases}, \nonumber \\
\Gamma_2^t(v;a,b) &\equiv & \left( \gamma_2^{t,1}(v;a,b)' , \cdots , \gamma_2^{t,T}(v;a,b)' \right)', \nonumber
\end{eqnarray}
$\Gamma_1^t(v;\tau) \equiv \Gamma_1^t(v;\alpha(\tau),\tau)$, and $\Gamma_2^t(v;\tau) \equiv \Gamma_2^t(v;\alpha(\tau),\beta(\tau))$.
Then, the following assumption is required to derive the asymptotic distribution of the estimator.

\begin{Assumption}
(i) For all $a$, $t$, and $\tau$, $\alpha(\tau)$ and $\beta_t(a,\tau)$ are the inner points of $\mathcal{A}$ and $\mathcal{B}_t$, respectively.
(ii) A family of functions $\{y \mapsto f_{Y_t-X_t'a|Z_t}(y|z): a \in \mathcal{A}\}$ is equicontinuous for all $z \in \mathcal{Z}$.
(iii) There exists the conditional density $f_{Y_t|\mathbf{X},\mathbf{Z}}(y|\mathbf{x},\mathbf{z})$ and $f_{Y_t|\mathbf{X},\mathbf{Z}}(y|\mathbf{x},\mathbf{z})$ is uniformly continuous in $y$ and bounded above.
(vi) For all $t$, a family of functions $\{a \mapsto B_t(a,\tau): \tau \in \mathcal{T}\}$ is equicontinuous.
(v) There exists $c>0$ such that $T^{-1} \sum_{t=1}^T\|\Gamma_1^t(v;\tau)'a\|_{L_2}^2 \geq c^2 \|a\|^2$ for all $a \in \mathbb{R}^{d_X}$ and $\tau \in \mathcal{T}$.
\end{Assumption}

To derive the asymptotic distribution of $\hat{\alpha}(\tau)$, we need to show that $\sqrt{n} (\tilde{\beta}_t(a,\tau) - \beta_t(a, \tau) )$ converges in distribution uniformly in $a \in \mathcal{A}$.
We use condition (ii) to show this result.
Similarly, we need condition (iv) to show a uniform approximation of $\hat{\alpha}(\cdot)$.
Condition (v) indicates that the rank condition holds uniformly in $\tau \in \mathcal{T}$.

\begin{Theorem}
Suppose that (\ref{Model}) and (\ref{Identification Estimator}) hold for all $\tau \in \mathcal{T}$.
Under Assumptions 5 and 6, uniformly in $\tau \in \mathcal{T}$ we obtain
\begin{eqnarray}
\sqrt{n} (\hat{\alpha}(\tau)-\alpha(\tau)) &=& - \frac{1}{\sqrt{n}} \sum_{i=1}^n \mathbb{A}(\mathbf{W}_i; \tau) + o_p(1) \label{Asymptotic Normality}
\end{eqnarray}
and
\begin{equation}
\sqrt{n}(\hat{\beta}_t(\tau)-\beta_t(\tau)) = - \frac{1}{\sqrt{n}} \sum_{i=1}^n \left\{ J_t^b(\tau)^{-1} r_{\tau}(W_{it};\alpha(\tau),\beta_t(\tau)) - \mathbb{A}(\mathbf{W}_i; \tau) \right\} + o_p(1), \label{Asymptotic Normality first step estimator}
\end{equation}
where
\begin{eqnarray}
\mathbb{A}(\mathbf{W}_i; \tau) &\equiv & \Delta_1(\tau)^{-1} \left\{ \xi(\mathbf{W}_i;\tau) - \Delta_{12}(\tau) l(\mathbf{W}_i;\tau) \right\}, \nonumber \\
\xi(\mathbf{W}_i;\tau) &\equiv & \frac{1}{T} \sum_{t=1}^T \left[ \int_{\mathcal{V}} \Gamma_1^t(v;\tau) \omega(\mathbf{X}_i,\mathbf{Z}_i,v) dv \right] \mathbf{1}\{Y_{it} \leq X_{it}'\alpha(\tau)+Z_{it}'\beta_t(\tau)\}, \nonumber \\
l(\mathbf{W}_i;\tau) &\equiv & \left( r_{\tau}(W_{i1};\alpha(\tau),\beta_t(\tau))' J_1^b(\tau)^{-1}, \cdots,  r_{\tau}(W_{iT};\alpha(\tau),\beta_t(\tau))'J_T^b(\tau)^{-1} \right)', \nonumber \\
r_{\tau}(W_{it};a,b_t) &\equiv & \left( \tau - \mathbf{1}\{Y_{it} \leq X_{it}'a + Z_{it}'b_t\} \right)Z_{it}, \nonumber \\
\Delta_1(\tau) &\equiv & \frac{1}{T} \sum_{t=1}^T \int_{\mathcal{V}} \Gamma_1^t(v;\tau)\Gamma_1^t(v;\tau)' dv, \nonumber \\
\Delta_{12}(\tau) &\equiv & \frac{1}{T} \sum_{t=1}^T \int_{\mathcal{V}} \Gamma_1^t(v;\tau)\Gamma_2^t(v;\tau)' dv. \nonumber
\end{eqnarray}
\end{Theorem}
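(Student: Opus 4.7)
The plan follows the standard template for two-step semiparametric M-estimators with a non-smooth criterion: (i) obtain a uniform Bahadur representation for the first-step QR estimator $\tilde{\beta}_t(a,\tau)$ over $(a,\tau)\in\mathcal{A}\times\mathcal{T}$; (ii) prove uniform consistency of $\hat{\alpha}(\tau)$ by combining Theorem 1 with a uniform law of large numbers; (iii) linearize the second-step criterion around $(\alpha(\tau),\beta(\tau))$ by splitting it into a deterministic Taylor part plus an empirical-process remainder; and (iv) invert the resulting first-order condition to read off the influence function $\mathbb{A}$, then substitute back into step (i) to obtain the representation for $\hat{\beta}_t(\tau)$.

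For (i), convexity of $\rho_\tau$ and Assumptions 5(v)--(vi), 6(ii) imply, by a standard uniform Bahadur argument, that
\[
\sqrt{n}(\tilde{\beta}_t(a,\tau)-\beta_t(a,\tau)) = -J_t^b(a,\tau)^{-1}\frac{1}{\sqrt{n}}\sum_{i=1}^n r_\tau(W_{it};a,\beta_t(a,\tau)) + o_p(1)
\]
uniformly in $(a,\tau)$, since the class of half-space indicators indexed by $(a,b)$ times the bounded regressor $Z_{it}$ is $P$-Donsker. For (ii), Theorem 1 uniquely identifies $\alpha(\tau)$ as the minimizer of $Q(a,\tau)\equiv T^{-1}\sum_t\|D^t(v;a,\beta(a,\tau))\|_{L_2}^2$; the same Donsker property, together with the equicontinuity of $\beta_t(\cdot,\tau)$ from 5(viii), yields $\sup_{a,\tau}|Q_n(a,\tau)-Q(a,\tau)|\to_p 0$, so the argmin theorem delivers $\sup_\tau\|\hat{\alpha}(\tau)-\alpha(\tau)\|\to_p 0$ and hence $\sup_\tau\|\hat{\beta}_t(\tau)-\beta_t(\tau)\|\to_p 0$.

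Step (iii) is the crux. I would decompose
\[
\hat{D}_n^t(v;a,\tilde{\beta}(a,\tau)) = \hat{D}_n^t(v;\alpha(\tau),\beta(\tau)) + \bigl[D^t(v;a,\beta(a,\tau))-D^t(v;\alpha(\tau),\beta(\tau))\bigr]
\]
\[
+ \bigl[D^t(v;a,\tilde{\beta}(a,\tau))-D^t(v;a,\beta(a,\tau))\bigr] + R_n(v;a,\tau),
\]
where $R_n$ collects empirical-process remainders. Stochastic equicontinuity of the Donsker class $\{g_t(\cdot;a,b,v)\}$ over an $O_p(n^{-1/2})$-neighborhood of $(\alpha(\tau),\beta(\tau))$ makes $\sup_{v,\tau}\|R_n\|=o_p(n^{-1/2})$. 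By the chain rule together with Assumptions 5(v), 5(viii), and 6(iii), the first deterministic bracket Taylor-expands to $\Gamma_1^t(v;\tau)(a-\alpha(\tau))+o(\|a-\alpha(\tau)\|)$, while the second expands to $\Gamma_2^t(v;\tau)(\tilde{\beta}(a,\tau)-\beta(a,\tau))+o(\|\tilde{\beta}-\beta\|)$; substituting the Bahadur expansion from (i) converts the latter into a mean-zero $n^{-1/2}$-sum of $l(\mathbf{W}_i;\tau)$ terms.

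For (iv), the second-step FOC $\partial Q_n(a,\tau)/\partial a=0$ at $\hat{\alpha}(\tau)$, together with the expansion above, reduces to
\[
\Delta_1(\tau)\sqrt{n}(\hat{\alpha}(\tau)-\alpha(\tau)) + \frac{1}{\sqrt{n}}\sum_i\xi(\mathbf{W}_i;\tau) - \Delta_{12}(\tau)\frac{1}{\sqrt{n}}\sum_i l(\mathbf{W}_i;\tau) + o_p(1) = 0,
\]
where the cross-integrals of $\Gamma_1^t$ against $\Gamma_1^t$ and $\Gamma_2^t$ produce $\Delta_1(\tau)$ and $\Delta_{12}(\tau)$, respectively. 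Assumption 6(v) bounds $\Delta_1(\tau)^{-1}$ uniformly on $\mathcal{T}$, so inversion yields (\ref{Asymptotic Normality}). The representation (\ref{Asymptotic Normality first step estimator}) then follows by plugging $\hat{\alpha}(\tau)$ into the uniform Bahadur expansion of (i) and applying the chain rule through $B_t(\alpha(\tau),\tau)$, with the derivative term absorbing into $\mathbb{A}(\mathbf{W}_i;\tau)$. The main obstacle is step (iii): simultaneously achieving the uniform-in-$\tau$ empirical-process bound over the data-dependent $(a,\tilde{\beta}(a,\tau))$-neighborhood and the differentiable expansion of the discontinuous population moment $D^t$, which forces one to trade indicators for conditional densities via Assumptions 6(ii)--(iii) while maintaining the Donsker property of the perturbed class.
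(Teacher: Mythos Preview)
Your steps (i)--(iii) line up with the paper's Lemmas 2--7, but there is a genuine gap in step (iv). The sample criterion $Q_n(a,\tau)=T^{-1}\sum_t\|\hat D_n^t(v;a,\tilde\beta(a,\tau))\|_{L_2}^2$ is \emph{not} differentiable in $a$: it is built from indicators $\mathbf{1}\{Y_{it}\le X_{it}'a+Z_{it}'b_t\}$, so the first-order condition $\partial Q_n/\partial a=0$ you invoke at $\hat\alpha(\tau)$ simply does not exist. Linearizing the \emph{population} moment $D^t$ in (iii) does not by itself tell you that the minimizer of the non-smooth \emph{sample} criterion is $o_p(n^{-1/2})$-close to the solution of the linearized FOC; that bridge is the whole difficulty in minimum-distance problems with discontinuous moments.

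The paper handles this with the Brown--Wegkamp/Torgovitsky device: it introduces the exactly linear surrogate
\[
L_n^t(v;a,\tau)\equiv M_n^t(v;\alpha(\tau),\beta,\tau)+\Gamma_1^t(v;\tau)'(a-\alpha(\tau))+\Gamma_2^t(v;\tau)'[\tilde\beta(\alpha(\tau),\tau)-\beta(\alpha(\tau),\tau)],
\]
defines $\bar\alpha(\tau)=\arg\min_a\|L_n^t(v;a,\tau)\|_{\tilde\mu}$ (whose closed form \emph{is} the projection that produces $\mathbb{A}$), and then uses the sandwich
\[
\|L_n^t(v;\hat\alpha(\tau),\tau)\|_{\tilde\mu}-o_p(n^{-1/2})\le \|M_n^t(v;\hat\alpha(\tau),\tilde\beta,\tau)\|_{\tilde\mu}\le \|M_n^t(v;\bar\alpha(\tau),\tilde\beta,\tau)\|_{\tilde\mu}\le \|L_n^t(v;\bar\alpha(\tau),\tau)\|_{\tilde\mu}+o_p(n^{-1/2}),
\]
where the middle inequality is the \emph{definition} of $\hat\alpha(\tau)$ as minimizer of the non-smooth criterion. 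Combined with the Pythagorean identity $\|L_n^t(v;\hat\alpha,\tau)\|_{\tilde\mu}^2=\|L_n^t(v;\bar\alpha,\tau)\|_{\tilde\mu}^2+\|\Gamma_1^t(v;\tau)'(\hat\alpha-\bar\alpha)\|_{\tilde\mu}^2$ (the projection residual is orthogonal to $\Gamma_1^t$) and Assumption 6(v), this yields $\|\hat\alpha(\tau)-\bar\alpha(\tau)\|=o_p(n^{-1/2})$. Your FOC shortcut skips precisely this argument.

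A related omission: in (iii) you apply stochastic equicontinuity over an $O_p(n^{-1/2})$-neighborhood, but (ii) only gives consistency. The paper devotes the first half of its proof to establishing $\sqrt n$-consistency separately, pairing the lower bound $\|M^t(v;\hat\alpha(\tau),\beta,\tau)\|_{\tilde\mu}\ge (c-o_p(1))\|\hat\alpha(\tau)-\alpha(\tau)\|$ from Assumption 6(v) with an $O_p(n^{-1/2})$ upper bound obtained from the minimizing property of $\hat\alpha(\tau)$; you need this rate before your linearization is usable.
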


The proof of this theorem is based on arguments similar to those in \cite{brown2002weighted}, \cite{chen2003estimation}, and \cite{torgovitsky2017minimum}.

The following corollary follows immediately from Theorem 2.

\begin{Corollary}
We define $\Sigma(\tau,\tilde{\tau}) \equiv E[\mathbb{A}(\mathbf{W}_i; \tau) \mathbb{A}(\mathbf{W}_i; \tilde{\tau})']$.
Under the assumptions of Theorem 2, $\sqrt{n} (\hat{\alpha}(\cdot) - \alpha(\cdot) )$ converges weakly to a zero mean Gaussian process $\mathbb{Z}(\cdot)$ with covariance function $\Sigma(\tau,\tilde{\tau})$.
\end{Corollary}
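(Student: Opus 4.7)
The plan is to deduce the functional central limit theorem directly from the uniform Bahadur representation in Theorem 2. Because that theorem already gives
\[
\sqrt{n}(\hat{\alpha}(\tau)-\alpha(\tau)) = -\frac{1}{\sqrt{n}} \sum_{i=1}^n \mathbb{A}(\mathbf{W}_i;\tau) + o_p(1)
\]
uniformly in $\tau \in \mathcal{T}$, the remaining task is to show that the stochastic process $\tau \mapsto n^{-1/2} \sum_i \mathbb{A}(\mathbf{W}_i;\tau)$ converges weakly in $\ell^{\infty}(\mathcal{T})^{d_X}$ to the claimed Gaussian process. Then, by the continuous mapping/Slutsky argument, the $o_p(1)$ remainder does not affect the weak limit.

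The key intermediate step is to verify that the class $\mathcal{F} \equiv \{\mathbb{A}(\cdot;\tau) : \tau \in \mathcal{T}\}$ is Donsker. Recalling the definition, $\mathbb{A}(\mathbf{W}_i;\tau)$ is a linear combination of (i) the deterministic-weighted indicators appearing in $\xi(\mathbf{W}_i;\tau)$, which involve $\mathbf{1}\{Y_{it} \leq X_{it}'\alpha(\tau)+Z_{it}'\beta_t(\tau)\}$, and (ii) the score terms $r_{\tau}(W_{it};\alpha(\tau),\beta_t(\tau)) = (\tau - \mathbf{1}\{Y_{it} \leq X_{it}'\alpha(\tau)+Z_{it}'\beta_t(\tau)\})Z_{it}$, each pre-multiplied by matrices ($\Delta_1(\tau)^{-1}$, $J_t^b(\tau)^{-1}$, $\Delta_{12}(\tau)$, and the integrated $\Gamma_1^t$) that are bounded and continuous in $\tau$ under Assumptions 5 and 6. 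I would first argue that $\tau \mapsto \alpha(\tau)$ and $\tau \mapsto \beta_t(\tau)$ are continuous (in fact uniformly continuous on $\mathcal{T}$) using Assumptions 5 and 6 and the implicit characterization of the parameters. The indicator classes $\{\mathbf{1}\{Y_{it} \leq X_{it}'\alpha(\tau)+Z_{it}'\beta_t(\tau)\}: \tau \in \mathcal{T}\}$ are then VC-subgraph (they are indicators of a one-parameter family of half-spaces composed with a continuous curve into $\mathbb{R}^{d_X+d_Z}$), hence Donsker; multiplication by the bounded envelopes $Z_{it}$ and by the continuous deterministic matrix-valued weights preserves the Donsker property by standard stability results (\citealt*{vandervaart1996weak}, Sections 2.10–2.11).

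Given this, the multivariate functional CLT for empirical processes yields
\[
\frac{1}{\sqrt{n}} \sum_{i=1}^n \mathbb{A}(\mathbf{W}_i;\tau) \rightsquigarrow \mathbb{G}(\tau)
\]
in $\ell^{\infty}(\mathcal{T})^{d_X}$, where $\mathbb{G}$ is a tight, mean-zero Gaussian process. The covariance is identified by the finite-dimensional limit: for each $(\tau,\tilde{\tau})$,
\[
\operatorname{Cov}\bigl(\mathbb{G}(\tau),\mathbb{G}(\tilde{\tau})\bigr) = E[\mathbb{A}(\mathbf{W}_i;\tau)\mathbb{A}(\mathbf{W}_i;\tilde{\tau})'] = \Sigma(\tau,\tilde{\tau}),
\]
because $E[\mathbb{A}(\mathbf{W}_i;\tau)] = 0$ (which follows from $E[\xi(\mathbf{W}_i;\tau)] = 0$ via the identifying moment condition behind $\Gamma_1^t$ having mean-zero score factors at the truth, and $E[r_{\tau}(W_{it};\alpha(\tau),\beta_t(\tau))] = 0$ from the first-order condition of the quantile regression limit problem). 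Combining this weak convergence with the uniform Bahadur representation yields $\sqrt{n}(\hat{\alpha}(\cdot)-\alpha(\cdot)) \rightsquigarrow -\mathbb{G}(\cdot) = \mathbb{Z}(\cdot)$, which has the stated covariance function since $\mathbb{Z}$ and $-\mathbb{G}$ have the same distribution.

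The main obstacle is not producing the result but properly verifying the Donsker claim: one must handle the composition of the indicator class with the $\tau$-indexed quantile parameters, for which continuity of $\alpha(\cdot)$ and $\beta_t(\cdot)$ (established via the implicit function argument and equicontinuity conditions in Assumption 6) together with boundedness of $X_{it}, Z_{it}$ in Assumption 5(iii) supply the needed uniform entropy bound. Once that is in place, the remaining derivations are routine applications of Slutsky's lemma and preservation theorems for Donsker classes.
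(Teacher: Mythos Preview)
Your proposal is correct and follows essentially the same route as the paper: the paper's proof simply states that, by arguments parallel to those in Lemma~1 (VC-class and permanence properties from van der Vaart and Wellner), the class $\{\mathbf{w} \mapsto \mathbb{A}(\mathbf{w};\tau) : \tau \in \mathcal{T}\}$ is Donsker, and then invokes Theorem~2 to conclude. Your write-up merely spells out in more detail the ingredients the paper leaves implicit (the VC structure of the indicator pieces, boundedness of the deterministic matrix weights, and the mean-zero property of $\mathbb{A}(\mathbf{W}_i;\tau)$).
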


We consider the case in which $T=2$, $X_{it}$ is scalar, and there are no covariates.
In this case, we have
\begin{eqnarray}
\xi(\mathbf{W}_i;\tau) &=& \frac{1}{4} \left[ \int_{\mathcal{V}} \gamma_1(v;\tau) \omega(\mathbf{X}_i,v) dv \right] (\mathbf{1}\{U_{i1} \leq \tau\} - \mathbf{1}\{U_{i2} \leq \tau\} ), \nonumber \\
\Delta_{12}(\tau) l(\mathbf{W}_i;\tau) &=& \frac{1}{4} \left[ \int_{\mathcal{V}} \gamma_1(v;\tau) \gamma_2^1(v;\tau) dv \right] J_1^b(\tau)^{-1}( \tau - \mathbf{1}\{U_{i1} \leq \tau\} ) \nonumber \\
& & - \frac{1}{4} \left[ \int_{\mathcal{V}} \gamma_1(v;\tau) \gamma_2^2(v;\tau) dv \right] J_2^b(\tau)^{-1}( \tau - \mathbf{1}\{U_{i2} \leq \tau\} ), \nonumber 
\end{eqnarray}
where $Y_{it}(\tau) \equiv \alpha(\tau) X_{it} + \beta_t(\tau)$,
\begin{eqnarray}
\gamma_1(v;\tau) &\equiv & E\left[ \left\{ f_{Y_1|\mathbf{X}}(Y_{i1}(\tau)|\mathbf{X}_i) (X_{i1}+B_1(\alpha(\tau),\tau)) \right. \right. \nonumber \\
& & \left. \left. \ \ \ \ \ - f_{Y_2|\mathbf{X}}(Y_{i2}(\tau)|\mathbf{X}_i)(X_{i2}+B_2(\alpha(\tau),\tau)) \right\} \omega(\mathbf{X}_i,v) \right], \nonumber 
\end{eqnarray}
and $\gamma_2^t(v;\tau) \equiv E\left[ f_{Y_t|\mathbf{X}}(Y_{it}(\tau)|\mathbf{X}_i)\omega(\mathbf{X}_i,v) \right]$.
Because $J_t^b(\tau)$, $\gamma_2^1(v;\tau)$, and $\gamma_2^2(v;\tau)$ are positive, the variances of $\xi(\mathbf{W}_i;\tau)$ and $\Delta_{12}(\tau) l(\mathbf{W}_i;\tau)$ become small when $U_{i1}$ and $U_{i2}$ are positively correlated.
Specifically, if $U_{i1}=U_{i2}$, then $\xi(\mathbf{W}_i;\tau)$ is exactly equal to zero.

\if0
\begin{Remark}[Standard errors]
As seen in Corollary 1, the asymptotic variance of $\hat{\alpha}(\tau)$ is $\Sigma(\tau,\tau)$.
To estimate $\Sigma(\tau,\tau)$, we define
\begin{eqnarray}
\hat{\xi}(\mathbf{W}_i;\tau) &\equiv & \frac{1}{T} \sum_{t=1}^T \left[ \int_{\mathcal{V}} \hat{\Gamma}_1^t(v;\tau) \omega(\mathbf{X}_i,\mathbf{Z}_i,v) dv \right] \mathbf{1}\{Y_{it} \leq X_{it}'\hat{\alpha}(\tau)+Z_{it}'\hat{\beta}_t(\tau)\}, \nonumber \\
\hat{l}(\mathbf{W}_i;\tau) &\equiv & \left( \hat{J}_1^b(\tau)^{-1}r_{\tau}(W_{i1};\hat{\alpha}(\tau),\hat{\beta}_t(\tau))', \cdots, \hat{J}_T^b(\tau)^{-1}r_{\tau}(W_{iT};\hat{\alpha}(\tau),\hat{\beta}_t(\tau))' \right)', \nonumber \\
\hat{\Delta}_1(\tau) &\equiv & \frac{1}{T} \sum_{t=1}^T \int_{\mathcal{V}} \hat{\Gamma}_1^t(v;\tau) \hat{\Gamma}_1^t(v;\tau)' dv, \nonumber \\
\hat{\Delta}_{12}(\tau) &\equiv & \frac{1}{T} \sum_{t=1}^T \int_{\mathcal{V}} \hat{\Gamma}_1^t(v;\tau)\hat{\Gamma}_2^t(v;\tau)' dv. \nonumber
\end{eqnarray}
Here, $\hat{\Gamma}_1^t(v;\tau)$, $\hat{\Gamma}_2^t(v;\tau)$, and $\hat{J}_t^b(\tau)$ are uniformly consistent estimates of $\Gamma_1^t(v;\tau)$, $\Gamma_2^t(v;\tau)$, and $J_t^b(\tau)$, respectively.
These estimates are provided in Appendix 3.
Under some regularity conditions, we obtain $\hat{\Sigma}(\tau,\tau) \equiv \frac{1}{n} \sum_{i=1}^n \hat{\mathbb{A}}(\mathbf{W}_i;\tau) \hat{\mathbb{A}}(\mathbf{W}_i;\tau)' \to_p \Sigma(\tau,\tau)$, where $\hat{\mathbb{A}}(\mathbf{W}_i;\tau) \equiv \hat{\Delta}_1(\tau)^{-1} \left\{ \hat{\xi}(\mathbf{W}_i;\tau) - \hat{\Delta}_{12}(\tau) \hat{l}(\mathbf{W}_i;\tau) \right\}$.
\end{Remark}
\fi

Let $\{\mathbf{W}_{i}^*\}_{i=1}^n$ denote a bootstrap sample drawn with replacement from $\{\mathbf{W}_i\}_{i=1}^n$.
That is, $\{\mathbf{W}_{i}^*\}_{i=1}^n$ are independently and
identically distributed from the empirical measure, conditional on the realizations $\{\mathbf{W}_i\}_{i=1}^n$.
We define $\hat{\alpha}^*(\tau)$ as the bootstrap counterpart to $\hat{\alpha}(\tau)$.
Then, we can obtain the following theorem.

\begin{Theorem}
Under the assumptions of Theorem 2, $\sqrt{n}(\hat{\alpha}^*(\cdot) - \hat{\alpha}(\cdot))$ converges weakly to the limit distribution of $\sqrt{n}(\hat{\alpha}(\cdot) - \alpha(\cdot))$ in probability.
\end{Theorem}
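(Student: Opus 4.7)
The plan is to mirror the proof of Theorem 2, replacing empirical averages over $\{\mathbf{W}_i\}_{i=1}^n$ by bootstrap averages over $\{\mathbf{W}_i^*\}_{i=1}^n$, and then invoke the conditional multiplier/exchangeable-bootstrap CLT. Concretely, I would first try to establish the bootstrap linearization
\[
\sqrt{n}(\hat{\alpha}^*(\tau)-\hat{\alpha}(\tau)) \;=\; -\frac{1}{\sqrt{n}}\sum_{i=1}^{n}\bigl(\mathbb{A}(\mathbf{W}_i^*;\tau)-\mathbb{A}(\mathbf{W}_i;\tau)\bigr) + o_{p^*}(1)
\]
uniformly in $\tau\in\mathcal{T}$, conditional on the data in outer probability. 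Once this is in place, the conditional weak convergence of the right-hand side to the same Gaussian process $\mathbb{Z}(\cdot)$ as in Corollary 1 follows from standard bootstrap results for $P$-Donsker classes (see, e.g., Theorem 3.6.1 in van der Vaart and Wellner, 1996), since the influence-function class $\{\mathbf{w}\mapsto \mathbb{A}(\mathbf{w};\tau):\tau\in\mathcal{T}\}$ is Donsker as a byproduct of the proof of Theorem 2.

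The first main step is the bootstrap analog of equation (\ref{Asymptotic Normality first step estimator}). Following the same Bahadur-type argument but using the bootstrap empirical measure, I would show
\[
\sqrt{n}\bigl(\tilde{\beta}_t^*(a,\tau)-\tilde{\beta}_t(a,\tau)\bigr) \;=\; -J_t^b(a,\tau)^{-1}\frac{1}{\sqrt{n}}\sum_{i=1}^{n}\bigl\{r_\tau(W_{it}^*;a,\beta_t(a,\tau))-r_\tau(W_{it};a,\beta_t(a,\tau))\bigr\} + o_{p^*}(1),
\]
uniformly in $(a,\tau)\in\mathcal{A}\times\mathcal{T}$. The ingredients are: conditional consistency of $\tilde{\beta}_t^*(a,\tau)$ to $\beta_t(a,\tau)$, which follows from Assumption 5 and a uniform law of large numbers for the bootstrap empirical measure; stochastic equicontinuity of the bootstrap empirical process indexed by the quantile-regression score class, which is Donsker since $\mathcal{X}_{1:T}$ and $\mathcal{Z}$ are bounded (Assumption 5(iii)) and the densities are continuous (Assumptions 5(v), 6(ii)); and a quadratic expansion of the population criterion around $\beta_t(a,\tau)$ using Assumption 5(vi).

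The second main step is the bootstrap analog of the minimum-distance argument. I would establish stochastic equicontinuity of the bootstrap process $v\mapsto \sqrt{n}(\hat{D}_n^{*t}(v;a,b)-\hat{D}_n^{t}(v;a,b))$ uniformly over $\mathcal{V}\times\mathcal{A}\times\mathcal{B}_1\times\cdots\times\mathcal{B}_T$. The weight $\omega(\mathbf{X}_i,\mathbf{Z}_i,v)$ is uniformly bounded by the boundedness of $\mathcal{X}_{1:T}$, $\mathcal{Z}$, and $\mathcal{V}$, so the relevant class is a bounded parametric indicator class times a bounded, smoothly parameterized envelope, hence Donsker by standard preservation results. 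Combining this with the first-step bootstrap expansion and a quadratic expansion of the sample criterion around $\hat{\alpha}(\tau)$ (using Assumption 6(v) for the uniform rank condition and Assumptions 6(iii)--(iv) for continuity of $\Gamma_1^t$ and $\Gamma_2^t$) yields the bootstrap linearization displayed above.

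The principal obstacle I anticipate is the joint uniformity in $\tau$ and in the nuisance argument $a$ of $\tilde{\beta}_t^*(a,\tau)$, since the bootstrap first-step estimator enters the second-step criterion through the composite map $a\mapsto\tilde{\beta}_t^*(a,\tau)$ whose derivative $B_t(a,\tau)$ is only assumed equicontinuous in $a$ (Assumption 6(iv)). Controlling the remainder requires a uniform-in-$(a,\tau)$ Bahadur representation for the bootstrap quantile regression and then propagating its $o_{p^*}(1)$ remainder through the quadratic expansion of the second-step criterion, using Assumption 6(v) to invert the leading Jacobian $\Delta_1(\tau)$ uniformly. Once this uniform representation is secured, the conditional weak convergence of $\sqrt{n}(\hat{\alpha}^*(\cdot)-\hat{\alpha}(\cdot))$ to $\mathbb{Z}(\cdot)$ in probability follows from the conditional multiplier CLT applied to the influence-function class and the continuous mapping theorem applied to the linear functional $\varphi\mapsto -\Delta_1(\tau)^{-1}\{\varphi(\xi(\cdot;\tau))-\Delta_{12}(\tau)\varphi(l(\cdot;\tau))\}$.
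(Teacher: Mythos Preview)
Your proposal is correct and follows essentially the same route as the paper: establish a bootstrap Bahadur representation for $\tilde{\beta}_t^*(a,\tau)$ uniformly in $(a,\tau)$ (the paper's Lemma 9), use stochastic equicontinuity of the bootstrap moment process together with the Donsker property of the indicator-times-weight class, linearize the bootstrap criterion around $\hat{\alpha}(\tau)$ via the same $\Gamma_1^t,\Gamma_2^t$ expansion, and conclude by the bootstrap CLT for the Donsker influence-function class. Your centered linearization $-\tfrac{1}{\sqrt{n}}\sum_i\bigl(\mathbb{A}(\mathbf{W}_i^*;\tau)-\mathbb{A}(\mathbf{W}_i;\tau)\bigr)$ is in fact the precise form that emerges from the paper's projection argument with $\mathcal{L}_{n*}^t$ and Lemma 9.
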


\begin{Remark}
Using Theorem 3, we can consider the following null hypothesis:
\begin{equation}
H_0: \, \alpha(\tau) = r(\tau) \ \ \text{for each $\tau \in \mathcal{T}$,}
\end{equation}
where $r(\cdot)$ is known or estimable.
Then, we can use the following test statistic:
\[
S_n \ \equiv \ n \int_{\mathcal{T}} \left\| \hat{\alpha}(\tau) - \hat{r}(\tau) \right\|^2 d \tau.
\]
In practice, we approximate this integration by using a grid $\mathcal{T}_n$ in place of $\mathcal{T}$.
If the null hypothesis is that $\exists \alpha, \, \alpha(\tau) = \alpha$ for each $\tau \in \mathcal{T}$, $r(\tau)$ can be estimated by $\hat{r}(\tau) = \hat{\alpha}(0.5)$ under the null hypothesis.
In this case, Theorem 3 implies that the critical value can be calculated using the quantile of
\[
S_n^* \ \equiv \ n \int_{\mathcal{T}} \left\| (\hat{\alpha}^*(\tau) - \hat{\alpha}(\tau)) -  (\hat{\alpha}^*(0.5) - \hat{\alpha}(0.5)) \right\|^2 d\tau.
\]
\end{Remark}

\section{Simulations}

\textbf{Simulation 1.}\,
Suppose that the potential outcomes are given by
\begin{eqnarray}
Y_{i1}(x) &=& \left(1 + 0.5 \Phi^{-1}(U_{i1}) \right) x + \Phi^{-1}(U_{i1}) + Z_{i}, \nonumber \\
Y_{i2}(x) &=& \left(1 + 0.5 \Phi^{-1}(U_{i2}) \right) x + 1.2 \Phi^{-1}(U_{i2}) + 1.2 Z_{i}, \nonumber 
\end{eqnarray}
where $Z_{i} \sim U(0,1)$ and $\Phi$ is the standard normal distribution function.
The observed outcomes are generated from $Y_{it}=Y_{it}(X_{it})$.
We assume that $X_{it} = \Phi(\tilde{X}_{it})$, $U_{it} = \Phi(A_i + \tilde{U}_{it})$, $(\tilde{X}_{i1},\tilde{X}_{i2},A_i)' \sim N(0,\Sigma_{\mathbf{X} A})$, and $\tilde{U}_{it} \sim N(0,1-\rho^2)$, where $\rho \in [0,1]$ and
\begin{eqnarray} 
\Sigma_{\mathbf{X} A} = \left(
    \begin{array}{ccc}
      1 & 0.5 & 0.5\rho \\
      0.5 & 1 & 0.5\rho \\
      0.5\rho & 0.5\rho & \rho^2
    \end{array}
  \right).  \nonumber 
\end{eqnarray}
Then, $U_{it}$ is uniformly distributed and $\rho$ represents the dependence between $U_{i1}$ and $U_{i2}$.
When $\rho = 0$, $U_{i1}$ and $U_{i2}$ are uncorrelated and when $\rho = 1$, $U_{i1}$ and $U_{i2}$ are perfectly correlated.
Here, we have $\alpha(0.25) = 0.66$, $\alpha(0.5) = 1$, and $\alpha(0.75) = 1.34$.

Table 1 contains the results of this experiment for two different choices of the sample size, $1000$ and $2000$, and three different choices of $\rho^2$, $0.1$, $0.5$, and $0.9$.
The number of replications is set at $1000$ throughout.
Table 1 shows the bias, standard deviation, and MSE of the estimates of $\alpha(\tau)$ for $\tau = 0.25, 0.5$, and $0.75$.
For all settings, the bias is quite small.
Table 1 shows that the standard deviation and MSE decrease in all experiments as the sample size increases.
As expected, when the correlation between $U_{i1}$ and $U_{i2}$ is high (i.e. $\rho^2 = 0.9$), the standard deviation decreases.

\begin{table}[H]
\begin{center}
\caption{Results of Simulation 1} 
   \begin{tabular}{c c r r r r r r} \hline
      & & \multicolumn{3}{c}{$n=1000$} & \multicolumn{3}{c}{$n=2000$} \\ \hline
      &  & $\rho^2=0.1$ & $\rho^2=0.5$ & $\rho^2=0.9$ & $\rho^2=0.1$ & $\rho^2=0.5$ & $\rho^2=0.9$ \\ \hline \hline
      & bias & -0.017 & -0.018 & -0.013 & -0.003 & -0.015 & -0.014 \\ 
$\tau=0.25$ & std & 0.232 & 0.236 & 0.178 & 0.157 & 0.151 & 0.112 \\ 
      & mse & 0.054 & 0.056 & 0.032 & 0.025 & 0.023 & 0.013 \\ \hline
      & bias & -0.006 & -0.011 & -0.017 & 0.001 & -0.008 & -0.008 \\ 
$\tau=0.50$ & std & 0.204 & 0.204 & 0.149 & 0.140 & 0.133 & 0.099 \\ 
      & mse & 0.042 & 0.042 & 0.022 & 0.020 & 0.018 & 0.010 \\ \hline
      & bias & -0.013 & -0.018 & -0.018 & -0.002 & -0.011 & -0.013 \\ 
$\tau=0.75$ & std & 0.232 & 0.230 & 0.177 & 0.156 & 0.156 & 0.118 \\ 
      & mse & 0.054 & 0.053 & 0.032 & 0.024 & 0.024 & 0.014 \\ \hline
  \end{tabular}
\end{center}
\end{table}

We also verify that the nonparametric bootstrap procedure works for $(n,\rho^2) = (2000,0.9)$.
We calculate 90\% and 95\% confidence intervals of $\alpha(\tau)$ to obtain the coverage probabilities for $\tau = 0.25, 0.5$, and $0.75$.
Table 2 shows the nominal and actual coverage probabilities are close in all settings.
\begin{table}[H]
\begin{center}
\caption{Coverage probabilities of Simulation 1} 
   \begin{tabular}{c c c c} \hline
      & $\tau = 0.25$ & $\tau = 0.50$ & $\tau = 0.75$ \\ \hline \hline
 90\% & 0.894 & 0.898 & 0.892 \\
 95\% & 0.938 & 0.946 & 0.940 \\ \hline
  \end{tabular}
\end{center}
\end{table}
%
\noindent
\textbf{Simulation 2.}\,
To compare our estimation method with that of \cite{athey2006identification}, we consider the following model.
We assume that $\mathcal{X}_{1:2} = \{(0,0),(0,1)\}$ and the potential outcomes are given by
\begin{eqnarray}
Y_{i1}(0) &=& \Phi^{-1}(U_{i1}), \nonumber \\
Y_{i2}(x) &=& \left( 1+0.5\Phi^{-1}(U_{i2}) \right)x + 0.5 \Phi^{-1}(U_{i2}). \nonumber 
\end{eqnarray}
The observed outcomes are generated from $Y_{i1} = Y_{i1}(0)$ and $Y_{i2}=Y_{i2}(X_{i2})$.
It is assumed that $X_{i2}= \mathbf{1}\{\tilde{X}_i+A_i \geq 0\}$, $U_{it} = \Phi(A_i + \tilde{U}_{it})$, $\tilde{X}_i \sim N(0,1)$, $A_i \sim N(0,\rho^2)$, and $\tilde{U}_{it} \sim  N(0,1-\rho^2)$, where $\rho \in [0,1]$.
Then, $G_i \equiv \mathbf{1}\{X_{i2}=1\}$ denotes an indicator for the treatment group.

Since $U_{it}$ satisfies the rank stationarity assumption, we have
\[
E[Y_{i2}(0)-Y_{i1}(0)|G_i = g] \ = \ - 0.5 E\left[ \Phi(U_{i2}) | G_i =g \right].
\]
The conditional distribution of $U_{i2}|G_i=0$ is different from that of $U_{i2}|G_i=1$; therefore, this model does not satisfy the parallel trend assumption employed in standard DID models and we cannot estimate the average treatment effect on the treated (ATT) using the standard DID estimation method.
By contrast, using our estimation method, we can estimate the quantile functions of the potential outcomes and obtain an estimate of the ATT.

From (\ref{Identification_AI1}) and (\ref{Identification_AI2}), we can estimate $F_{Y_2(0)|G=1}(y)$ and $F_{Y_2(1)|G=0}(y)$ by
\begin{eqnarray}
\hat{F}_{Y_2(0)|G=1}(y) &\equiv & \hat{F}_{Y_1|G=1}\left( \hat{F}_{Y_1|G=0}^{-1}\left( \hat{F}_{Y_2|G=0}(y) \right) \right) \ \text{and} \nonumber \\
\hat{F}_{Y_2(1)|G=0}(y) &\equiv & \hat{F}_{Y_1|G=0}\left( \hat{F}_{Y_1|G=1}^{-1}\left( \hat{F}_{Y_2|G=1}(y) \right) \right), \nonumber
\end{eqnarray}
where $\hat{F}_{Y_t|G=g}(\cdot)$ and $\hat{F}_{Y_t|G=g}^{-1}(\cdot)$ are the empirical distribution and quantile functions, respectively.
The marginal distributions of the potential outcomes and QTE can be obtained using these estimators and the empirical distributions of $Y_{i2}|G_i=0$ and $Y_{i2}|G_i=1$.
We refer to this estimator as AI estimator.

Table 3 presents the bias, standard deviation, and MSE of our estimator and the AI estimator for $n = 500$ and three different choices of $\rho^2$, $0.1$, $0.5$, and $0.9$.
For all settings, the results of our estimator are similar to those of the AI estimator.
Hence, when there are no covariates, our estimator is not worse than the AI estimator.

\begin{table}[H]
\begin{center}
\caption{Results of Simulation 2} 
   \begin{tabular}{c c r r r r r r} \hline
      & & \multicolumn{3}{c}{Our estimator} & \multicolumn{3}{c}{AI estimator} \\ \hline
      & & $\rho^2=0.1$ & $\rho^2=0.5$ & $\rho^2=0.9$ & $\rho^2=0.1$ & $\rho^2=0.5$ & $\rho^2=0.9$ \\ \hline \hline
      & bias & -0.011 & -0.026 & -0.015 & -0.003 & -0.012 & -0.027 \\ 
$\tau=0.25$ & std & 0.129 & 0.129 & 0.097 & 0.136 & 0.129 & 0.108 \\ 
      & mse & 0.017 & 0.017 & 0.010 & 0.018 & 0.017 & 0.012 \\ \hline
      & bias & -0.002 & -0.010 & -0.006 & 0.002 & -0.001 & -0.004 \\ 
$\tau=0.50$ & std & 0.116 & 0.100 & 0.069 & 0.115 & 0.097 & 0.065 \\ 
      & mse & 0.013 & 0.010 & 0.005 & 0.013 & 0.009 & 0.004 \\ \hline
      & bias & -0.005 & -0.017 & -0.010 & -0.003 & -0.004 & -0.006 \\ 
$\tau=0.75$ & std & 0.123 & 0.106 & 0.077 & 0.125 & 0.110 & 0.073 \\ 
      & mse & 0.015 & 0.012 & 0.006 & 0.016 & 0.012 & 0.005 \\ \hline
  \end{tabular}
\end{center}
\end{table}

\section{Empirical illustrations}

\subsection{The impact of insurance provision on household production}

In this section, we use our method to study the impact of an agricultural insurance program on household production.
We use the data employed by \cite{cai2016impact} to estimate the QTE of insurance provision on tobacco production.

This empirical analysis is based on data obtained from 12 tobacco production counties in the Jiangxi province of China.
Across these 12 counties, only tobacco farmers in the county of Guangchang were eligible to buy the tobacco insurance policy.
In 2003, the People's Insurance Company of China (PICC) designed and offered the first tobacco production insurance program to households in Guangchang.
Hence, we use this county as a treatment group.

The sample includes information on approximately 3,400 tobacco households during 2002 and 2003.
Table 4 provides summary statistics for 2002 and shows that treatment regions are quite different from control regions in terms of their observed characteristics.
For example, control regions include more educated people than treatment regions.
The proportion of high school- or college-educated people in the treatment regions is $0.025$, whereas that in the control regions is $0.257$.
Hence, controlling the observed characteristics is important for adjusting the differences between the treatment and control regions.

\begin{table}[H]
\begin{center}
\caption{Summary Statistics}
  \begin{tabular}{l c c c c} \hline
      & Treatment & Control & Diff & P-val on Diff \\ \hline
      Number of households & 1260 & 2128 & & \\
      Area of tobacco production (mu) & 5.578 & 4.874 & 0.705 & 0.000 \\
      Age & 41.119 & 41.522 & -0.403 & 0.173 \\
      Household size & 4.877 & 4.665 & 0.212 & 0.000 \\
      Education (Primary) & 0.367 & 0.323 & 0.044 & 0.009 \\
      Education (Secondary) & 0.602 & 0.338 & 0.263 & 0.000 \\
      Education (High school or College) & 0.025 & 0.257 & -0.232 & 0.000 \\ \hline
  \end{tabular}
\end{center}
\end{table}

We estimate the following linear-in-parameter model:
\begin{eqnarray}
Y_{i,2002} &=& Z_i'\beta_{2002}(U_{i,2002}), \nonumber \\
Y_{i,2003} &=& G_i \alpha(U_{i,2003}) + Z_i'\beta_{2003}(U_{i,2002}), \nonumber
\end{eqnarray}
where $Y_{it}$ is the tobacco production area (mu), $G_i$ is a treatment indicator equal to one for the treatment regions and zero for the control regions, and $Z_i$ is a vector of covariates including a constant term.
We estimate $\alpha(\tau)$ for $\tau = 0.1, ... , 0.9$.
Following \cite{cai2016impact}, we employ the age of the household head, household size, and education level indicators as control variables.

The main results from our method are presented in Figure 1.
The DID estimate is $0.239$, and the $95$ \% confidence interval is $[0.078,0.388]$.
We use the nonparametric bootstrap method to construct this confidence interval.
Figure 1 shows that the estimates of $\alpha(\tau)$ differ across $\tau$, and the QTE increases in $\tau$.
The impact of the insurance provision is nearly zero at the lower and middle quantiles and positive at the upper quantiles.
For $\tau = 0.8$ and $0.9$, the QTE is statistically significant.
In addition, we consider the null hypothesis that the QTEs are constant along $\tau$.
We conduct the test described in Remark 3 and calculate the test statistic and the critical value at the $0.05$ significance level.
These values are $571.9$ and $296.3$, respectively; hence, the null hypothesis is rejected.

\begin{figure}[h]
\centering
\includegraphics[width=15cm]{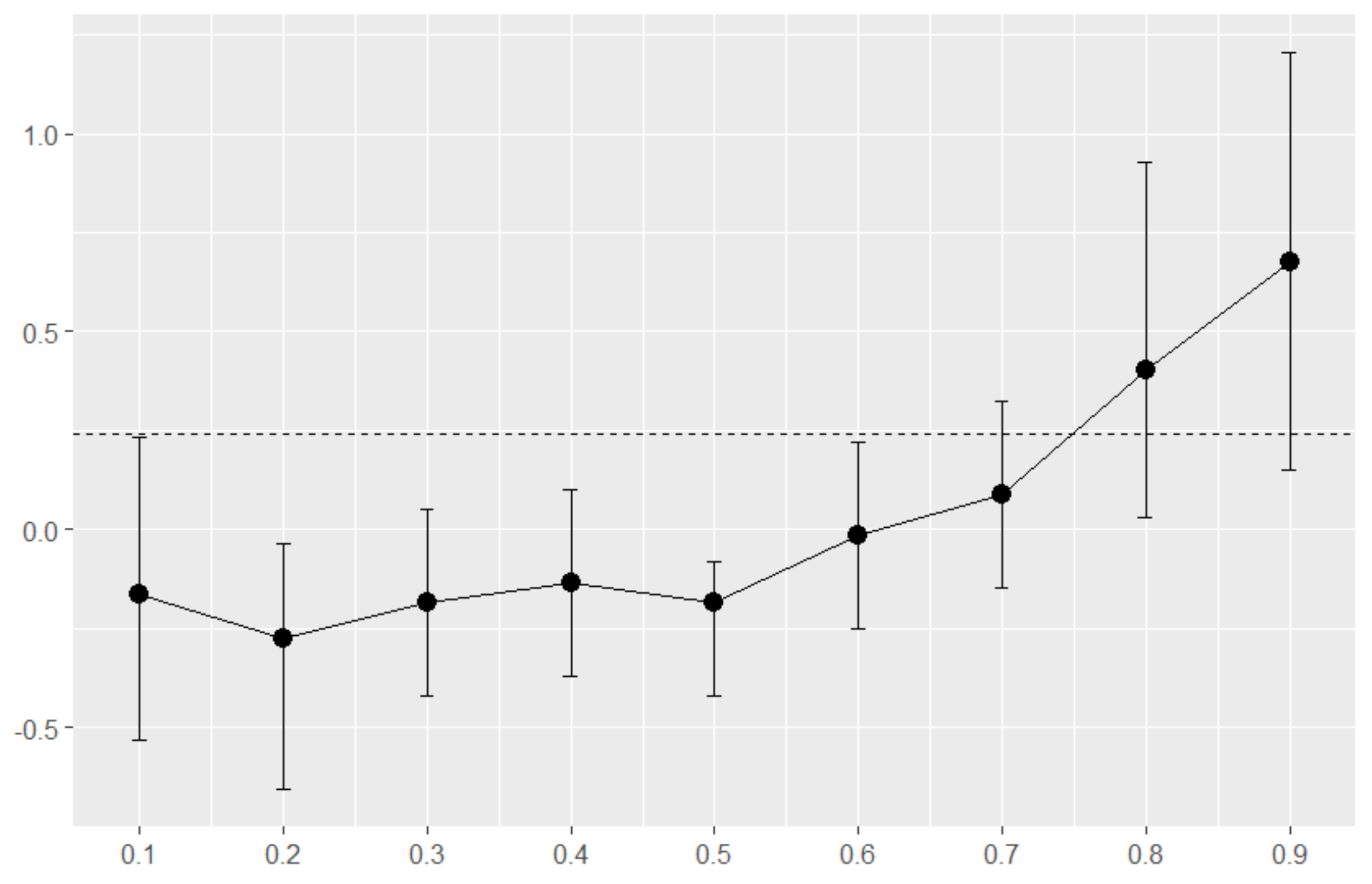}
\caption{The estimates of the QTEs and the 95 \% confidence intervals in Section 5.1. The horizontal axis measures the value of $\tau$ and the dashed line denotes the DID estimate.}
\end{figure}

\cite{cai2016impact} analyzes the welfare impact of the insurance program through the calibration.
The parameter values of the production function are chosen to match the DID (or triple difference) estimate.
From this analysis, she concludes that providing a heavily subsidized compulsory insurance program has a positive welfare impact on rural households.
However, our results show that the insurance program does not significantly change households' investment behavior at the lower and middle quantiles, and hence, may not affect household welfare at such quantiles.

\subsection{The TV effect on child cognitive development}

Next, we use our method to study the effect of TV on child cognitive development.
We use the data employed by \cite{huang2010dynamic} to estimate the QTE of TV watching on children's cognitive development.

This empirical analysis is based on a childhood longitudinal sample from NLSY79 (National Longitudinal Survey of Youth 1979).
Following \cite{huang2010dynamic}, we use a longitudinal sample of approximately 2,400 children and treat the Peabody Individual Achievement Test (PIAT) reading scores at ages 6--7 and 8--9 as $Y_{i1}$ and $Y_{i2}$, respectively.
The PIAT reading score at ages 6--7 has mean 103.0 and SD 11.7, and that at ages 8--9 has mean 104.3 and SD 14.6.
The outcome distribution at ages 8--9 is more dispersed than that at ages 6--7.
Hence, in these cases, additive time trends may not be plausible.
We use daily TV watching hours at ages 6--7 and 8--9 as the treatment variables $X_{i1}$ and $X_{i2}$, respectively and estimate the following linear-in-parameter model:
\begin{eqnarray}
Y_{it} &=& X_{it}'\alpha(U_{it}) + Z_{it}'\beta_t(U_{it}), \ \ \ \ t = 1, 2, \nonumber
\end{eqnarray}
where $Z_{it}$ is a vector of covariates including a constant term, dummy variables of race and gender, and an indicator of whether a child has 10 or more children's books at home.
In addition, we employ the Home Observation Measurement of the Environment variable (HOME), where is often used in child development research as an aggregate quality indicator of the home environment.

The main results from using our method are presented in Figure 2.
We find that the estimates of $\alpha(\tau)$ differ slightly across $\tau$ and the QTE is nearly zero at the lower and middle quantiles.
However, the impact of TV watching on the PIAT reading score is statistically significant at $\tau = 0.8$.
We consider the null hypothesis that the QTEs are zero at all quantiles.
We conduct the test described in Remark 3 and calculate the test statistic and the critical value at the $0.05$ significance level.
These values are $145.7$ and $290.5$, respectively; hence, the null hypothesis is not rejected.
Similar to \cite{huang2010dynamic}, the magnitude of the effect is quite small compared to the standard deviation of the PIAT reading score.
Therefore, the effect of TV on child cognitive development is neither statistically nor economically significant.

\begin{figure}[h]
\centering
\includegraphics[width=15cm]{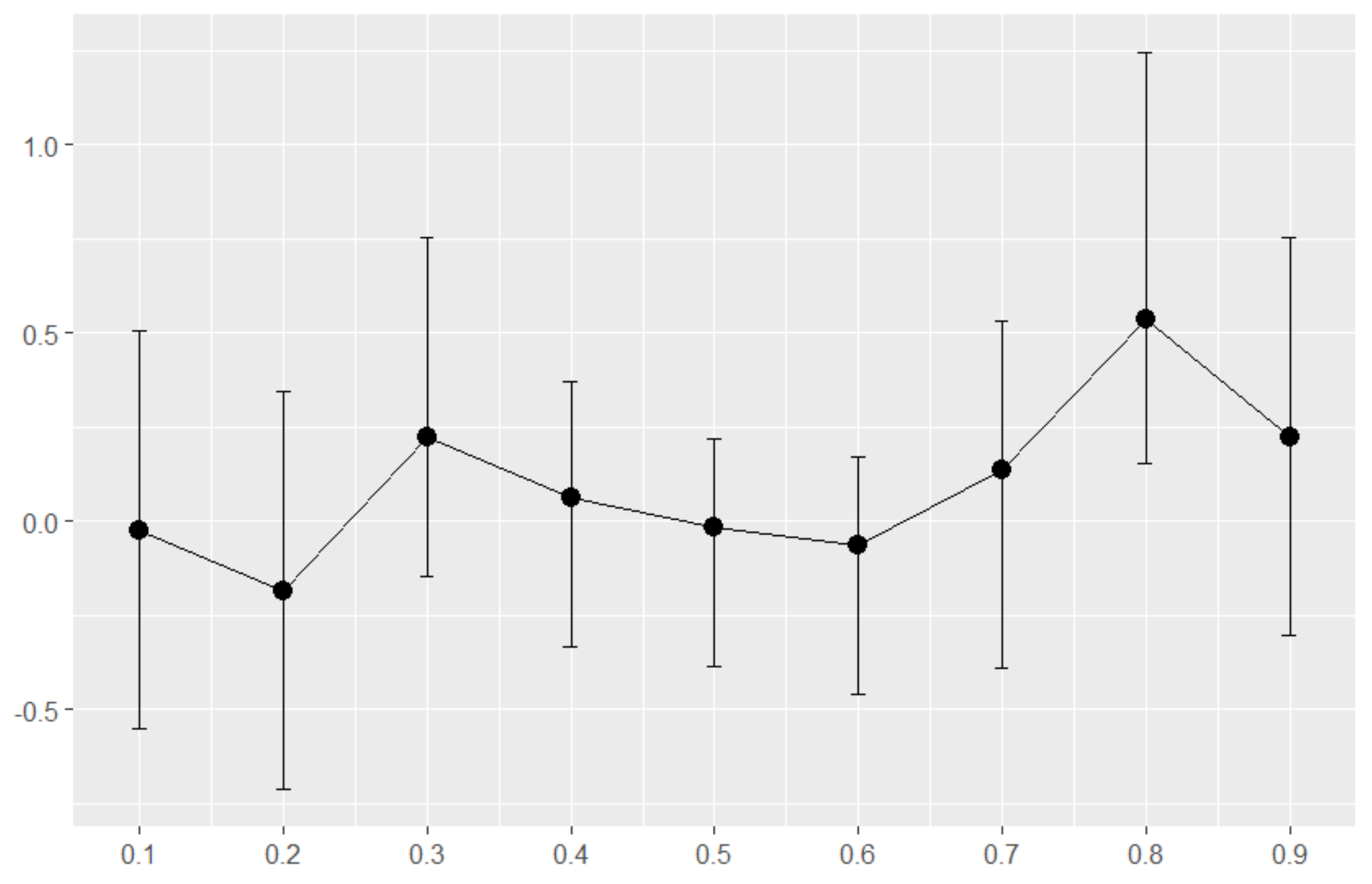}
\caption{The estimates of the QTEs and the 95 \% confidence intervals in Section 5.2. The horizontal axis represents the value of $\tau$.}
\end{figure}

\section{Conclusion}

In this study, we developed a novel estimation method for the QTE under rank invariance and rank stationarity assumptions.
Although \cite{ishihara2020identification} also explores the identification and estimation of the nonseparable panel data model under these assumptions, the minimum distance estimation using this process is computationally demanding when the dimensionality of covariates is large.
To overcome this problem, we proposed a two-step estimation method based on the quantile regression and minimum distance methods.
We then showed the uniform asymptotic properties of our estimator and the validity of the nonparametric bootstrap.
The Monte Carlo studies indicated that our estimator performs well in finite samples.
Finally, we presented two empirical illustrations to estimate the distributional effects of insurance provision on household production, and TV watching on child cognitive development.

\clearpage

\renewcommand{\theequation}{A.\arabic{equation}}
\setcounter{equation}{0}
\section*{Appendix 1: Proofs}


\begin{Assumption1}
For all $t$, $x$, and $z_t$, the conditional quantile function $q_t(x,z_t,\tau)$ is continuous and strictly increasing in $\tau$.
When $X_{it}$ is a continuous variable, then we assume that $q_t(x,z_t,\tau)$ is also continuous in $x$.
\end{Assumption1}

We define the subset $\mathcal{S}_t^m(\overline{x}) \subset \mathcal{X}_t$ in the following manner.
First, for $\bar{x} \in \mathcal{X}_1$ and $t = 1, \cdot, T$, we define $\mathcal{S}_t^0(\overline{x}) \equiv \{ x \in \mathcal{X}_t : (\bar{x},x) \in \mathcal{X}_{1,t} \}$, where $\mathcal{X}_{t,s}$ is the joint support of $(X_{it},X_{is})$.
For $m=1,2, \cdots$, we define
\begin{eqnarray}
\mathcal{S}_t^m(\overline{x}) \equiv \{x \in \mathcal{X}_t : \text{there exist $x_s \in \mathcal{S}_s^{m-1}(\overline{x})$ such that $(x,x_s) \in \mathcal{X}_{t,s}$.} \}.   \nonumber 
\end{eqnarray}
For $t = 1$, $\mathcal{S}_t^0$ becomes a singleton $\{\overline{x}\}$ and, for $t \neq 1$, $\mathcal{S}_t^0$ is the cross-section of $\mathcal{X}_{1,t}$ at $X_{i1} = \overline{x}$.
\if0
Figure 3 illustrates these sets.
\begin{figure}[h]
\centering
\includegraphics[width=15cm]{figure_1.png}
\caption{Description of $\mathcal{S}_t^m$.}
\end{figure}
\fi

\begin{Assumption2}
(i) For all $t$, we have $\overline{\cup_{m=0}^{\infty} \mathcal{S}_t^m(\overline{x}) } = \mathcal{X}_t$ for some $\overline{x} \in \mathcal{X}_1$.
(ii) The support of $\mathbf{X}_i|\mathbf{Z}_i=\mathbf{z}$ is equal to $\mathcal{X}_{1:T}$ for all $\mathbf{z} \in \mathcal{Z}$.
\end{Assumption2}

\begin{proof}[Proof of Proposition 1]
First, we show that, if for all $x_t,x_t' \in \mathcal{X}_t$ and $\mathbf{z} = (z_1, \cdots, z_T)' \in \mathcal{Z}$, we can identify function $Q^t_{x_t',x_t|\mathbf{z}}(y)$ that satisfies
\begin{equation}
q_t(x_t',z_t,\tau) = Q^t_{x_t',x_t|\mathbf{z}}(q_t(x_t,z_t,\tau)), \label{Q_t,x,x'}
\end{equation}
then $q_t(x_t,z_t,\tau)$ is identified for all $x_t$ and $z_t$.
We define 
$$
G^t_{x_t|\mathbf{z}}(y) \equiv \int F_{Y_t|X_t,\mathbf{Z}}(Q^t_{x_t',x_t|\mathbf{z}}(y)|x_t',\mathbf{z}) dF_{X_t|\mathbf{Z}}(x_t'|\mathbf{z}).
$$
It follows from (\ref{Q_t,x,x'}) that we obtain
\begin{eqnarray}
G^t_{x_t|\mathbf{z}}(q_t(x_t,z_t,\tau)) &=& \int F_{Y_t|X_t,\mathbf{Z}}(q_t(x_t',z_t,\tau)|x_t',\mathbf{z}) dF_{X_t|\mathbf{Z}}(x_t'|\mathbf{z}) \nonumber \\
&=& \int P(U_{it} \leq \tau | X_{it} = x_t', \mathbf{Z}_i =\mathbf{z}) dF_{X_t|\mathbf{Z}}(x_t'|\mathbf{z}) \nonumber \\
&=&  P(U_{it} \leq \tau |\mathbf{Z}_i = \mathbf{z}) \ = \ \tau, \nonumber 
\end{eqnarray}
where the second equality follows from Assumption A.1.
Hence we have $q_t(x_t,z_t,\tau) = ( G_{x_t|\mathbf{z}}^t )^{-1}(\tau)$.
This implies that if we can $Q^t_{x_t',x_t|\mathbf{z}}(y)$ that satisfies (\ref{Q_t,x,x'}), then $q_t(x_t,z_t,\tau)$ is point identified.

Next, we show that for all $x_t,x_t' \in \mathcal{X}_t$ and $\mathbf{z} \in \mathcal{Z}$, we can identify function $Q^t_{x_t',x_t|\mathbf{z}}(y)$ that satisfies (\ref{Q_t,x,x'}).
For $\mathbf{x} = (x_1, \cdots , x_T)' \in \mathcal{X}_{1:T}$ and $\mathbf{z} \in \mathcal{Z}$, we have
\begin{eqnarray}
F_{Y_t|\mathbf{X},\mathbf{Z}}(q_t(x_t,z_t,\tau)|\mathbf{x},\mathbf{z}) &=& P(U_{it} \leq \tau |\mathbf{X}_i = \mathbf{x}, \mathbf{Z}_i = \mathbf{z}) \nonumber \\
&=& P(U_{is} \leq \tau |\mathbf{X}_i = \mathbf{x}, \mathbf{Z}_i = \mathbf{z}) \nonumber \\
&=& F_{Y_s|\mathbf{X},\mathbf{Z}}(q_s(x_s,z_s,\tau)|\mathbf{x},\mathbf{z}), \label{identification: main result}
\end{eqnarray}
where the second equality follows from Assumption 2.
From Assumption A.1, we obtain $q_t(x_t,z_t,\tau) = F_{Y_t|\mathbf{X},\mathbf{Z}}^{-1}\left( F_{Y_s|\mathbf{X},\mathbf{Z}}(q_s(x_s,z_s,\tau)|\mathbf{x},\mathbf{z}) | \mathbf{x},\mathbf{z} \right)$.
Hence, for any $(x_t,x_s) \in \mathcal{X}_{t,s}$, we can identify the strictly increasing function $\tilde{Q}^{s,t}_{x_s,x_t|\mathbf{z}}(y)$ such that
\begin{equation}
q_t(x_t,z_t,\tau) = \tilde{Q}^{t,s}_{x_t,x_s|\mathbf{z}} \left( q_s(x_s,z_s,\tau) \right). \label{Q_tilde s,t}
\end{equation}
This equation implies that if $q_s(x_s,z_s,\tau)$ is identified and $(x_t,x_s) \in \mathcal{X}_{t,s}$, then $q_t(x_t,z_t,\tau)$ is also identified.

Because $x_t \in \mathcal{S}_t^0(\overline{x})$ implies $(\overline{x}, x_t) \in \mathcal{X}_{1,t}$, it follows from (\ref{Q_tilde s,t}) that for any $x_t \in \mathcal{S}_t^0(\overline{x})$, we have
$$
q_t(x_t,z_t,\tau) = \tilde{Q}^{t,1}_{x_t,\overline{x}|\mathbf{z}} \left( q_1(\overline{x},z_1,\tau) \right).
$$
Next, we fix $x_t \in \mathcal{S}_t^1(\overline{x})$.
From the definition of $\mathcal{S}_t^1(\overline{x})$, there exists $x_s \in \mathcal{S}_s^0(\overline{x})$ such that $(x_t,x_s) \in \mathcal{X}_{t,s}$.
Hence, it follows from (\ref{Q_tilde s,t}) that we have
$$
q_t(x_t,z_t,\tau) = \tilde{Q}^{t,s}_{x_t,x_s|\mathbf{z}} \left( \tilde{Q}^{s,1}_{x_s,\overline{x}|\mathbf{z}} \left( q_1(\overline{x},z_1,\tau) \right) \right),
$$
which implies that we can identify $\tilde{Q}^{t,1}_{x_t,\overline{x}|\mathbf{z}}(y)$ for all $x_t \in \mathcal{S}_t^1(\overline{x})$.
By repeating this argument, for any $m \in \mathbb{N}$ and $x_t \in \mathcal{S}_t^m(\overline{x})$, we can identify $\tilde{Q}^{t,1}_{x_t,\overline{x}|\mathbf{z}}(y)$ that satisfies (\ref{Q_tilde s,t}).
Therefore, by the continuity of $q_t$, for all $x_t' \in \overline{ \cup_{m=0}^{\infty} \mathcal{S}_t^m(\overline{x}) }$, we can identify $\tilde{Q}^{t,1}_{x_t',\overline{x}|\mathbf{z}}(y)$ that satisfies (\ref{Q_tilde s,t}).
Because it follows from Assumption A.2 (i) that $\mathcal{X}_t = \overline{ \cup_{m=0}^{\infty} \mathcal{S}_t^m(\overline{x}) }$, for any $x_t',x_t \in \mathcal{X}_t$, we have
$$
q_t(x_t',z_t,\tau) = \tilde{Q}^{t,1}_{x_t',\overline{x}|\mathbf{z}}\left( \left( \tilde{Q}_{x_t,\overline{x}|\mathbf{z}}^{t,1} \right)^{-1} (q_t(x_t,z_t,\tau)) \right).
$$
Hence, we can identify function $Q^t_{x_t',x_t|\mathbf{z}}(y)$ that satisfies (\ref{Q_t,x,x'}).
\end{proof} \vspace{0.2in}

\begin{proof}[Proof of Theorem 1]
Because (\ref{moment_2}) implies $F_{Y_t - X_t'\alpha(\tau)|Z_t} \left(z_t'\beta_t(\tau)|z \right) = \tau$, it follows from the usual argument of quantile regression that we have $\beta_t(\alpha(\tau),\tau) = \beta_t(\tau)$.
Next, we show 
\begin{equation}
\frac{1}{T} \sum_{t=1}^T \|D^t(v;a,\beta(a,\tau))\|_{L_2}^2=0 \ \ \Leftrightarrow \ \  a = \alpha(\tau). \label{Thm 1 (1)}
\end{equation}
Suppose that $a = \alpha(\tau)$.
Because $\beta_t(\alpha(\tau),\tau) = \beta_t(\tau)$, it follows from (\ref{moment_second_step}) that we have
$$
D^t(v;a,\beta(a,\tau))=0.
$$
Hence, it suffices to show that the left-hand side in (\ref{Thm 1 (1)}) implies $a = \alpha(\tau)$.

Suppose that $a^* \in \mathcal{A}$ satisfies $\frac{1}{T}\sum_{t=1}^T \|D^t(v;a^*,\beta(a^*,\tau))\|_{L_2}^2=0$.
Then, for all $t$, $s$, $\mathbf{x}\in \mathcal{X}_{1:T}$, and $\mathbf{z} \in \mathcal{Z}$, we obtain
\begin{eqnarray}
 P(Y_{it} \leq X_{it}'a^* + Z_{it}'\beta_t(a^*,\tau) |\mathbf{X}_i=\mathbf{x},\mathbf{Z}_i=\mathbf{z}) \nonumber \\
= P(Y_{is} \leq X_{is}'a^* + Z_{is}'\beta_s(a^*,\tau)|\mathbf{X}_i=\mathbf{x},\mathbf{Z}_i=\mathbf{z}). \label{Thm 1 (2)}
\end{eqnarray}
Letting $\tilde{q}_t(x_t,z_t,\tau ;a^*) \equiv x_t'a^* + z_t'\beta_t(a^*,\tau)$, then (\ref{Thm 1 (2)}) implies that we have
$$
F_{Y_t|\mathbf{X},\mathbf{Z}}\left( \tilde{q}_t(x_t,z_t,\tau ;a^*) | \mathbf{x},\mathbf{z} \right) = F_{Y_s|\mathbf{X},\mathbf{Z}}\left( \tilde{q}_s(x_s,z_s,\tau ;a^*) | \mathbf{x},\mathbf{z} \right),
$$
where $\mathbf{x} = (x_1,\cdots , x_T)'$ and $\mathbf{z} = (z_1,\cdots , z_T)'$.
Similar to the proof of Proposition 1, it follows from (\ref{Thm 1 (2)}) that for all $x_t,\tilde{x}_t \in \mathcal{X}_t$, we have
\begin{equation}
\tilde{q}_t(\tilde{x}_t,z_t,\tau;a^*) = Q^t_{\tilde{x}_t,x_t|\mathbf{z}}\left( \tilde{q}_t(x_t,z_t,\tau;a^*) \right), \nonumber
\end{equation}
where $Q^t_{\tilde{x}_t,x_t|\mathbf{z}}(y)$ is defined in the proof of Proposition 1.
Then, we have
\begin{eqnarray}
G^t_{x_t|\mathbf{z}}\left( \tilde{q}_t(x_t,z_t,\tau;a^*) \right) &=& \int F_{Y_t|X_t,\mathbf{Z}}\left( \tilde{q}_t(\tilde{x}_t,z_t,\tau ;a^*)|\tilde{x},\mathbf{z} \right) dF_{X_t|Z_t}(\tilde{x}|\mathbf{z}) \nonumber \\
&=& P \left( Y_{it} \leq X_{it}'a^* + Z_{it}'\beta_t(a^*,\tau) | \mathbf{Z}_i=\mathbf{z} \right) = e_t(a^*,\tau, \mathbf{z}), \nonumber
\end{eqnarray}
where $G^t_{x_t|\mathbf{z}}(y)$ is defined in the proof of Proposition 1.
It follows from the proof of Proposition 1 that we have $\left( G^t_{x|z} \right)^{-1}(\tau) = q_t(x,z,\tau)$.
These results imply that we have
$$
\tilde{q}_t(x_t,z_t,\tau;a^*) = q_t(x_t,z_t,e_t(a^*,\tau,\mathbf{z})),
$$
that is, for all $t$, $x_t$, and $\mathbf{z}$, we have
$$
x_t'a^* + z_t'\beta_t(a^*,\tau) = x_t'\alpha(e_t(a^*,\tau,\mathbf{z})) + z_t'\beta_t(e_t(a^*,\tau,\mathbf{z})).
$$
By Assumption 3 (i), this implies that $a^* = \alpha(e_t(a^*,\tau,\mathbf{z}))$ holds for all $\mathbf{z} \in \mathcal{Z}$.
Hence, it follows from Assumption 4 that we have $a^* = \alpha(\tau)$.
\end{proof}\vspace{0.2in}

Let $\tilde{\mu}$ be a product measure $\mu \times \mu_T$, where $\mu$ and $\mu_T$ are uniform measures on $\mathcal{V}$ and $\{1, \cdots, T\}$, respectively.
Let $\|\cdot\|_{\tilde{\mu}}$ denote the $L_2$-norm with respect to $\tilde{\mu}$.
Then, we have $\|D^t(v;a,b)\|_{\tilde{\mu}}^2 = \frac{1}{T} \sum_{t=1}^T \|D^t(v;a,b)\|_{L_2}^2$ and $\|\hat{D}^t_n(v;a,b)\|_{\tilde{\mu}}^2 = \frac{1}{T} \sum_{t=1}^T \|\hat{D}^t_n(v;a,b)\|_{L_2}^2$.

\begin{proof}[Proof of Theorem 2]
For $b:\mathcal{A} \times (0,1) \mapsto \mathbb{R}^{d_Z}$, we define
\begin{eqnarray}
M^t(v;a,b,\tau) & \equiv & D^t(v;a,b(a,\tau)), \nonumber \\
M_n^t(v;a,b,\tau) & \equiv & \hat{D}^t_n(v;a,b(a,\tau)). \nonumber 
\end{eqnarray}

First, we prove $\sqrt{n}$-consistency of $\hat{\alpha}(\tau)$.
Because it follows from Lemmas 2 and 6 that we have $\sup_{\tau \in \mathcal{T}} \|\hat{\alpha}(\tau)-\alpha(\tau)\| \rightarrow_p 0$ and $\sup_{a \in \mathcal{A}, \, \tau \in \mathcal{T}} \|\tilde{\beta}(a,\tau)-\beta(a,\tau)\| \rightarrow_p 0$, we choose a positive sequence $\delta_n = o(1)$ such that $P(\|\hat{\alpha}-\alpha\|_{\infty} \geq \delta_n, \, \|\tilde{\beta}-\beta\|_{\infty} \geq \delta_n) \rightarrow 0$, where we define
\begin{eqnarray}
\|\hat{\alpha}-\alpha\|_{\infty} &\equiv & \sup_{\tau \in \mathcal{T}} \|\hat{\alpha}(\tau)-\alpha(\tau)\|, \nonumber \\
\|\tilde{\beta}-\beta\|_{\infty} &\equiv & \sup_{a \in \mathcal{A}, \, \tau \in \mathcal{T}} \|\tilde{\beta}(a,\tau)-\beta(a,\tau)\|. \nonumber
\end{eqnarray}
It follows from Lemma 5 that uniformly in $\tau$ we have
\begin{eqnarray}
\| M^t(v;\hat{\alpha}(\tau),\beta,\tau) \|_{\tilde{\mu}} + o_p(\delta_n) &\geq & \| \Gamma_1^t(v;\tau)'(\hat{\alpha}(\tau)-\alpha(\tau))\|_{\tilde{\mu}}. \nonumber 
\end{eqnarray}
By Assumption 6 (v), uniformly in $\tau$ we obtain
\begin{equation}
 \| M^t(v;\hat{\alpha}(\tau),\beta,\tau) \|_{\tilde{\mu}} \geq (c-o_p(\delta_n)) \times \|\hat{\alpha}(\tau) - \alpha(\tau)\|. \label{Thm 2 (1)}
\end{equation}
Because it follows from Lemma 3 that we have $M^t_n(v;\alpha(\tau),\beta,\tau) = O_p(n^{-1/2})$ uniformly in $v$ and $\tau$, $\| M^t(v;\hat{\alpha}(\tau),\beta,\tau) \|_{\tilde{\mu}}$ is bounded above by
\begin{eqnarray}
& & \| M^t(v;\hat{\alpha}(\tau),\beta,\tau) - M^t(v;\hat{\alpha}(\tau),\tilde{\beta},\tau) \|_{\tilde{\mu}} \nonumber \\
&+& \| M^t(v;\hat{\alpha}(\tau),\tilde{\beta},\tau) - M^t_n(v;\hat{\alpha}(\tau),\tilde{\beta},\tau) + M^t_n(v;\alpha(\tau),\beta,\tau) \|_{\tilde{\mu}} \nonumber \\
&+& \| M^t_n(v;\hat{\alpha}(\tau),\tilde{\beta},\tau) \|_{\tilde{\mu}} + O_p(n^{-1/2}), \nonumber
\end{eqnarray}
where $O_p(n^{-1/2})$ is uniform with respect to $\tau \in \mathcal{T}$.
Because a class of functions $\{r_{\tau}(w;a,b):a\in \mathcal{A}, b \in \mathcal{B}, \tau \in \mathcal{T} \}$ is Donsker, it follows from Lemmas 5 and 7 that uniformly in $\tau$ we have
\begin{eqnarray}
& & \| M^t(v;\hat{\alpha}(\tau),\beta,\tau) - M^t(v;\hat{\alpha}(\tau),\tilde{\beta},\tau) \|_{\tilde{\mu}} \nonumber \\
&\leq & \left\| M^t(v;\hat{\alpha}(\tau),\tilde{\beta},\tau) - M^t(v;\hat{\alpha}(\tau),\beta,\tau) - \Gamma_2^t(v,\hat{\alpha}(\tau),\tau)'[\tilde{\beta}(\hat{\alpha}(\tau),\tau) - \beta(\hat{\alpha}(\tau),\tau)] \right\|_{\tilde{\mu}} \nonumber \\
& & + \left\| \left( \Gamma_2^t(v,\hat{\alpha}(\tau),\tau) - \Gamma_2^t(v,\alpha(\tau),\tau) \right)'[\tilde{\beta}(\hat{\alpha}(\tau),\tau) - \beta(\hat{\alpha}(\tau),\tau)] \right\|_{\tilde{\mu}} \nonumber \\
& & + \left\| \Gamma_2^t(v,\alpha(\tau),\tau)'\left( [\tilde{\beta}(\hat{\alpha}(\tau),\tau) - \beta(\hat{\alpha}(\tau),\tau)] - [\tilde{\beta}(\alpha(\tau),\tau) - \beta(\alpha(\tau),\tau)] \right) \right\|_{\tilde{\mu}} \nonumber \\
& & + \left\| \Gamma_2^t(v,\alpha(\tau),\tau)' [\tilde{\beta}(\alpha(\tau),\tau) - \beta(\alpha(\tau),\tau)] \right\|_{\tilde{\mu}} \nonumber \\
& \leq & o_p(\delta_n) + O_p(n^{-1/2}) = O_p(n^{-1/2}). \label{Thm 2 (2)}
\end{eqnarray}
From Lemma 3, we obtain 
$$
\| M^t(v;\hat{\alpha}(\tau),\tilde{\beta},\tau) - M^t_n(v;\hat{\alpha}(\tau),\tilde{\beta},\tau) + M^t_n(v;\alpha(\tau),\beta,\tau) \|_{\tilde{\mu}} = o_p(n^{-1/2}).
$$
Hence, it follows from (\ref{Thm 2 (1)}) and (\ref{Thm 2 (2)}) that uniformly in $\tau$ we have
\begin{equation}
(c-o_p(1)) \times \|\hat{\alpha}(\tau) - \alpha(\tau)\| \leq O_p(n^{-1/2}) +  \| M^t_n(v;\hat{\alpha}(\tau),\tilde{\beta},\tau) \|_{\tilde{\mu}}. \label{Thm 2 (3)}
\end{equation}
By definition of $\hat{\alpha}(\tau)$, uniformly in $\tau$ we obtain
\begin{eqnarray}
& & \| M^t_n(v;\hat{\alpha}(\tau),\tilde{\beta},\tau) \|_{\tilde{\mu}} \nonumber \\
& \leq & \| M^t_n(v;\alpha(\tau),\tilde{\beta},\tau) \|_{\tilde{\mu}} \nonumber \\
& \leq & \left\| M^t_n(v;\alpha(\tau),\tilde{\beta},\tau) - M^t(v;\alpha(\tau),\tilde{\beta},\tau) - M^t_n(v;\alpha(\tau),\beta,\tau) \right\|_{\tilde{\mu}} \nonumber \\
& & + \left\| M^t(v;\alpha(\tau),\tilde{\beta},\tau) - \Gamma_2^t(v;\tau)'[\tilde{\beta}(\alpha(\tau),\tau) - \beta(\alpha(\tau),\tau)]  \right\|_{\tilde{\mu}} \nonumber \\
& & + \left\| \Gamma_2^t(v;\tau)'[\tilde{\beta}(\alpha(\tau),\tau) - \beta(\alpha(\tau),\tau)]  \right\|_{\tilde{\mu}} + \left\| M^t_n(v;\alpha(\tau),\beta,\tau) \right\|_{\tilde{\mu}} \nonumber \\
& \leq & o_p(n^{-1/2}) + o_p(\delta_n) + O_p(n^{-1/2}) = O_p(n^{-1/2}). \nonumber
\end{eqnarray}
As a result, from (\ref{Thm 2 (3)}), we obtain
$\| \hat{\alpha}(\tau) - \alpha(\tau) \| \leq O_p(n^{-1/2})$ uniformly in $\tau$.

Next we show (\ref{Asymptotic Normality}) by approximating $M^t_n(v;a,\tilde{\beta},\tau)$ as
\begin{eqnarray}
L^t_n(v;a,\tau) &\equiv & M_n^t(v;\alpha(\tau),\beta,\tau) + \Gamma_1^t(v;\tau)'(a-\alpha(\tau)) \nonumber \\
& & \hspace{0.7in} + \Gamma_2^t(v;\tau)' [\tilde{\beta}(\alpha(\tau),\tau) - \beta(\alpha(\tau),\tau)]. \nonumber
\end{eqnarray}
Let $\bar{\alpha}(\tau)$ be the value that provides a global minimum for $\|L^t_n(v;a,\tau)\|_{\tilde{\mu}}$.
Then, $\Gamma_1^t(v;\tau)'(\bar{\alpha}(\tau)-\alpha(\tau))$ is the $L_2(\tilde{\mu})$-projection of $-M_n^t(v;\alpha(\tau),\beta,\tau) - \Gamma_2^t(v;\tau)' [\tilde{\beta}(\alpha(\tau),\tau) - \beta(\alpha(\tau),\tau)]$ onto the subspace of $L_2(\mu)$ spanned by $\Gamma_1^t(v;\tau)$.
Hence, we obtain
\begin{eqnarray}
& & \sqrt{n} (\bar{\alpha}(\tau)-\alpha(\tau)) \nonumber \\
&=& - \Delta_1(\tau)^{-1} \sqrt{n} \int \Gamma_1^t(v;\tau) \Big\{ M_n^t(v;\alpha(\tau),\beta,\tau) \nonumber \\
& & \hspace{1.5in} + \Gamma_2^t(v;\tau)' [\tilde{\beta}(\alpha(\tau),\tau) - \beta(\alpha(\tau),\tau)] \Big\} d\tilde{\mu}(v,t), \nonumber
\end{eqnarray}
where $\Delta_1(\tau) \equiv \frac{1}{T} \sum_{t=1}^T \int_{\mathcal{V}} \Gamma_1^t(v;\tau)\Gamma_1^t(v;\tau)'  dv$ is nonsingular from Assumption 6 (v).
We observe that
\begin{eqnarray}
& &\sqrt{n} \int \Gamma_1^t(v;\tau)  M_n^t(v;\alpha(\tau),\beta,\tau) d\tilde{\mu}(v,t) \nonumber \\
&=& \frac{1}{\sqrt{n} T} \sum_{i=1}^n \sum_{t=1}^T \int_{\mathcal{V}} \Gamma_1^t(v;\tau) g_t(\mathbf{W}_i;\alpha(\tau),\beta(\tau),v)  dv. \nonumber
\end{eqnarray}
Define $Y_{it}(\tau) \equiv X_{it}'\alpha(\tau)+Z_{it}'\beta_t(\tau)$.
Because $\sum_{t=1}^T \Gamma_1^t(v;\tau) = 0$, we have
\begin{eqnarray}
& & \frac{1}{T} \sum_{t=1}^T \int_{\mathcal{V}} \Gamma_1^t(v;\tau) g_t(\mathbf{W}_i;\alpha(\tau),\beta(\tau),v) dv \nonumber \\
&=& \frac{1}{T} \sum_{t=1}^T \int_{\mathcal{V}} \Gamma_1^t(v;\tau) \omega(\mathbf{X}_i,\mathbf{Z}_i,v) dv \mathbf{1}\{Y_{it} \leq Y_{it}(\tau)\}  \nonumber \\
& & - \frac{1}{T^2} \sum_{t=1}^T \sum_{s=1}^T \int_{\mathcal{V}} \Gamma_1^t(v;\tau) \omega(\mathbf{X}_i,\mathbf{Z}_i,v) dv \mathbf{1}\{Y_{is} \leq Y_{is}(\tau)\} \nonumber \\
&=& \frac{1}{T} \sum_{t=1}^T \int_{\mathcal{V}} \Gamma_1^t(v;\tau) \omega(\mathbf{X}_i,\mathbf{Z}_i,v) dv \mathbf{1}\{Y_{it} \leq Y_{it}(\tau)\}. \nonumber
\end{eqnarray}
It follows from Lemma 7 that we have $\sqrt{n}(\tilde{\beta}(\alpha(\tau),\tau)-\beta(\alpha(\tau),\tau)) = - \frac{1}{\sqrt{n}} \sum_{i=1}^n l(\mathbf{W}_i;\tau) + o_p(1)$ uniformly in $\tau$.
Hence, uniformly in $\tau$ we obtain
\begin{eqnarray}
& & \sqrt{n} \int \Gamma_1^t(v;\tau) \Gamma_2^t(v;\tau)' [\tilde{\beta}(\alpha(\tau),\tau) - \beta(\alpha(\tau),\tau)] d\tilde{\mu}(v,t) \nonumber \\
&=& -\frac{1}{\sqrt{n}T} \sum_{i=1}^n \sum_{t=1}^T \left[ \int_{\mathcal{V}} \Gamma_1^t(v;\tau) \Gamma_2^t(v;\tau)' dv \right] l(\mathbf{W}_i;\tau) + o_p(1). \nonumber 
\end{eqnarray}
This implies that we have
\begin{eqnarray}
\sqrt{n}(\bar{\alpha}(\tau)-\alpha(\tau)) &=& -  \frac{1}{\sqrt{n}} \sum_{i=1}^n \Delta_1(\tau)^{-1} \left\{ \xi(\mathbf{W}_i;\tau) - \Delta_{12}(\tau) l(\mathbf{W}_i;\tau) \right\}  + o_p(1), \nonumber
\end{eqnarray}
where $o_p(1)$ is uniform with respect to $\tau$.
Hence, it is sufficient to show that $\|\bar{\alpha}(\tau)-\hat{\alpha}(\tau)\|=o_p(n^{-1/2})$ uniformly in $\tau$.

Because we have $\| \hat{\alpha} - \alpha \|_{\infty} = O_p(n^{-1/2})$, uniformly in $\tau$ we obtain
\begin{eqnarray}
& & \| M^t_n(v;\hat{\alpha}(\tau),\tilde{\beta},\tau) - L^t_n(v;\hat{\alpha}(\tau),\tau)  \|_{\tilde{\mu}} \nonumber \\
& \leq & \| M^t(v;\hat{\alpha}(\tau),\tilde{\beta},\tau) - M^t(v;\hat{\alpha}(\tau),\beta,\tau) - \Gamma_2^t(v;\hat{\alpha}(\tau),\tau)'[\tilde{\beta}(\hat{\alpha}(\tau),\tau) - \beta(\hat{\alpha}(\tau),\tau)] \|_{\tilde{\mu}} \nonumber \\
& & + \left\| \Gamma_2^t(v;\hat{\alpha}(\tau),\tau)'\left\{ [\tilde{\beta}(\hat{\alpha}(\tau),\tau) - \beta(\hat{\alpha}(\tau),\tau)] - [\tilde{\beta}(\alpha(\tau),\tau) - \beta(\alpha(\tau),\tau)] \right\} \right\|_{\tilde{\mu}} \nonumber \\
& & + \left\| \left\{ \Gamma_2^t(v;\hat{\alpha}(\tau),\tau) - \Gamma_2^t(v;\alpha(\tau),\tau)  \right\}'  [\tilde{\beta}(\alpha(\tau),\tau) - \beta(\alpha(\tau),\tau)] \right\|_{\tilde{\mu}} \nonumber \\
& & + \| M^t(v;\hat{\alpha}(\tau),\beta,\tau) - \Gamma_1^t(v;\tau)'(\hat{\alpha}(\tau)-\alpha(\tau)) \|_{\tilde{\mu}} \nonumber \\
& & + \| M_n^t(v;\hat{\alpha}(\tau),\tilde{\beta},\tau) - M^t(v;\hat{\alpha}(\tau),\tilde{\beta},\tau) - M_n^t(v;\alpha(\tau),\beta,\tau) \|_{\tilde{\mu}} \nonumber \\
&\leq & o_p(\|\tilde{\beta}-\beta\|_{\infty}) + o_p(\|\hat{\alpha} - \alpha \|_{\infty}) + o_p(n^{-1/2}) = o_p(n^{-1/2}). \label{Thm 2 (4)}
\end{eqnarray}
Similarly, uniformly in $\tau$ we obtain
\begin{equation}
\| M^t_n(v;\bar{\alpha}(\tau),\tilde{\beta},\tau) - L^t_n(v;\bar{\alpha}(\tau),\tau)  \|_{\tilde{\mu}} = o_p(n^{-1/2}). \label{Thm 2 (5)}
\end{equation}
Hence, it follows from (\ref{Thm 2 (4)}) and (\ref{Thm 2 (5)}) that uniformly in $\tau$ we have
\begin{eqnarray}
\| L^t_n(v;\hat{\alpha}(\tau),\tau) \|_{\tilde{\mu}} - o_p(n^{-1/2}) &\leq & \| M^t_n(v;\hat{\alpha}(\tau),\tilde{\beta},\tau)\|_{\tilde{\mu}}  \nonumber \\
& \leq & \| M^t_n(v;\bar{\alpha}(\tau),\tilde{\beta},\tau) \|_{\tilde{\mu}} \nonumber \\
& \leq & \| L^t_n(v;\bar{\alpha}(\tau),\tau)  \|_{\tilde{\mu}} + o_p(n^{-1/2}). \nonumber
\end{eqnarray}
By definition of $\bar{\alpha}(\tau)$, we have $\| L^t_n(v;\hat{\alpha}(\tau),\tau) \|_{\tilde{\mu}} \geq \| L^t_n(v;\bar{\alpha}(\tau),\tau)\|_{\tilde{\mu}}$.
Because it follows from Lemma 1 that we have $\| L^t_n(v;\bar{\alpha}(\tau),\tau)\|_{\tilde{\mu}} = O_p(n^{-1/2})$, uniformly in $\tau$ we have
\begin{eqnarray}
\| L^t_n(v;\hat{\alpha}(\tau),\tau) \|_{\tilde{\mu}}^2 = \| L^t_n(v;\bar{\alpha}(\tau),\tau)\|_{\tilde{\mu}}^2 + o_p(n^{-1}). \nonumber
\end{eqnarray}
Because $L^t_n(v;\bar{\alpha}(\tau),\tau)$ is orthogonal to $\Gamma_1^t(v;\tau)$, we obtain
\begin{eqnarray}
\| L^t_n(v;\hat{\alpha}(\tau),\tau)\|_{\tilde{\mu}}^2 &=& \| L^t_n(v;\bar{\alpha}(\tau),\tau) + \Gamma_1^t(v;\tau)'(\hat{\alpha}(\tau)-\bar{\alpha}(\tau))\|_{\tilde{\mu}}^2 \nonumber \\
&=& \| L^t_n(v;\bar{\alpha}(\tau),\tau) \|_{\tilde{\mu}}^2 + \| \Gamma_1^t(v;\tau)'(\hat{\alpha}(\tau)-\bar{\alpha}(\tau))\|_{\tilde{\mu}}^2. \nonumber
\end{eqnarray}
Hence, we have $\| \hat{\alpha}(\tau)-\bar{\alpha}(\tau)\| = o_p(n^{-1/2})$ uniformly in $\tau$ and we obtain (\ref{Asymptotic Normality}).

Finally, we show (\ref{Asymptotic Normality first step estimator}).
From the proof of Lemma 7, uniformly in $\tau$ we have
\begin{eqnarray}
o_p(1) &=& \mathbb{G}_n r_{\tau}(W_t;\alpha(\tau),\beta_t(\tau)) + o_p(1) + \sqrt{n} E r_{\tau}(W_t;\hat{\alpha}(\tau),\hat{\beta}_t(\hat{\alpha}(\tau),\tau)) \nonumber \\
&=& \mathbb{G}_n r_{\tau}(W_t;\alpha(\tau),\beta_t(\tau)) + o_p(1) + (J_t^a(\tau)+o_p(1)) \sqrt{n}(\hat{\alpha}(\tau)-\alpha(\tau)) \nonumber \\
& & + (J_t^b(\tau)+o_p(1)) \sqrt{n}(\hat{\beta}(\tau)-\beta(\tau)). \nonumber 
\end{eqnarray}
Hence, we obtain
\begin{eqnarray}
\sqrt{n}(\hat{\beta}(\tau)-\beta(\tau)) &=& - \frac{1}{\sqrt{n}} \sum_{i=1}^n \left\{ J_t^b(\tau)^{-1} r_{\tau}(W_t;\alpha(\tau),\beta_t(\tau)) - \mathbb{A}(\mathbf{W}_i ; \tau) \right\} + o_p(1), \nonumber
\end{eqnarray}
where $o_p(1)$ is uniform with respect to $\tau$.
\end{proof}\vspace{0.2in}

\begin{proof}[Proof of Corollary 1]
Similar to the proof of Lemma 1, $\{\mathbf{w} \mapsto \mathbb{A}(\mathbf{w}; \tau) : \tau \in \mathcal{T} \}$ is Donsker. Hence, from Theorem 2, $\sqrt{n} (\hat{\alpha}(\cdot) - \alpha(\cdot) )$ converges in distribution to a zero mean Gaussian process with covariance function $E[\mathbb{A}(\mathbf{W}_i; \tau) \mathbb{A}(\mathbf{W}_i; \tau')']$.
\end{proof}\vspace{0.2in}

\begin{proof}[Proof of Theorem 3]
We define
\begin{eqnarray*}
\hat{D}_{n*}^{t}(v;a,b) & \equiv & \frac{1}{n} \sum_{i=1}^n g_t\left( \mathbf{W}_{i}^*; a, b,v \right), \\
M_{n*}^t(v;a,b,\tau) &\equiv & \hat{D}_{n*}^{t}(v;a,b(a,\tau)), \\
\psi_{n*}^{t}(v;a,b,\tau) & \equiv & \sqrt{n} \left( M_{n*}^t(v;a,b,\tau) - M_{n}^t(v;a,b,\tau) \right), \\
\psi_n^{t}(v;a,b,\tau) & \equiv & \sqrt{n} \left( M_{n}^t(v;a,b,\tau) - M^t(v;a,b,\tau) \right).
\end{eqnarray*}
In addition, we define $\tilde{\beta}_t^*(a,\tau)$ and $\tilde{\beta}^*(a,\tau)$ as the bootstrap counterparts to $\tilde{\beta}_t(a,\tau)$ and $\tilde{\beta}(a,\tau)$.
From Lemma 3 and Lemma 1 in \cite{chen2003estimation}, we have for all positive sequence $\delta_n = o(1)$,
\begin{eqnarray*}
& & \sup_{ \substack{ \|\tilde{a}-a\| \leq \delta_n, \|\tilde{b}-b\| \leq \delta_n, \\ v \in \mathcal{V}} } \left| \psi_{n*}^{t}(v;\tilde{a},\tilde{b}) - \psi_{n*}^{t}(v;a,b) \right| = o_{p^*}(1) \ \ a.s. \\
& & \sup_{ \substack{ \|\tilde{a}-a\| \leq \delta_n, \|\tilde{b}-b\| \leq \delta_n, \\ v \in \mathcal{V}} } \left| \psi_{n}^{t}(v;\tilde{a},\tilde{b}) - \psi_{n}^{t}(v;a,b) \right| = o_{a.s.}(1).
\end{eqnarray*}
Here, and subsequently, superscript $\ast$ denotes a probability or moment computed under the bootstrap distribution conditional on $\{\mathbf{W}_i\}_{i=1}^n$.
From Lemmas 8 and 9, by similar arguments given in the proof of Theorem 2, we have $\|\hat{\alpha}^*(\tau) - \hat{\alpha}(\tau)\| = O_{p^*}(n^{-1/2})$ uniformly in $\tau$.

Next we approximate $M_{n*}^t(v;a,\tilde{\beta}^*,\tau) - M_{n}^t(v;\hat{\alpha}(\tau),\tilde{\beta},\tau)$ with error $o_{p^*}(n^{-1/2})$ by the linear function $\mathcal{L}_{n*}^t(v;a,\tau)$ for $a$ in a root-$n$ neighborhood of $\hat{\alpha}(\tau)$, where
\begin{eqnarray*}
\mathcal{L}_{n*}^t(v;a,\tau) &\equiv & M_{n*}^t(v;\alpha(\tau),\beta,\tau) - M_{n}^t(v;\alpha(\tau),\beta,\tau) \\ 
& & \hspace{0.1in} + \Gamma_1^t(v; \tau)'(a-\hat{\alpha}(\tau)) + \Gamma_2^t(v;\tau)' \left[ \tilde{\beta}^*(\alpha(\tau),\tau) - \tilde{\beta}(\alpha(\tau),\tau) \right].
\end{eqnarray*}
Then, we observe that
\begin{eqnarray*}
& & \left\| M_{n*}^t(v;\hat{\alpha}^*(\tau),\tilde{\beta}^*,\tau) - M_{n}^t(v;\hat{\alpha}(\tau),\tilde{\beta},\tau) - \mathcal{L}_{n*}^t(v;\hat{\alpha}^*(\tau),\tau) \right\|_{\tilde{\mu}} \\
& \leq & \left\| n^{-1/2} \left\{ \psi_{n*}^t \left(v;\hat{\alpha}^*(\tau),\tilde{\beta}^*,\tau \right) - \psi_{n*}^t \left(v;\alpha(\tau),\beta,\tau \right) \right\} \right\|_{\tilde{\mu}} \\
& & + \left\| n^{-1/2} \left\{ \psi_{n}^t \left(v;\hat{\alpha}^*(\tau),\tilde{\beta}^*,\tau \right) - \psi_{n}^t \left(v;\hat{\alpha}(\tau), \tilde{\beta},\tau \right) \right\} \right\|_{\tilde{\mu}} \\
& & + \left\| M^t\left(v;\hat{\alpha}^*(\tau),\tilde{\beta}^*,\tau \right) - M^t\left(v;\hat{\alpha}(\tau), \tilde{\beta},\tau \right) \right. \\
& & \hspace{0.5in} \left. - \Gamma_1^t(v; \tau)'(\hat{\alpha}^*(\tau)-\hat{\alpha}(\tau)) - \Gamma_2^t(v;\tau)' \left[ \tilde{\beta}^*(\alpha(\tau),\tau) - \tilde{\beta}(\alpha(\tau),\tau) \right] \right\|_{\tilde{\mu}} \\
& \leq & \left\| M^t\left(v;\hat{\alpha}^*(\tau),\tilde{\beta}^*,\tau \right) - M^t\left(v;\hat{\alpha}(\tau),\tilde{\beta}^*,\tau \right) - \Gamma_1^t(v; \tau)'(\hat{\alpha}^*(\tau)-\hat{\alpha}(\tau)) \right\|_{\tilde{\mu}} \\
& & \hspace{0.5in} + \left\| M^t\left(v;\hat{\alpha}(\tau),\tilde{\beta}^*,\tau \right) - M^t\left(v;\hat{\alpha}(\tau),\tilde{\beta},\tau \right) \right. \\
& & \hspace{1.5in} \left. - \Gamma_2^t(v;\tau)' \left[ \tilde{\beta}^*(\alpha(\tau),\tau) - \tilde{\beta}(\alpha(\tau),\tau) \right] \right\|_{\tilde{\mu}} + o_{p^*}(n^{-1/2}).
\end{eqnarray*}
Because $\{w_t \mapsto r_{\tau}(w_t;a,b_t) : a \in \mathcal{A}, b_t \in \mathcal{B}_t \}$ is a Donsker class, it follows from Lemmas 5 and 9 that we have
\begin{eqnarray*}
& & \left\| M^t\left(v;\hat{\alpha}(\tau),\tilde{\beta}^*,\tau \right) - M^t\left(v;\hat{\alpha}(\tau),\tilde{\beta},\tau \right) - \Gamma_2^t(v;\tau)' \left[ \tilde{\beta}^*(\alpha(\tau),\tau) - \tilde{\beta}(\alpha(\tau),\tau) \right] \right\|_{\tilde{\mu}} \\
& \leq & \left\| \Gamma_2^t(v;\tau)' \left\{ \left[ \tilde{\beta}^*(\alpha(\tau),\tau) - \tilde{\beta}(\alpha(\tau),\tau) \right] - \left[ \tilde{\beta}^*(\hat{\alpha}(\tau),\tau) - \tilde{\beta}(\hat{\alpha}(\tau),\tau) \right] \right\} \right\|_{\tilde{\mu}} \\
& & \hspace{0.3in} + \left\| \left( \Gamma_2^t(v;\hat{\alpha}(\tau),\tau) - \Gamma_2^t(v;\tau) \right)' \left[ \tilde{\beta}^*(\hat{\alpha}(\tau),\tau) - \tilde{\beta}(\hat{\alpha}(\tau),\tau) \right] \right\|_{\tilde{\mu}} \\
& & \hspace{0.3in} + \left\| M^t\left(v;\hat{\alpha}(\tau),\tilde{\beta}^*,\tau \right) - M^t\left(v;\hat{\alpha}(\tau),\tilde{\beta},\tau \right) \right. \\
& & \hspace{1.5in} \left. - \Gamma_2^t(v;\hat{\alpha}(\tau),\tau)' \left[ \tilde{\beta}^*(\hat{\alpha}(\tau),\tau) - \tilde{\beta}(\hat{\alpha}(\tau),\tau) \right] \right\|_{\tilde{\mu}} \ = \ o_{p^*}\left( n^{-1/2} \right). 
\end{eqnarray*}
Similarly, we obtain
\begin{eqnarray*}
& & \left\| M^t\left(v;\hat{\alpha}^*(\tau),\tilde{\beta}^*,\tau \right) - M^t\left(v;\hat{\alpha}(\tau),\tilde{\beta}^*,\tau \right) - \Gamma_1^t(v; \tau)'(\hat{\alpha}^*(\tau)-\hat{\alpha}(\tau)) \right\|_{\tilde{\mu}} \\
& \leq & \left\| M^t\left(v;\hat{\alpha}^*(\tau),\beta,\tau \right) - M^t\left(v;\hat{\alpha}(\tau),\beta ,\tau \right) - \Gamma_1^t(v; \tau)'(\hat{\alpha}^*(\tau)-\hat{\alpha}(\tau)) \right\|_{\tilde{\mu}} \\
& & \hspace{0.3in} + \left\| M^t\left(v;\hat{\alpha}^*(\tau),\tilde{\beta}^*,\tau \right) - M^t\left(v;\hat{\alpha}^*(\tau),\beta,\tau \right) \right. \\
& & \hspace{1.5in} \left. - \Gamma_2^t(v;\tau)' \left[ \tilde{\beta}^*(\alpha(\tau),\tau) - \beta(\alpha(\tau),\tau) \right] \right\|_{\tilde{\mu}} \\
& & \hspace{0.3in} + \left\| M^t\left(v;\hat{\alpha}(\tau),\tilde{\beta}^*,\tau \right) - M^t\left(v;\hat{\alpha}(\tau),\beta,\tau \right) \right. \\
& & \hspace{1.5in} \left. - \Gamma_2^t(v;\tau)' \left[ \tilde{\beta}^*(\alpha(\tau),\tau) - \beta(\alpha(\tau),\tau) \right] \right\|_{\tilde{\mu}} \ = \ o_{p^*}(n^{-1/2}).
\end{eqnarray*}
Hence, we obtain 
\[
\left\| M_{n*}^t(v;\hat{\alpha}^*(\tau),\tilde{\beta}^*,\tau) - M_{n}^t(v;\hat{\alpha}(\tau),\tilde{\beta},\tau) - \mathcal{L}_{n*}^t(v;\hat{\alpha}^*(\tau),\tau) \right\|_{\tilde{\mu}} = o_{p^*}(n^{-1/2}).
\]
Similarly, we obtain
\[
\left\| M_{n*}^t(v;\bar{\alpha}^*(\tau),\tilde{\beta}^*,\tau) - M_{n}^t(v;\hat{\alpha}(\tau),\tilde{\beta},\tau) - \mathcal{L}_{n*}^t(v;\bar{\alpha}^*(\tau),\tau) \right\|_{\tilde{\mu}} = o_{p^*}(n^{-1/2}),
\]
where $\bar{\alpha}^*(\tau)$ is the minimizer of $\mathcal{L}_{n*}^t(v;a,\tau)$.
Therefore, it follows from Lemma 9 that we have
\[
\sqrt{n} \left( \hat{\alpha}^*(\tau) - \hat{\alpha}(\tau) \right) \ = \ - \frac{1}{\sqrt{n}} \sum_{i=1}^n \mathbb{A}(\mathbf{W}_i^*;\tau) + o_{p^*}(1).
\]
By the bootstrap theorem for the mean, the term $n^{-1/2} \sum_{i=1}^n \mathbb{A}(\mathbf{W}_i^*;\tau)$ has the same distribution as $n^{-1/2} \sum_{i=1}^n \mathbb{A}(\mathbf{W}_i;\tau)$.
This concludes the proof.
\end{proof}

\section*{Appendix 2: Lemmas}

\begin{Lemma}
Under the assumptions of Theorem 2, we have
\begin{equation}
\sup_{a \in \mathcal{A}, \, b_t \in \mathcal{B}_t, \, \tau \in \mathcal{T}} \left| \frac{1}{n} \sum_{i=1}^n R_{\tau}(W_{it};a,b_t) - E[ R_{\tau}(W_{it};a,b_t) ] \right| = o_{a.s.}(1), \label{G-C R_t}
\end{equation}
\begin{equation}
\sup_{\substack{
a \in \mathcal{A}, \, b \in \mathcal{B}, \,  v \in \mathcal{V}} } \left| \hat{D}^t_n(v;a,b) - D^t(v;a,b) \right| = o_{a.s.}(1). \label{G-C D_t}
\end{equation}
\end{Lemma}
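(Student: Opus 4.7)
The plan is to establish both statements as uniform laws of large numbers by exhibiting each indexing family as a Glivenko--Cantelli class of measurable functions with an integrable envelope. Throughout I rely on Assumption 5: $\mathcal{A}$, $\mathcal{B}_t$, $\mathcal{T}$ are compact, $\mathcal{V}=[-0.5,0.5]^{d_X T + d_Z}$ is bounded, $\mathcal{X}_{1:T}$ and $\mathcal{Z}$ are bounded, $E|Y_{it}|<\infty$, and $\{\mathbf{W}_i\}$ is i.i.d. Once each class is shown to be Glivenko--Cantelli, a standard a.s.\ uniform LLN (e.g., van der Vaart and Wellner 1996, Theorem 2.4.1) yields the two conclusions.

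For (\ref{G-C R_t}) I consider $\mathcal{F}_1=\{w\mapsto R_\tau(w;a,b_t):a\in\mathcal{A},\,b_t\in\mathcal{B}_t,\,\tau\in\mathcal{T}\}$. The check function $\rho_\tau(u)=(\tau-\mathbf{1}\{u<0\})u$ is $1$-Lipschitz in $u$ uniformly in $\tau\in[0,1]$ and satisfies $|\rho_\tau(u)-\rho_{\tilde\tau}(u)|\le|u|\,|\tau-\tilde\tau|$. Since $u=Y_{it}-X_{it}'a-Z_{it}'b_t$ is affine in $(a,b_t)$ with coefficient vectors bounded on the supports of $X_{it},Z_{it}$, every $f\in\mathcal{F}_1$ is Lipschitz in the parameter $(a,b_t,\tau)$ with random constant dominated by some $C(1+|Y_{it}|)$. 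Thus $\mathcal{F}_1$ admits the envelope $F_1(w)=c(1+|y|)$ with $E F_1(W_{it})<\infty$, and the pointwise Lipschitz property over a compact Euclidean parameter set yields finite $L_1(P)$ bracketing numbers $N_{[\,]}(\varepsilon,\mathcal{F}_1,L_1(P))<\infty$ for every $\varepsilon>0$. Hence $\mathcal{F}_1$ is Glivenko--Cantelli.

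For (\ref{G-C D_t}) I decompose
\[
g_t(\mathbf{w};a,b,v)=\omega(\mathbf{x},\mathbf{z},v)\mathbf{1}\{y_t\le x_t'a+z_t'b_t\}-\frac{1}{T}\sum_{s=1}^T\omega(\mathbf{x},\mathbf{z},v)\mathbf{1}\{y_s\le x_s'a+z_s'b_s\}.
\]
Each indicator family $\{\mathbf{w}\mapsto\mathbf{1}\{y_s\le x_s'a+z_s'b_s\}:a,b_s\}$ is a VC-subgraph class, being the indicator of a half-space in the finite-dimensional space of $(y_s,x_s,z_s)$. The weight $\omega(\mathbf{x},\mathbf{z},v)=\exp(v_{\mathbf{x}}'\tilde{\mathbf{x}}+v_{\mathbf{z}}'\tilde{\mathbf{z}})$ is uniformly bounded above and jointly Lipschitz in $v$ on the bounded set $\mathcal{V}\times\mathcal{X}_{1:T}\times\mathcal{Z}$, so $\{(\mathbf{x},\mathbf{z})\mapsto\omega(\mathbf{x},\mathbf{z},v):v\in\mathcal{V}\}$ is itself Glivenko--Cantelli with a constant envelope. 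Products and finite sums of uniformly bounded Glivenko--Cantelli classes are Glivenko--Cantelli by standard permanence results, and the overall envelope is the finite constant $F_2\equiv 2\sup\omega$.

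No step presents a genuine obstacle: both families are essentially finite-dimensional parametric classes built from $L^1$-Lipschitz or VC-subgraph pieces, and the boundedness imposed by Assumption 5 delivers the required envelopes with no moment problems. The only mild subtlety is the permanence argument combining the VC-type indicator classes with the smooth weight family in (\ref{G-C D_t}); this is handled by the standard product/sum rules for Glivenko--Cantelli classes, after which the a.s.\ (rather than in-probability) conclusion follows from the bracketing-based form of the uniform LLN.
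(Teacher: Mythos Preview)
Your proposal is correct but takes a somewhat different route from the paper for (\ref{G-C R_t}). The paper decomposes $R_\tau(w;a,b_t)=(\tau-\mathbf{1}\{y-x'a-z'b_t<0\})(y-x'a-z'b_t)$ into two VC-subgraph pieces (the indicator of a half-space and an affine function) and applies the VC-permanence rules of Lemma 2.6.18 in van der Vaart and Wellner to conclude that $\{R_\tau(\cdot;a,b_t,\tau)\}$ is itself a VC-class, hence Donsker and a fortiori Glivenko--Cantelli almost surely. You instead exploit the parameter-Lipschitz structure of $\rho_\tau$ directly to build $L_1(P)$ brackets over the compact index set, which is more elementary and avoids any VC machinery; this gives Glivenko--Cantelli but not, as written, Donsker. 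For (\ref{G-C D_t}) the two arguments are essentially the same in spirit: both treat the indicator factor as a VC class and the weight $\omega$ as a smooth finite-dimensional family, and then appeal to a permanence result for the product---you cite GC-permanence for bounded classes, the paper cites VC-permanence (again Lemma 2.6.18) and concludes Donsker. The one practical consequence of the difference is that the paper later reuses ``from the proof of Lemma 1'' to invoke the \emph{Donsker} property of these classes in Lemma 3; your $L_1$-bracketing argument for $R_\tau$ delivers only Glivenko--Cantelli, so if you keep your route you would need to upgrade it (e.g., show the same VC decomposition, or check an $L_2$ bracketing entropy integral) when you reach that step.
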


\begin{proof}
Because two collections $\left\{ (y_t,x_t,z_t) \mapsto \mathbf{1}\{y_t -x_t'a - z_t' b_t \leq 0 \} : a \in \mathcal{A}, b_t \in \mathcal{B}_t \right\}$ and $\left\{ (y_t,x_t,z_t) \mapsto y_t -x_t'a - z_t' b_t  : a \in \mathcal{A}, b_t \in \mathcal{B}_t \right\}$ are VC-classes.
Hence, from Lemma 2.6.18 of \cite{van1996weak}, $\{R_{\tau}(\cdot;a,b_t): a \in \mathcal{A},b_t \in \mathcal{B}_t, \tau \in \mathcal{T}\}$ is also a VC-class.
This implies (\ref{G-C R_t}).

Because $\{(\mathbf{x},\mathbf{z})\mapsto v_{\mathbf{x}}'\mathbf{x} + v_{\mathbf{z}}' \mathbf{z} : (v_{\mathbf{x}}',v_{\mathbf{z}}')' \in \mathcal{V} \}$ is a VC-class, $\{(\mathbf{x},\mathbf{z})\mapsto \omega(\mathbf{x},\mathbf{z},v): v \in \mathcal{V} \}$ is also a VC-class from Lemma 2.6.18 of \cite{van1996weak}.
In addition, $\left\{ (y_t,x_t,z_t) \mapsto \mathbf{1}\{y_t -x_t'a - z_t' b_t \leq 0 \} : a \in \mathcal{A}, b_t \in \mathcal{B}_t \right\}$ is a VC-class.
Hence, $\{\mathbf{w} \mapsto g_t(\mathbf{w};a,b,v) : a \in \mathcal{A}, b \in \mathcal{B}, v \in \mathcal{V}\}$ is Donsker from Lemma 2.6.18 of \cite{van1996weak}.
This implies (\ref{G-C D_t}).
\end{proof}\vspace{0.15in}

\begin{Lemma}
Under the assumptions of Theorem 2, we have
\begin{equation}
\sup_{a \in \mathcal{A}, \, \tau \in \mathcal{T}} \left\| \tilde{\beta}_t(a,\tau) - \beta_t(a,\tau) \right\| = o_p(1). \label{unform LLN beta}
\end{equation}
\end{Lemma}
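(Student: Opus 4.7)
The plan is to apply a uniform argmin theorem, using Lemma 1 as the driving uniform convergence result and Assumption 5 to supply the identification and regularity conditions. Write $M_n^t(a,b_t,\tau) \equiv \frac{1}{n}\sum_{i=1}^n R_\tau(W_{it};a,b_t)$ and $M^t(a,b_t,\tau) \equiv E[R_\tau(W_{it};a,b_t)]$. Lemma 1 gives $\sup_{(a,b_t,\tau)} |M_n^t(a,b_t,\tau) - M^t(a,b_t,\tau)| = o_{a.s.}(1)$ on $\mathcal{A}\times\mathcal{B}_t\times\mathcal{T}$, which is compact by Assumption 5 (ii).

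First I would verify the standing regularity of $M^t$: since $\rho_\tau$ is continuous and $|R_\tau(W_{it};a,b_t)|$ is bounded by an integrable envelope (using $E|Y_{it}| < \infty$ and the boundedness of $\mathcal{X}_{1:T},\mathcal{Z}$ from Assumption 5 (iii)), dominated convergence shows that $M^t(a,b_t,\tau)$ is jointly continuous in $(a,b_t,\tau)$. Combined with compactness of $\mathcal{B}_t$ and the uniqueness of the minimizer $\beta_t(a,\tau)$ postulated in Assumption 5 (iv), Berge's Maximum Theorem yields that $(a,\tau) \mapsto \beta_t(a,\tau)$ is continuous on $\mathcal{A}\times\mathcal{T}$.

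Next I would upgrade uniqueness to a uniform well-separated-minimum condition: for every $\varepsilon>0$,
\[
\eta(\varepsilon) \equiv \inf_{(a,\tau) \in \mathcal{A}\times\mathcal{T}} \inf_{\substack{b_t \in \mathcal{B}_t \\ \|b_t - \beta_t(a,\tau)\| \geq \varepsilon}} \bigl\{M^t(a,b_t,\tau) - M^t(a,\beta_t(a,\tau),\tau)\bigr\} > 0.
\]
This is the standard contradiction argument: if $\eta(\varepsilon) = 0$, a minimizing sequence $(a_k,\tau_k,b_{t,k})$ admits, by compactness, a convergent subsequence with limit $(a^*,\tau^*,b_t^*)$ satisfying $\|b_t^* - \beta_t(a^*,\tau^*)\| \geq \varepsilon$ and $M^t(a^*,b_t^*,\tau^*) = M^t(a^*,\beta_t(a^*,\tau^*),\tau^*)$, contradicting the uniqueness of the minimizer.

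Finally I would close the argument: on the event that $\sup |M_n^t - M^t| < \eta(\varepsilon)/2$, which occurs with probability tending to one by Lemma 1, the definition of $\tilde{\beta}_t(a,\tau)$ as a minimizer of $M_n^t(a,\cdot,\tau)$ forces
\[
M^t(a,\tilde{\beta}_t(a,\tau),\tau) - M^t(a,\beta_t(a,\tau),\tau) < \eta(\varepsilon),
\]
which by the well-separated-minimum bound prevents $\|\tilde{\beta}_t(a,\tau) - \beta_t(a,\tau)\| \geq \varepsilon$ for any $(a,\tau)$. Hence $\sup_{a,\tau}\|\tilde{\beta}_t(a,\tau) - \beta_t(a,\tau)\| = o_p(1)$. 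The main obstacle is the uniform separation step, because Assumption 5 (iv) only supplies pointwise uniqueness; the compactness of $\mathcal{A}\times\mathcal{T}\times\mathcal{B}_t$ together with joint continuity of $M^t$ is precisely what upgrades this to the uniform version needed for the argmin argument.
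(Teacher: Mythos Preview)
Your proposal is correct and follows essentially the same strategy as the paper: invoke the uniform LLN from Lemma 1 and a uniform well-separated-minimum property of $b_t \mapsto E[R_\tau(W_{it};a,b_t)]$, then run the standard argmin argument. The only minor difference is in how the well-separation is justified: the paper appeals to the Lipschitz property of $\rho_\tau$ together with the full-rank Hessian $J_t^b(a,\tau)$ (Assumption 5(vi)), whereas your compactness-plus-contradiction argument (via Berge's theorem for continuity of $\beta_t(a,\tau)$) is a slightly more self-contained route that does not need the Hessian condition for this lemma.
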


\begin{proof}
Lemma 1 implies that uniformly in $a$ and $\tau$, 
\begin{eqnarray}
E[R_{\tau}(W_{it};a,\tilde{\beta}_t(a,\tau))] &=& \frac{1}{n} \sum_{i=1}^n R_{\tau}(W_{it};a,\tilde{\beta}_t(a,\tau)) + o_{a.s.}(1) \nonumber \\
& \leq & \frac{1}{n} \sum_{i=1}^n R_{\tau}(W_{it};a,\beta_t(a,\tau)) + o_{a.s.}(1) \nonumber \\
&=& E[R_{\tau}(W_{it};a,\beta_t(a,\tau))] +o_{a.s.}(1). \nonumber
\end{eqnarray}
Pick any $\delta>0$.
Let $\{B_{\delta}^{t}(a,\tau):a \in \mathcal{A}, \tau \in \mathcal{T} \}$ be a collection of balls with diameter $\delta > 0$, each centered at $\beta_t(a,\tau)$.
Because $\rho_{\tau}(u)-\rho_{\tau}(u') \leq |u-u'|$, we have $E[R_{\tau}(W_{it};a,b_t)] - E[R_{\tau}(W_{it};\tilde{a},\tilde{b}_t)] \leq C \|(a',b_t')'-(\tilde{a}',\tilde{b}_t')'\|$.
Hence, the function $b_t \mapsto E[R_{\tau}(W_{it};a,b_t)]$ is continuous uniformly over $a \in \mathcal{A}$.
Because $\frac{\partial^2}{\partial b_t \partial b_t'}E[R_{\tau}(W_t;a,b_t)]|_{b_t=\beta_t(a,\tau)} = J_t(a,\tau)$, it follows from Assumption 5 (iv) that 
$$
\inf_{a \in \mathcal{A}, \, \tau \in \mathcal{T}} \left[ \inf_{b_t \in \mathcal{B}_t \setminus B_{\delta}^t(a,\tau)} E[R_{\tau}(W_{it};a,b_t)] - E[R_{\tau}(W_{it};a,\beta_t(a,\tau))] \right] > 0.
$$
Uniformly in $a \in \mathcal{A}$ and $\tau \in \mathcal{T}$, wp $\rightarrow$ 1 we have
\begin{eqnarray}
E[R_{\tau}(W_{it};a,\tilde{\beta}_t(a,\tau))] &<& \inf_{b_t \in \mathcal{B}_t \setminus B_{\delta}^t(a,\tau)} E[R_{\tau}(W_{it};a_t,b_t)]. \nonumber
\end{eqnarray}
Therefore, wp $\rightarrow$ 1 we have $\sup_{a \in\mathcal{A}, \tau \in \mathcal{T}}\| \tilde{\beta}_t(a,\tau) - \beta_t(a,\tau) \| \leq \delta$.
\end{proof}\vspace{0.15in}

\begin{Lemma}
Define $f(W_t;a,b_t,\tau) \equiv  \left( \tau - \mathbf{1}\{Y_t \leq X_t'a + Z_t'b_t\} \right)Z_t$.
Under the assumptions of Theorem 2, for any sequence of positive numbers $\{\delta_n\}$ that converges to zero, we have
\begin{equation}
\sup_{ \|\tilde{a}-a\| \leq \delta_n, \, \|\tilde{b}_t-b_t\| \leq \delta_n, \\ \tau \in \mathcal{T}}  \left| \mathbb{G}_n f^k_{\tilde{a},\tilde{b}_t, \tau} - \mathbb{G}_n f^k_{a,b_t, \tau} \right| = o_p(1), \label{SEC_f}
\end{equation}
\begin{equation}
\sup_{ \substack{ \|\tilde{a}-a\| \leq \delta_n, \|\tilde{b}-b\| \leq \delta_n, \\ v \in \mathcal{V}} }  \left| \sqrt{n}(\hat{D}_n^t(v;\tilde{a},\tilde{b}) -  D^t(v;\tilde{a},\tilde{b})) - \sqrt{n}(\hat{D}_n^t(v;a,b) -  D^t(v;a,b)) \right| = o_p(1), \label{SEC_D}
\end{equation}
where $f^k_{a,b_t,\tau}(w)$ is the $k$-th element of $f(w;a,b_t,\tau)$.
\end{Lemma}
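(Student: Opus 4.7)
The plan is to apply the standard asymptotic equicontinuity property of Donsker classes: if a class $\mathcal{F}$ is $P$-Donsker and $\rho_P(f_n, g_n) \to 0$ (with $\rho_P$ the $L^2(P)$ semimetric), then $|\mathbb{G}_n f_n - \mathbb{G}_n g_n| = o_p(1)$, uniformly over all index pairs constrained to lie within a $\rho_P$-ball of radius $o(1)$. Thus for each part I need (a) to verify that the relevant indexing class is Donsker, and (b) to show that the $L^2(P)$ modulus between any two indices differing by at most $\delta_n$ in parameter norm tends to zero.

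For (\ref{SEC_f}), I first observe that $f^k_{a,b_t,\tau}(w) = \tau z_t^k - \mathbf{1}\{y_t \leq x_t'a + z_t'b_t\} z_t^k$, so the $\tau$ term cancels in the difference $f^k_{\tilde a,\tilde b_t,\tau} - f^k_{a,b_t,\tau}$, reducing the problem to equicontinuity of an empirical process indexed only by $(a,b_t)$. The class $\{(y_t,x_t,z_t) \mapsto \mathbf{1}\{y_t \leq x_t'a+z_t'b_t\}\,z_t^k : a\in\mathcal{A},\,b_t\in\mathcal{B}_t\}$ is VC–subgraph (half-space indicators composed with a linear map, multiplied by a bounded coordinate) and hence Donsker under Assumption 5(iii), by Lemma 2.6.18 of \cite{van1996weak}. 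For the $L^2(P)$ control, using $(\mathbf{1}_A - \mathbf{1}_B)^2 = \mathbf{1}_{A\triangle B}$ and conditioning on $(X_t,Z_t)$ gives
\[
\bigl\|f^k_{\tilde a,\tilde b_t,\tau} - f^k_{a,b_t,\tau}\bigr\|_{L^2(P)}^2 \;\leq\; C\,E\!\left[\bigl|F_{Y_t|X_t,Z_t}(X_t'\tilde a + Z_t'\tilde b_t \mid X_t,Z_t) - F_{Y_t|X_t,Z_t}(X_t'a + Z_t'b_t \mid X_t,Z_t)\bigr|\right],
\]
which by the bounded conditional density (Assumption 5(v)) and the boundedness of $\mathcal{X}_{1:T}$ and $\mathcal{Z}$ is $O(\|\tilde a - a\| + \|\tilde b_t - b_t\|) = O(\delta_n)$. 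Hence the $L^2(P)$ seminorm is $O(\sqrt{\delta_n}) \to 0$, and asymptotic equicontinuity delivers (\ref{SEC_f}).

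For (\ref{SEC_D}), the class $\{\mathbf{w}\mapsto g_t(\mathbf{w};a,b,v) : a\in\mathcal{A}, b\in\mathcal{B}, v\in\mathcal{V}\}$ was already shown to be Donsker in the proof of Lemma 1, by combining the VC property of the indicator class with the VC property of $\{(\mathbf{x},\mathbf{z})\mapsto \omega(\mathbf{x},\mathbf{z},v): v\in\mathcal{V}\}$ and the bounded-variation preservation rules of Lemma 2.6.18 of \cite{van1996weak}. For the $L^2(P)$ control I split the difference of $g_t$'s into an ``indicator part'' (fix $v$, vary $(a,b)$) and a ``weight part'' (fix $(a,b)$, vary $v$). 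The indicator part is handled exactly as in (\ref{SEC_f}), using the uniform boundedness of $\omega$ on the compact set $\mathcal{V}$ together with Assumptions 5(iii) and 5(v). The weight part uses the Lipschitz continuity of $v\mapsto \exp(v_{\mathbf{x}}'\tilde{\mathbf{X}}+v_{\mathbf{z}}'\tilde{\mathbf{Z}})$ on $\mathcal{V}$, combined with the boundedness of the standardized covariates, to obtain an $O(\delta_n)$ bound. Summing by the triangle inequality gives an $L^2(P)$ seminorm of $O(\sqrt{\delta_n})$, and equicontinuity yields (\ref{SEC_D}).

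The main obstacle is essentially notational rather than conceptual: one must carefully bookkeep the $L^2(P)$ modulus for the three-parameter family in (\ref{SEC_D}), simultaneously invoking the uniform bound on $\omega$, the bound on the conditional density, and the compactness of $\mathcal{X}_{1:T}$, $\mathcal{Z}$, and $\mathcal{V}$. No deep probabilistic input is needed beyond the preservation rules for VC–subgraph classes and the standard equicontinuity theorem for Donsker classes; the contraction of the indicator-difference expectation into a density bound via conditioning is the only nontrivial manipulation.
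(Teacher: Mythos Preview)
Your approach is correct and essentially identical to the paper's: establish the Donsker property via VC preservation rules, then invoke asymptotic equicontinuity after bounding the $L^2(P)$ distance between nearby-parameter functions through the conditional distribution and a density bound. Two minor points: first, in (\ref{SEC_D}) the same $v$ appears in both $\hat D_n^t(v;\tilde a,\tilde b)$ and $\hat D_n^t(v;a,b)$, so your ``weight part'' (varying $v$) is unnecessary---only the ``indicator part'' is needed, with the uniform bound $\sup_{v\in\mathcal{V}}|\omega(\mathbf{x},\mathbf{z},v)|<\infty$ absorbed into the constant; second, the Lipschitz bound on $y\mapsto F_{Y_t|X_t,Z_t}(y|x,z)$ comes from the boundedness of $f_{Y_t|\mathbf{X},\mathbf{Z}}$ in Assumption~6(iii), not from Assumption~5(v), which concerns $f_{Y_t-X_t'a|Z_t}$.
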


\begin{proof}
From the proof of Lemma 1, $\{w \mapsto f^k_{a,b_t,\tau}(w):a \in \mathcal{A}, b_t \in \mathcal{B}_t , \tau \in \mathcal{T} \}$ and $\{\mathbf{w} \mapsto g_t(\mathbf{w};a,b,v) : a \in \mathcal{A}, b \in \mathcal{B}, v \in \mathcal{V} \}$ are Donsker.
Hence, for any $\delta_n \downarrow 0$, we obtain
\begin{eqnarray}
& & \sup_{\mathbb{P}(f^k_{\tilde{a},\tilde{b}_t,\tau} - f^k_{a,b_t,\tau})^2 \leq \delta_n}  \left| \mathbb{G}_n f^k_{\tilde{a},\tilde{b}_t,\tau} - \mathbb{G}_n f^k_{a,b_t,\tau} \right|  = o_p(1), \nonumber \\
& & \sup_{\mathbb{P}(g_{t,\tilde{a},\tilde{b}} - g_{t,a,b})^2 \leq \delta_n}  \left| \mathbb{G}_n g_{t,\tilde{a},\tilde{b}} - \mathbb{G}_n g_{t,a,b} \right|  = o_p(1), \nonumber
\end{eqnarray}
where $g_{t,a,b}(\mathbf{w}) \equiv g_t(\mathbf{w};a,b,v)$.
Because $\mathcal{X}_t$ and $\mathcal{Z}$ are bounded, for some $C>0$, we have
\begin{eqnarray}
& & \mathbb{P}(f_{\tilde{a},\tilde{b}_t, \tau}^{k} - f_{a,b_t, \tau}^{k})^2 \nonumber \\
&\leq & C E\left[ \left( \mathbf{1}\{Y_t \leq X_t'a + Z_t'b_t\} - \mathbf{1}\{Y_t \leq X_t'\tilde{a} + Z_t'\tilde{b}_t\} \right)^2 \right] \nonumber \\
&=& C E\left[ \left| \mathbf{1}\{Y_t \leq X_t'a + Z_t'b_t\} - \mathbf{1}\{Y_t \leq X_t'\tilde{a} + Z_t'\tilde{b}_t\} \right| \right] \nonumber \\
&=& C E\left[ \mathbf{1}\{X_t'\tilde{a} + Z_t'\tilde{b}_t < Y_t \leq X_t'a + Z_t'b_t\} + \mathbf{1}\{X_t'a + Z_t'b_t < Y_t \leq X_t'\tilde{a} + Z_t'\tilde{b}_t\} \right] \nonumber \\
& \leq & C \int_{\mathcal{Z}_t} \int_{\mathcal{X}_t} \left| F_{Y_t|X_t,Z_t}(x'a+z'b_t | x,z) - F_{Y_t|X_t,Z_t}(x'\tilde{a}+z'\tilde{b}_t | x,z) \right| dF_{X_t}(x) dF_{Z_t}(z). \nonumber
\end{eqnarray}
It follows from Assumption 6 (iii) that $|F_{Y_t|X_t,Z_t}(y | x,z)-F_{Y_t|X_t,Z_t}(\tilde{y}| x,z)| \leq K |y-\tilde{y}|$ for some $K > 0$.
Hence, there exists a constant $C' > 0$ such that
$$
\mathbb{P}(f_{\tilde{a},\tilde{b}_t,\tau}^{k} - f_{a,b_t,\tau}^{k})^2 \leq C' \|(a'-\tilde{a}',b_t'-\tilde{b}_t')\|.
$$
Because $\|\tilde{a}-a\| \rightarrow 0$ and $\|\tilde{b}_t-b_t\| \rightarrow 0$ imply $\mathbb{P}(f_{\tilde{a},\tilde{b}_t,\tau}^{k} - f_{a,b_t,\tau}^{k})^2 \rightarrow 0$, we have (\ref{SEC_f}).
Similarly, because $\hat{D}_n^t(v;a,b) = \mathbb{P}_n g_{t,a,b}$ and $D^t(v;a,b) = \mathbf{P} g_{t,a,b}$, we can prove (\ref{SEC_D}).
\end{proof}\vspace{0.15in}

\begin{Lemma}
Under the assumptions of Theorem 2, $D^t(v;a,\beta(a,\tau))$ is continuously differentiable in $a$, $D^t(v,a,b)$ is continuously differentiable in $b$, and
\begin{eqnarray}
\frac{\partial}{\partial a} D^t(v;a,\beta(a,\tau)) &=& \Gamma_1^t(v;a,\tau), \nonumber \\
\frac{\partial}{\partial b_s} D^t(v;a,b) &=& \gamma_2^{t,s}(v;a,b). \nonumber 
\end{eqnarray}
\end{Lemma}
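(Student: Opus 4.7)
The plan is to reduce both differentiations to applications of Leibniz's rule after rewriting $D^t$ in terms of the conditional CDF of $Y_s$ given $(\mathbf{X}_i, \mathbf{Z}_i)$. Conditioning on $(\mathbf{X}_i, \mathbf{Z}_i)$ inside $g_t$ gives
\[
D^t(v;a,b) = E\left[\left(F_{Y_t|\mathbf{X},\mathbf{Z}}(X_{it}'a + Z_{it}'b_t|\mathbf{X}_i,\mathbf{Z}_i) - \frac{1}{T}\sum_{s=1}^T F_{Y_s|\mathbf{X},\mathbf{Z}}(X_{is}'a + Z_{is}'b_s|\mathbf{X}_i,\mathbf{Z}_i)\right)\omega(\mathbf{X}_i,\mathbf{Z}_i,v)\right],
\]
which removes the almost-sure nondifferentiability of the indicator inside $g_t$.

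Next, I would justify differentiation under the expectation via dominated convergence. For each $(\mathbf{x},\mathbf{z})$ the map $y \mapsto F_{Y_s|\mathbf{X},\mathbf{Z}}(y|\mathbf{x},\mathbf{z})$ is differentiable with derivative $f_{Y_s|\mathbf{X},\mathbf{Z}}(y|\mathbf{x},\mathbf{z})$ by Assumption 6 (iii), which also provides a uniform upper bound on this density. Combined with the boundedness of $X_{is}$, $Z_{is}$, and $\omega$ (from Assumption 5 (iii) and the compactness of $\mathcal{V}$), this yields integrable dominating functions for the difference quotients in each of $a$ and $b_s$. The uniform continuity of $f_{Y_s|\mathbf{X},\mathbf{Z}}$ in $y$ then upgrades the resulting partial derivatives to continuity in $a$ and $b$.

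The $\partial/\partial b_s$ calculation is then direct: only terms containing $b_s$ survive. When $s = t$ both the lead term and the $s=t$ entry of the average contribute, giving the coefficient $1 - 1/T = (T-1)/T$; when $s \neq t$ only the $s$-th summand contributes, giving $-1/T$. In either case the derivative of the CDF brings down $f_{Y_s|\mathbf{X},\mathbf{Z}}(X_{is}'a + Z_{is}'b_s|\mathbf{X}_i,\mathbf{Z}_i) Z_{is}$, so $\partial D^t/\partial b_s = \gamma_2^{t,s}(v;a,b)$ exactly.

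For $\partial/\partial a$ of $D^t(v;a,\beta(a,\tau))$ I would apply the chain rule, using Assumption 5 (viii) which guarantees continuous differentiability of $\beta_t(a,\tau)$ in $a$ with bounded derivative $B_t(a,\tau)$. Differentiating each summand $F_{Y_s|\mathbf{X},\mathbf{Z}}(X_{is}'a + Z_{is}'\beta_s(a,\tau)|\cdot)$ yields $f_{Y_s|\mathbf{X},\mathbf{Z}}(\cdot|\mathbf{X}_i,\mathbf{Z}_i)(X_{is} + B_s(a,\tau)'Z_{is})$. Collecting terms and factoring out $\omega$ produces $\gamma_1^t(v;a,\tau) - T^{-1}\sum_s \gamma_1^s(v;a,\tau) = \Gamma_1^t(v;a,\tau)$, as claimed. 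The only genuinely technical step in this plan is the dominated convergence argument; the rest is bookkeeping with the chain rule and the definitions of $\gamma_1^t$ and $\gamma_2^{t,s}$.
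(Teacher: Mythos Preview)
Your proposal is correct and follows essentially the same route as the paper: both rewrite $D^t$ via the tower property to replace indicators by conditional CDFs, then differentiate term by term (with the chain rule for the $a$-derivative through $\beta(a,\tau)$). If anything, you are more explicit than the paper in justifying differentiation under the expectation via dominated convergence; the paper simply states the resulting derivative formulas.
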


\begin{proof}
First, we show the continuous differentiability of $D^t(v;a,\beta(a,\tau))$ and $D^t(v,a,b)$.
We observe that
\begin{eqnarray}
D^t(v;a,b) &=& E\left[ \left( \mathbf{1}\{Y_t \leq X_t'a + Z_t'b_t\} - \frac{1}{T} \sum_{s=1}^T \mathbf{1}\{Y_s \leq X_s'a + Z_s'b_s\} \right) \omega(\mathbf{X},\mathbf{Z},v) \right] \nonumber \\
&=& E \Big[ \Big( F_{Y_t|\mathbf{X},\mathbf{Z}}(X_t'a + Z_t'b_t|\mathbf{X},\mathbf{Z}) \nonumber \\
& & \hspace{0.6in}  - \frac{1}{T} \sum_{s=1}^T F_{Y_s|\mathbf{X},\mathbf{Z}}(X_s'a + Z_s'b_s|\mathbf{X},\mathbf{Z}) \Big) \omega(\mathbf{X},\mathbf{Z},v) \Big]. \nonumber 
\end{eqnarray}
Because we have
\begin{eqnarray}
& & \frac{\partial}{\partial b_t} E[ F_{Y_t|\mathbf{X},\mathbf{Z}}(X_t'a + Z_t'b_t|\mathbf{X},\mathbf{Z}) \omega(\mathbf{X},\mathbf{Z},v)] \nonumber \\
&=& E[ f_{Y_t|\mathbf{X},\mathbf{Z}}(X_t'a + Z_t'b_t|\mathbf{X},\mathbf{Z}) \omega(\mathbf{X},\mathbf{Z},v) Z_t], \nonumber 
\end{eqnarray}
$D^t(v;a,b)$ is continuously differentiable in $b$ and $(\partial/\partial b_s) D^t(v;a,b) = \gamma_2^{t,s}(v;a,b)$.
Similarly, we have
\begin{eqnarray}
& & \frac{\partial}{\partial a} E[ F_{Y_t|\mathbf{X},\mathbf{Z}}(X_t'a + Z_t'\beta_t(a,\tau)|\mathbf{X},\mathbf{Z}) \omega(\mathbf{X},\mathbf{Z},v)] \nonumber \\
&=& E[ f_{Y_t|\mathbf{X},\mathbf{Z}}(X_t'a + Z_t'\beta_t(a,\tau)|\mathbf{X},\mathbf{Z}) \omega(\mathbf{X},\mathbf{Z},v) (X_t + B_t(a,\tau)'Z_t) ]. \nonumber
\end{eqnarray}
Hence, $D^t(v;a,\beta(a,\tau))$ is also continuously differentiable in $a$ and $(\partial/\partial a) D^t(v;a,\beta(a,\tau)) = \Gamma_1^t(v;a,\tau)$.
\end{proof}\vspace{0.15in}

\begin{Lemma}
Under the assumptions of Theorem 2, for any sequence of positive numbers $\{\delta_n\}$ that converges to zero, we have
\begin{equation}
\sup_{\tau \in \mathcal{T}, \, \|a-\alpha(\tau)\| \leq \delta_n} \| M^t(v;a,\beta,\tau) - \Gamma_1^t(v;\tau)'(a-\alpha(\tau)) \|_{\tilde{\mu}} = o(\delta_n), \label{Differentiability gamma 1} 
\end{equation}
and
\begin{eqnarray}
\sup_{ \substack{a \in \mathcal{A}, \, \tau \in \mathcal{T}, \\ \|b-\beta\|_{\infty} \leq \delta_n } } \left\| M^t(v;a,b,\tau) - M^t(v;a,\beta,\tau) - \Gamma_2^t(v;a,\tau)'[b(a,\tau)-\beta(a,\tau)] \right\|_{\tilde{\mu}} = o(\delta_n). \label{Differentiability gamma 2} 
\end{eqnarray}
\end{Lemma}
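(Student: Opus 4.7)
The plan is to treat both claims as uniform first-order Taylor expansions. Lemma 4 supplies pointwise continuous differentiability of $D^t(v;a,\beta(a,\tau))$ in $a$ and of $D^t(v;a,b)$ in $b$; the equicontinuity pieces of Assumption 6 upgrade this to a modulus of continuity of the derivative that does not depend on $\tau$, which is what converts the pointwise remainder estimates into the desired $L_2(\tilde\mu)$-norm bounds, because $\mathcal V$ has finite Lebesgue measure and $t$ ranges over the finite set $\{1,\dots,T\}$.

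For (\ref{Differentiability gamma 1}), I would first note that $M^t(v;\alpha(\tau),\beta,\tau) = D^t(v;\alpha(\tau),\beta(\tau)) = 0$ by (\ref{moment_second_step}). Setting $a(s) \equiv \alpha(\tau) + s(a-\alpha(\tau))$, Lemma 4 and the fundamental theorem of calculus give
\[
M^t(v;a,\beta,\tau) - \Gamma_1^t(v;\tau)'(a-\alpha(\tau)) = \int_0^1 \bigl[\Gamma_1^t(v;a(s),\tau) - \Gamma_1^t(v;\alpha(\tau),\tau)\bigr]'(a-\alpha(\tau))\, ds,
\]
so by Cauchy--Schwarz the pointwise remainder is at most $\|a-\alpha(\tau)\|$ times $\sup_{s\in[0,1]}\|\Gamma_1^t(v;a(s),\tau) - \Gamma_1^t(v;\alpha(\tau),\tau)\|$. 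Integrating against $\tilde\mu$ reduces the claim to showing that this latter supremum vanishes uniformly in $v\in\mathcal V$, $t$, and $\tau\in\mathcal T$ as $\sup_{\tau\in\mathcal T}\|\tilde a - \alpha(\tau)\|\to 0$. I would verify this by expanding the definition of $\gamma_1^t$ and dominating term by term: $\omega$ and the regressors are bounded (Assumption 5 (iii)); $\beta_t(\cdot,\tau)$ is Lipschitz in $a$ uniformly in $\tau$ with constant $\sup\|B_t\|$ (Assumption 5 (viii)), so the density argument $X_{it}'a + Z_{it}'\beta_t(a,\tau)$ moves by $O(\|\tilde a-\alpha(\tau)\|)$ uniformly in $\tau$; uniform continuity of $y\mapsto f_{Y_t|\mathbf X,\mathbf Z}(y|\mathbf x,\mathbf z)$ from Assumption 6 (iii) turns this motion into a uniformly vanishing density difference; and equicontinuity of $a\mapsto B_t(a,\tau)$ across $\tau$ in Assumption 6 makes $B_t(\tilde a,\tau)-B_t(\alpha(\tau),\tau)$ vanish with a single modulus. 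Dominated convergence then finishes the argument.

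The second claim follows by the same template with $b$ playing the role of the perturbed variable. With $b(r) \equiv \beta(a,\tau) + r[b(a,\tau) - \beta(a,\tau)]$, Lemma 4 yields
\[
M^t(v;a,b,\tau) - M^t(v;a,\beta,\tau) - \Gamma_2^t(v;a,\tau)'[b(a,\tau)-\beta(a,\tau)] = \int_0^1 \sum_{s=1}^T \bigl[\gamma_2^{t,s}(v;a,b(r)) - \gamma_2^{t,s}(v;a,\beta(a,\tau))\bigr]'[b_s(a,\tau)-\beta_s(a,\tau)]\, dr,
\]
whose pointwise norm is at most $\|b(a,\tau)-\beta(a,\tau)\|$ times the modulus of continuity of $\gamma_2^{t,s}(v;a,\cdot)$ at $\beta(a,\tau)$. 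Here only the density argument $X_{it}'a + Z_{it}'b_t$ perturbs, so this modulus is uniform in $(v,a,\tau)$ simply by the uniform continuity and boundedness of $f_{Y_t|\mathbf X,\mathbf Z}$ in $y$ together with the boundedness of $Z_t$ and $\omega$, and the same integration step delivers the $o(\delta_n)$ bound.

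The main obstacle is the uniformity of these moduli in $\tau\in\mathcal T$: the base points $\alpha(\tau)$ and $\beta(\tau)$ drift with $\tau$, so pointwise continuity of the derivatives at each fixed $\tau$ is not enough. This is precisely the purpose of the equicontinuity hypotheses in Assumption 6 -- they promote the termwise continuity of $B_t$ and of the conditional density into one-sided moduli of continuity that hold with a single rate across $\tau\in\mathcal T$, after which the passage from the pointwise remainder to the $L_2(\tilde\mu)$-norm bound is routine.
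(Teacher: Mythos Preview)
Your proposal is correct and follows essentially the same route as the paper: both arguments write the remainder as the derivative difference applied to the increment and then show that this derivative difference vanishes uniformly in $(v,t,\tau)$ using the boundedness of $\omega$ and the regressors, the uniform continuity of $f_{Y_t|\mathbf X,\mathbf Z}(\cdot|\mathbf x,\mathbf z)$, and the equicontinuity of $B_t$. The only cosmetic difference is that the paper uses the mean value theorem (an intermediate point $\overline a_{t,v,\tau}$, resp.\ $\overline b_{t,v,a,\tau}$) whereas you use the integral form of the remainder; the ingredients and the uniformity argument are identical.
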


\begin{proof}
First, we show (\ref{Differentiability gamma 1}).
Because $M^t(v;a,\beta,\tau) = D^t(v;a,\beta(a,\tau))$ is continuously differentiable in $a$, there exists $\overline{a}_{t,v,\tau}$ between $\alpha(\tau)$ and $a$ such that
\begin{eqnarray}
M^t(v;a,\beta,\tau) - M^t(v;\alpha(\tau),\beta,\tau) = \Gamma_1^t(v;\overline{a}_{t,v,\tau},\tau)'(a-\alpha(\tau)). \nonumber
\end{eqnarray}
Because $M^t(v;\alpha(\tau),\beta,\tau) = 0$, we have
\begin{eqnarray}
& &  \| M^t(v;a,\beta,\tau) - \Gamma_1^t(v;\alpha(\tau),\tau)'(a-\alpha(\tau)) \|_{\tilde{\mu}} \nonumber \\
&=&  \left\| \left( \Gamma_1^t(v;\overline{a}_{t,v,\tau},\tau) - \Gamma_1^t(v;\alpha(\tau),\tau) \right)' (a - \alpha(\tau)) \right\|_{\tilde{\mu}} \nonumber \\
&\leq & \sup_{v \in \mathcal{V}, \, \tau \in \mathcal{T}} \|\Gamma_1^t(v;\overline{a}_{t,v,\tau},\tau) - \Gamma_1^t(v;\alpha(\tau),\tau)\| \times \|a - \alpha(\tau)\|. \nonumber
\end{eqnarray}
Then, for some $C > 0$, we have
\begin{eqnarray}
& & \|\Gamma_1^t(v;\overline{a}_{t,v,\tau},\tau) - \Gamma_1^t(v;\alpha(\tau),\tau)\| \nonumber \\
&\leq & C  \max_s E\left[\| f_{Y_s|\mathbf{X},\mathbf{Z}}(X_s'\overline{a}_{t,v,\tau} + Z_s'\beta_s(\overline{a}_{t,v,\tau},\tau)|\mathbf{X},\mathbf{Z})(X_s + B_s(\overline{a}_{t,v,\tau},\tau)'Z_s) \right. \nonumber \\
& & \left. - f_{Y_s|\mathbf{X},\mathbf{Z}}(X_s'\alpha(\tau) + Z_s'\beta_s(\alpha(\tau),\tau)|\mathbf{X},\mathbf{Z})(X_s + B_s(\alpha(\tau),\tau)'Z_s) \| \right]. \nonumber
\end{eqnarray}
Hence, it follows from Assumptions 6 (iii) and (vi) that we obtain (\ref{Differentiability gamma 1}).

Next, we show (\ref{Differentiability gamma 2}).
Because $D^t(v;a,b)$ is continuously differentiable in $b$, there exists $\overline{b}_{t,v,a,\tau}$ between $b(a,\tau)$ and $\beta(a,\tau)$ such that
\begin{equation}
D^t(v;a,b(a,\tau)) - D^t(v;a,\beta(a,\tau)) = \Gamma_2^t(v;a,\overline{b}_{t,v,a,\tau})'[b(a,\tau)-\beta(a,\tau)]. \nonumber
\end{equation}
Hence, we have
\begin{eqnarray}
& & \left| M^t(v;a,b,\tau) - M^t(v;a,\beta,\tau) - \Gamma_2(v;a,\tau)'[b(a,\tau)-\beta(a,\tau)] \right| \nonumber \\
&=& \left| \left( \Gamma_2^t(v;a,\overline{b}_{t,v,a,\tau}) - \Gamma_2^t(v;a,\beta(a,\tau)) \right)'[b(a,\tau)-\beta(a,\tau)] \right| \nonumber \\
&\leq & \sup_{v \in \mathcal{V}, \, a \in \mathcal{A}, \, \tau \in \mathcal{T}} \|\Gamma_2^t(v;a,\overline{b}_{t,v,a,\tau}) - \Gamma_2^t(v;a,\beta(a,\tau)) \| \times \| b(a,\tau) - \beta(a,\tau) \|_{\infty} . \nonumber
\end{eqnarray}
Similarly to (\ref{Differentiability gamma 1}), $\sup_{v \in \mathcal{V}, \, a \in \mathcal{A}, \, \tau \in \mathcal{T}} \| \Gamma_2^t(v;a,\overline{b}_{v,a,\tau}) - \Gamma_2^t(v;a,\beta(a,\tau)) \| = o(1)$ by the uniform continuity of $f_{Y_t|\mathbf{X},\mathbf{Z}}(y|\mathbf{x},\mathbf{z})$ in $y$.
Therefore, we obtain (\ref{Differentiability gamma 2}).
\end{proof}

\begin{Lemma}
Under the assumptions of Theorem 2, we have $\sup_{\tau \in \mathcal{T}} \| \hat{\alpha}(\tau) - \alpha(\tau) \| \rightarrow_p 0$ and $\sup_{\tau \in \mathcal{T}} \| \hat{\beta}(\tau) - \beta(\tau) \| \rightarrow_p 0$.
\end{Lemma}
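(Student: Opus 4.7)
The plan is to run the standard extremum-estimator consistency argument uniformly over $\tau \in \mathcal{T}$. I would define
\begin{equation*}
Q(a,\tau) \;\equiv\; \tfrac{1}{T}\sum_{t=1}^T \|D^t(v;a,\beta(a,\tau))\|_{L_2}^2, \qquad Q_n(a,\tau) \;\equiv\; \tfrac{1}{T}\sum_{t=1}^T \|\hat{D}_n^t(v;a,\tilde{\beta}(a,\tau))\|_{L_2}^2,
\end{equation*}
so that $\hat{\alpha}(\tau)$ minimizes $Q_n(\cdot,\tau)$ on $\mathcal{A}$, while Theorem 1 guarantees that $Q(\cdot,\tau)$ is uniquely minimized at $\alpha(\tau)$ with $Q(\alpha(\tau),\tau) = 0$. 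The argument then splits into two ingredients: a uniform law of large numbers for the criterion, and a well-separated minimum uniformly in $\tau$.

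\textbf{Step 1: Uniform convergence of the criterion.} First I would show that $\sup_{a\in\mathcal{A},\,\tau\in\mathcal{T}}|Q_n(a,\tau)-Q(a,\tau)| = o_p(1)$. This combines three ingredients already in hand: (i) Lemma 1, which yields $\sup_{a,b,v}|\hat{D}_n^t(v;a,b)-D^t(v;a,b)|=o_{a.s.}(1)$ jointly in its arguments; (ii) Lemma 2, which yields $\sup_{a,\tau}\|\tilde{\beta}(a,\tau)-\beta(a,\tau)\|=o_p(1)$; and (iii) Lemma 4 together with Assumption 6(iii) and the boundedness of $\mathcal{Z}$, which makes $D^t(v;a,\cdot)$ Lipschitz in $b$ uniformly in $(a,v)$. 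These allow me to replace $\tilde{\beta}(a,\tau)$ by $\beta(a,\tau)$ in the first argument and then $\hat{D}_n^t$ by $D^t$, all uniformly; squaring and integrating over the compact set $\mathcal{V}$ preserves uniformity, because $|x^2-y^2|\le (|x|+|y|)|x-y|$ and both $|\hat{D}_n^t|, |D^t|$ are uniformly bounded by $\sup_{\mathbf{x},\mathbf{z},v}\omega(\mathbf{x},\mathbf{z},v)<\infty$.

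\textbf{Step 2: Well-separated minimum and consistency.} Next I would verify that for each $\epsilon>0$,
\begin{equation*}
\eta(\epsilon) \;\equiv\; \inf_{\tau\in\mathcal{T}}\inf_{\substack{a\in\mathcal{A}\\ \|a-\alpha(\tau)\|\ge\epsilon}} Q(a,\tau) \;>\; 0.
\end{equation*}
Joint continuity of $Q(a,\tau)$ follows from Lemma 4 and Assumption 6(viii); continuity of $\tau\mapsto\alpha(\tau)$ on the compact set $\mathcal{T}$ is implied by the linear structure (\ref{Model}), the continuity and strict monotonicity of $q_t$ in $\tau$ from Assumption A.1, and the full-rank condition of Assumption 3(i). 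Compactness of $\mathcal{A}\times\mathcal{T}$ together with the uniqueness from Theorem 1 then yields $\eta(\epsilon)>0$. Combining this with Step 1 through the chain
\begin{equation*}
Q(\hat{\alpha}(\tau),\tau) \;\le\; Q_n(\hat{\alpha}(\tau),\tau)+o_p(1) \;\le\; Q_n(\alpha(\tau),\tau)+o_p(1) \;=\; o_p(1),
\end{equation*}
uniformly in $\tau$, forces $\sup_{\tau}\|\hat{\alpha}(\tau)-\alpha(\tau)\|\to_p 0$. For $\hat{\beta}_t(\tau)=\tilde{\beta}_t(\hat{\alpha}(\tau),\tau)$, the decomposition $\hat{\beta}_t(\tau)-\beta_t(\tau)=[\tilde{\beta}_t(\hat{\alpha}(\tau),\tau)-\beta_t(\hat{\alpha}(\tau),\tau)]+[\beta_t(\hat{\alpha}(\tau),\tau)-\beta_t(\alpha(\tau),\tau)]$ is uniformly $o_p(1)$: Lemma 2 handles the first term, and the Lipschitz continuity of $\beta_t(\cdot,\tau)$ in $a$ (from boundedness of $B_t$ in Assumption 6(viii)), combined with the uniform consistency of $\hat{\alpha}$, handles the second. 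The main obstacle is Step 2: establishing the well-separated minimum uniformly in $\tau$ is where compactness of $\mathcal{T}$ and continuity of $\alpha(\cdot)$ are essential, whereas every other piece is a direct assembly of the previously established lemmas.
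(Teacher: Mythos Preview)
Your proposal is correct and follows essentially the same route as the paper's own proof: uniform convergence of the criterion via Lemmas 1 and 2, followed by a well-separated-minimum argument on the compact set $\mathcal{A}\times\mathcal{T}$, and then the decomposition $\hat{\beta}_t(\tau)-\beta_t(\tau)=[\tilde{\beta}_t(\hat{\alpha}(\tau),\tau)-\beta_t(\hat{\alpha}(\tau),\tau)]+[\beta_t(\hat{\alpha}(\tau),\tau)-\beta_t(\alpha(\tau),\tau)]$ handled by Lemma 2 and Assumption 5(viii). The only cosmetic difference is that the paper obtains uniform continuity of $D^t(v;a,b)$ in $b$ directly from Assumption 5(vii) rather than via Lemma 4 and Assumption 6(iii), and it works with the norm $\|D^t\|_{\tilde{\mu}}$ rather than its square $Q$; your slightly more explicit treatment of the uniform well-separation (including continuity of $\tau\mapsto\alpha(\tau)$) is actually a point the paper glosses over.
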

\begin{proof}
To prove the uniform consistency of $\hat{\alpha}(\tau)$, we show the continuity of $\|D^t(v;a,b)\|_{\tilde{\mu}}$ in $a$ and $b$.
Then, for some $C>0$, we obtain
\begin{eqnarray}
& & |D^t(v;a,b) - D^t(v;\tilde{a},\tilde{b})| \nonumber \\
&= & \left| E\left[g_t(\mathbf{W};a,b,v) - g_t(\mathbf{W};\tilde{a},\tilde{b},v) \right] \right| \nonumber \\
&\leq & C \max_{s} E\left[ | F_{Y_s|\mathbf{X},\mathbf{Z}}(X_s'a+Z_s'b_s|\mathbf{X},\mathbf{Z}) - F_{Y_s|\mathbf{X},\mathbf{Z}}(X_s'\tilde{a}+Z_s'\tilde{b}_s|\mathbf{X},\mathbf{Z}) | \right]. \nonumber 
\end{eqnarray}
Hence, from Assumption 5 (vii), $\|D^t(v;a,b)\|_{\tilde{\mu}}$ is uniformly continuous in $a$ and $b$.

We show the uniform consistency of $\hat{\alpha}(\tau)$ and $\hat{\beta}(\tau)$.
From the definition of $\hat{\alpha}(\tau)$ and Lemma 1, we have
\begin{eqnarray}
\left\| D^t\left( v;\hat{\alpha}(\tau), \tilde{\beta}(\hat{\alpha}(\tau),\tau) \right) \right\|_{\tilde{\mu}} & = & \left\| \hat{D}^t_n \left( v;\hat{\alpha}(\tau), \tilde{\beta}(\hat{\alpha}(\tau),\tau) \right) \right\|_{\tilde{\mu}} + o_p(1) \nonumber \\
&\leq & \left\| \hat{D}^t_n \left( v;\alpha(\tau), \tilde{\beta}(\alpha(\tau),\tau) \right) \right\|_{\tilde{\mu}} + o_p(1) \nonumber \\
&=& \left\| D^t\left( v;\alpha(\tau), \tilde{\beta}(\alpha(\tau),\tau) \right) \right\|_{\tilde{\mu}} + o_p(1), \label{Lem 6 (1)}
\end{eqnarray}
where $p_p(1)$ is uniform with respect to $\tau \in \mathcal{T}$.
Because $F_{Y_t|\mathbf{X},\mathbf{Z}}(y|\mathbf{x},\mathbf{z})$ is uniform continuous in $y$, it follows from Lemma 2 that uniformly in $\tau$, we have
$$
\| D^t(v;a,\tilde{\beta}(a,\tau)) \|_{\tilde{\mu}} = \| D^t(v;a,\beta(a,\tau)) \|_{\tilde{\mu}} + o_p(1).
$$
Hence, (\ref{Lem 6 (1)}) implies that uniformly in $\tau$, we obtain
\begin{equation}
\left\| D^t\left( v;\hat{\alpha}(\tau), \beta(\hat{\alpha}(\tau),\tau) \right) \right\|_{\tilde{\mu}}  \leq \left\| D^t\left( v;\alpha(\tau), \beta(\tau) \right) \right\|_{\tilde{\mu}}  + o_p(1). \label{Lem 6 (2)}
\end{equation}
Pick any $\delta>0$.
From (\ref{Identification Estimator}), Assumption 5 (ii), and continuity of $\|D^t(v;a,\beta(a,\tau))\|_{\tilde{\mu}}$, we obtain
$$
 \inf_{a \in \mathcal{A}, \, \|a-\alpha(\tau)\| > \delta}  \|D^t(v;a,\beta(a,\tau)) \|_{\tilde{\mu}} > \| D^t(v;\alpha(\tau),\beta(\tau)) \|_{\tilde{\mu}}.
$$
By (\ref{Lem 6 (2)}), wp $\rightarrow$ 1 uniformly in $\tau$ we have 
\begin{eqnarray}
\left\| D^t\left( v;\hat{\alpha}(\tau), \beta(\hat{\alpha}(\tau),\tau) \right) \right\|_{\tilde{\mu}} < \inf_{a \in \mathcal{A}, \|a-\alpha(\tau)\| > \delta}  \|D^t(v;a,\beta(a,\tau)) \|_{\tilde{\mu}}. \nonumber
\end{eqnarray}
Hence, we obtain $\sup_{\tau \in \mathcal{T}} \| \hat{\alpha}(\tau) - \alpha(\tau) \| \rightarrow_p 0$.
It follows from Assumption 5 (viii) and Lemma 2 that $\sup_{\tau \in \mathcal{T}} \|\hat{\beta}(\tau) - \beta(\tau) \| \rightarrow_p 0$.
\end{proof}

\begin{Lemma}
Under the assumptions of Theorem 2, we have
\begin{equation}
\sqrt{n}(\tilde{\beta}_t(a,\tau)-\beta_t(a,\tau)) = -J_t^b(a,\tau)^{-1} \frac{1}{\sqrt{n}} \sum_{i=1}^n  r_{\tau}(W_{it};a,\beta_t(a,\tau)) + o_p(1), \nonumber
\end{equation}
where $o_p(1)$ is uniform over $a \in \mathcal{A}$ and $\tau \in \mathcal{T}$.
\end{Lemma}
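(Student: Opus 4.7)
The plan is to follow the standard uniform Bahadur representation argument for quantile regression (cf. \cite{angrist2006quantile}), using the stochastic equicontinuity result (\ref{SEC_f}) in Lemma 3 and the uniform consistency result (\ref{unform LLN beta}) in Lemma 2 to make the representation uniform in $(a,\tau) \in \mathcal{A} \times \mathcal{T}$.

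First, I would invoke the subgradient optimality condition for the convex optimization (\ref{first step estimator}). Since $\tilde{\beta}_t(a,\tau)$ is an ordinary $\tau$-th quantile regression of $Y_{it} - X_{it}'a$ on $Z_{it}$, we have $\frac{1}{n} \sum_{i=1}^n r_\tau(W_{it}; a, \tilde{\beta}_t(a,\tau)) = O_p(n^{-1/2})$ uniformly in $(a,\tau)$; in fact the left-hand side is of order $\max_i \|Z_{it}\|/n$ by standard Koenker-Bassett arguments, and $\|Z_{it}\|$ is bounded by Assumption 5 (iii). Multiplying by $\sqrt{n}$, this yields $\sqrt{n} \mathbb{P}_n r_\tau(W_{it};a,\tilde{\beta}_t(a,\tau)) = o_p(1)$ uniformly.

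Next, I would decompose
\[
\sqrt{n} \mathbb{P}_n r_\tau(W_{it};a,\tilde{\beta}_t(a,\tau)) = \mathbb{G}_n r_\tau(W_{it};a,\tilde{\beta}_t(a,\tau)) + \sqrt{n}\, \mathbb{P} r_\tau(W_{it};a,\tilde{\beta}_t(a,\tau)).
\]
For the empirical process term, Lemma 2 gives $\|\tilde{\beta}_t(a,\tau) - \beta_t(a,\tau)\| \to 0$ uniformly in $(a,\tau)$, so the stochastic equicontinuity statement (\ref{SEC_f}) of Lemma 3 allows me to replace $\tilde{\beta}_t(a,\tau)$ with $\beta_t(a,\tau)$:
\[
\mathbb{G}_n r_\tau(W_{it};a,\tilde{\beta}_t(a,\tau)) = \mathbb{G}_n r_\tau(W_{it};a,\beta_t(a,\tau)) + o_p(1).
\]
For the bias term, I would use the identity $\mathbb{P} r_\tau(W_{it};a,b_t) = E[(\tau - F_{Y_t - X_t'a|Z_t}(Z_{it}'b_t|Z_{it})) Z_{it}]$ and the equicontinuity of $\{y \mapsto f_{Y_t - X_t'a|Z_t}(y|z) : a \in \mathcal{A}\}$ from Assumption 6 (ii), together with the boundedness of the density from Assumption 5 (v), to justify a first-order Taylor expansion around $\beta_t(a,\tau)$:
\[
\sqrt{n}\, \mathbb{P} r_\tau(W_{it};a,\tilde{\beta}_t(a,\tau)) = -J_t^b(a,\tau)\sqrt{n}(\tilde{\beta}_t(a,\tau) - \beta_t(a,\tau)) + o_p\!\left(\sqrt{n}\|\tilde{\beta}_t(a,\tau) - \beta_t(a,\tau)\|\right),
\]
where the leading term uses $\mathbb{P} r_\tau(W_{it};a,\beta_t(a,\tau)) = 0$ (the first-order condition of (\ref{beta limit problem})). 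Combining the two decompositions and using invertibility of $J_t^b(a,\tau)$ from Assumption 5 (vi),
\[
\sqrt{n}(\tilde{\beta}_t(a,\tau) - \beta_t(a,\tau)) = -J_t^b(a,\tau)^{-1}\mathbb{G}_n r_\tau(W_{it};a,\beta_t(a,\tau)) + o_p(1) + o_p(\sqrt{n}\|\tilde{\beta}_t - \beta_t\|)
\]
uniformly in $(a,\tau)$.

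The main obstacle is absorbing the residual $o_p(\sqrt{n}\|\tilde{\beta}_t(a,\tau) - \beta_t(a,\tau)\|)$ term, which requires establishing uniform $\sqrt{n}$-consistency of $\tilde{\beta}_t(a,\tau)$ as a preliminary step. I would handle this by a bootstrap-style rate argument: the Donsker property of $\{r_\tau(\cdot;a,b_t,\tau)\}$ (from Lemma 1's proof) implies $\mathbb{G}_n r_\tau(W_{it};a,\beta_t(a,\tau)) = O_p(1)$ uniformly, so the display above combined with the full-rank condition on $J_t^b$ yields $\sqrt{n}\|\tilde{\beta}_t(a,\tau) - \beta_t(a,\tau)\| = O_p(1)$ uniformly; feeding this bound back into the residual then closes the argument and gives the stated representation.
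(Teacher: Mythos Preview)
Your proposal is correct and follows essentially the same route as the paper's proof: both invoke the Koenker--Bassett subgradient bound to get $\sqrt{n}\,\mathbb{P}_n r_\tau(W_{it};a,\tilde{\beta}_t(a,\tau)) = o(1)$, then use Lemma 2 and the stochastic equicontinuity result (\ref{SEC_f}) from Lemma 3 to replace $\tilde{\beta}_t$ by $\beta_t$ in the empirical process, and finally Taylor-expand the population term using the equicontinuity of $\{y \mapsto f_{Y_t-X_t'a|Z_t}(y|z):a\in\mathcal{A}\}$ from Assumption 6 (ii). The only difference is presentational: the paper writes the expansion as $(J_t^b(a,\tau)+o_p(1))\sqrt{n}(\tilde{\beta}_t-\beta_t)$ and inverts directly, whereas you spell out the intermediate uniform $O_p(1)$ rate argument that justifies absorbing the $o_p(1)\sqrt{n}(\tilde{\beta}_t-\beta_t)$ term---this is a point the paper leaves implicit, so your version is slightly more careful.
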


\begin{proof}
By the computational properties of the ordinary quantile regression estimator (see \cite{koenker1978regression} and \cite{angrist2006quantile}), because $Z_t$ is bounded, we obtain
\begin{equation}
\frac{1}{\sqrt{n}} \sum_{i=1}^n r_{\tau}(W_{it};a,\tilde{\beta}_t(a,\tau)) = o(1) \nonumber
\end{equation}
uniformly over $a \in \mathcal{A}$ and $\tau \in \mathcal{T}$.
From Lemmas 2 and 3, we have
\begin{eqnarray}
o(1) &=& \sqrt{n} \mathbb{E}_n r_{\tau}(W_t;a,\tilde{\beta}_t(a,\tau)) \nonumber \\
&=& \mathbb{G}_n r_{\tau}(W_t;a,\beta_t(a,\tau)) + o_p(1) + \sqrt{n} E r_{\tau}(W_t;a,\tilde{\beta}_t(a,\tau)), \label{Lemma 7 (1)}
\end{eqnarray}
where the term $o_p(1)$ is uniform over $a \in \mathcal{A}$ and $\tau \in \mathcal{T}$.
Because $E r_{\tau}(W_t;a,\beta_t(a,\tau))=0$ by first order condition, we obtain
\begin{eqnarray}
E r_{\tau}(W_t;a,\tilde{\beta}_t(a,\tau)) &=& \left( \frac{\partial}{\partial b_t'} E r_{\tau}(W_t;a,b_t) \Big|_{b_t=\bar{b}^t_{a,\tau}} \right)(\tilde{\beta}_t(a,\tau) - \beta_t(a,\tau)) \nonumber \\
&=& E\left[ f_{Y_t-X_t'a|Z_t}(Z_t'\bar{b}^t_{a,\tau}|Z_t) Z_t Z_t' \right] (\tilde{\beta}_t(a,\tau) - \beta_t(a,\tau)), \nonumber 
\end{eqnarray}
where $\bar{b}^t_{a,\tau}$ is between $\tilde{\beta}_t(a,\tau)$ and $\beta_t(a,\tau)$.
Because $\{y \mapsto f_{Y_t-X_t'a|Z_t}(y|z): a \in \mathcal{A}\}$ is equicontinuous for all $z$, we have
$$
E\left[ f_{Y_t-X_t'a|Z_t}(Z_t'\bar{b}^t_{a,\tau}|Z_t) Z_t Z_t' \right] = J_t^b(a,\tau) + o_p(1) \ \ \text{uniformly over $a \in \mathcal{A}$ and $\tau \in \mathcal{T}$.}
$$
Therefore, it follows from (\ref{Lemma 7 (1)}) that
\begin{eqnarray}
\sqrt{n} (\tilde{\beta}_t(a,\tau) - \beta_t(a,\tau)) = -J_t^b(a,\tau)^{-1} \mathbb{G}_n r_{\tau}(W_t;a,\beta_t(a,\tau)) + o_p(1), \nonumber
\end{eqnarray}
where the term $o_p(1)$ is uniform over $a \in \mathcal{A}$ and $\tau \in \mathcal{T}$.
\end{proof}\vspace{0.15in}

\begin{Lemma}
Under the assumptions of Theorem 2, for all $\tau$ we obtain
\begin{eqnarray*}
\sup_{a \in \mathcal{A}} \| \hat{\alpha}^*(\tau) - \alpha(\tau) \| &=& o_{{a.s.}^*}(1), \\
\sup_{a \in \mathcal{A}, \tau \in \mathcal{T}} \left\| \tilde{\beta}^*_t(a,\tau) - \beta_t(a,\tau) \right\| &=& o_{{a.s.}^*}(1).
\end{eqnarray*}
\end{Lemma}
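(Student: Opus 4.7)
The plan is to replicate the proofs of Lemmas 2 and 6, with all empirical averages replaced by their bootstrap counterparts. Both of those earlier proofs rest on two ingredients: a uniform law of large numbers for a Glivenko--Cantelli class (Lemma 1) and an identification condition that does not involve the data (Assumption 5(iv) for $\beta_t(a,\tau)$ and the equivalence (\ref{Identification Estimator}) for $\alpha(\tau)$). Since the identification conditions transfer to the bootstrap world unchanged, the only new technical input needed is the bootstrap analogue of Lemma 1.

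First I would establish the bootstrap uniform laws
\begin{align*}
\sup_{a \in \mathcal{A},\, b_t \in \mathcal{B}_t,\, \tau \in \mathcal{T}} \left| \frac{1}{n} \sum_{i=1}^n R_\tau(W^*_{it};a,b_t) - E[R_\tau(W_{it};a,b_t)] \right| &= o_{{a.s.}^*}(1), \\
\sup_{a \in \mathcal{A},\, b \in \mathcal{B},\, v \in \mathcal{V}} \left| \hat{D}^{t*}_n(v;a,b) - D^t(v;a,b) \right| &= o_{{a.s.}^*}(1),
\end{align*}
where $\hat{D}^{t*}_n(v;a,b) \equiv n^{-1}\sum_{i=1}^n g_t(\mathbf{W}^*_i;a,b,v)$. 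The proof of Lemma 1 shows that the two function classes involved are VC-classes with bounded envelopes, so the bootstrap Glivenko--Cantelli theorem from \cite{van1996weak} applies conditionally on the data along almost every realization of the sample; adding and subtracting the non-bootstrap empirical averages and invoking Lemma 1 yields the two displays above.

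Second, with the bootstrap uniform law for $R_\tau$ in hand, I would copy the argument of Lemma 2 verbatim, replacing the empirical measure by the bootstrap empirical measure. That argument uses only the uniform law, the Lipschitz bound $\rho_\tau(u)-\rho_\tau(u') \le |u-u'|$, and the well-separated minimum property guaranteed by Assumption 5(iv), all of which are deterministic, and gives the second claim of Lemma 8. Third, I would repeat the argument of Lemma 6 using the bootstrap uniform law for $\hat{D}^{t*}_n$, the bootstrap consistency of $\tilde{\beta}^*_t$ just obtained, the uniform continuity of $\|D^t(v;a,b)\|_{\tilde{\mu}}$ in $(a,b)$ coming from Assumption 5(vii), and the identification condition (\ref{Identification Estimator}) to conclude $\sup_{\tau \in \mathcal{T}} \|\hat{\alpha}^*(\tau) - \alpha(\tau)\| = o_{{a.s.}^*}(1)$; the $\sup_{a \in \mathcal{A}}$ appearing in the statement of the lemma is a typo, since $\hat{\alpha}^*(\tau)$ does not depend on $a$.

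The main obstacle is the bootstrap Glivenko--Cantelli step, but since the VC indices, the envelopes, and the identification conditions are all non-stochastic, the hypotheses of the conditional uniform law are satisfied along almost every data sequence; consequently no argument beyond standard bootstrap empirical-process theory is needed, and the remainder of the proof is mechanical because Lemmas 2 and 6 rely only on the uniform convergence of averages, not on any finer empirical-process property.
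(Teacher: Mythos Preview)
Your proposal is correct and matches the paper's own proof essentially step for step: the paper likewise invokes the VC/Donsker property from Lemma 1 to obtain the bootstrap Glivenko--Cantelli conclusions, then reproduces the well-separated-minimum argument of Lemma 2 for $\tilde{\beta}^*_t$ and the argument of Lemma 6 for $\hat{\alpha}^*$. Your remark that the $\sup_{a\in\mathcal{A}}$ in the first display is a typo (since $\hat{\alpha}^*(\tau)$ does not depend on $a$) is also correct.
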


\begin{proof}
From the proof of Lemma 1, uniformly in $a$ and $\tau$ we have
\begin{eqnarray*}
E\left[ R_{\tau}(W_{it};a,\tilde{\beta}^*_t(a,\tau)) \right] &=& \frac{1}{n} \sum_{i=1}^n R_{\tau}(W_{it};a,\tilde{\beta}^*_t(a,\tau)) + o_{{a.s.}^*}(1) \\
& \leq & \frac{1}{n} \sum_{i=1}^n R_{\tau}(W_{it};a,\beta_t(a,\tau)) + o_{{a.s.}^*}(1) \\
&=& E\left[ R_{\tau}(W_{it};a,\beta_t(a,\tau)) \right] + o_{{a.s.}^*}(1).
\end{eqnarray*}
We think of $b_t(a,\tau)$ and $E\left[ R_{\tau}(W_{it};a,b_t(a,\tau)) \right]$ as functions with respect to $(a,\tau) \in \mathcal{A} \times \mathcal{T}$.
Let $\ell^{\infty}_k(\mathcal{A} \times \mathcal{T})$ denote the set of all uniformly bounded, $\mathbb{R}^k$-valued functions on $\mathcal{A} \times \mathcal{T}$.
From the proof of Lemma 2, $b_t(a,\tau) \mapsto \sup_{a \in \mathcal{A}, \tau \in \mathcal{T}} E\left[ R_{\tau}(W_{it};a,b_t(a,\tau)) \right]$ is continuous as a map from $\ell^{\infty}_{\text{dim}(Z_{t})}(\mathcal{A} \times \mathcal{T})$ to $\mathbb{R}$ and this map has a unique, well-separated minimum.
Hence, we obtain $\sup_{a \in \mathcal{A}, \tau \in \mathcal{T}} \| \tilde{\beta}^*_t(a,\tau) - \beta_t(a,\tau) \| = o_{{a.s.}^*}(1)$.

Because $\{\mathbf{w} \mapsto g_t(\mathbf{w};a,b,v) : a \in \mathcal{A}, b \in \mathcal{B}, v \in \mathcal{V}\}$ is a Donsker class, we have
\[
\sup_{ \substack{ a \in \mathcal{A}, b \in \mathcal{B} \\ v \in \mathcal{V}} } \left| \hat{D}_{n*}^t(v;a,b) - D^t(v;a,b) \right| = o_{{a.s.}^*}(1).
\]
Combined with $\sup_{a \in \mathcal{A}, \tau \in \mathcal{T}} \| \tilde{\beta}^*_t(a,\tau) - \beta_t(a,\tau) \| = o_{{a.s.}^*}(1)$, this implies that uniformly in $a$ and $\tau$ we obtain
\begin{eqnarray*}
\| \hat{D}_{n*}^t(v;a,\tilde{\beta}^*_t(a,\tau)) \|_{\tilde{\mu}} &=& \| D^t(v;a,\beta_t(a,\tau)) \|_{\tilde{\mu}} + o_{{a.s.}^*}(1).
\end{eqnarray*}
By similar arguments given in the proof of Lemma 7, we obtain $\sup_{\tau \in \mathcal{T}} \| \hat{\alpha}^*(\tau) - \alpha(\tau) \| = o_{{a.s.}^*}(1)$.
\end{proof}

\begin{Lemma}
Under the assumptions of Theorem 2, uniformly in $a \in \mathcal{A}$ and $\tau \in \mathcal{T}$ we have
\[
\sqrt{n} \left( \tilde{\beta}_t^*(a,\tau) - \tilde{\beta}_t(a,\tau) \right) = - J_t^b(a,\tau)^{-1} \mathbb{G}_n^* r_{\tau}(a,\beta_t(a,\tau)) + o_{p^*}(1),
\]
where $\mathbb{E}_n r_{\tau}(a,b_t) \equiv \frac{1}{n} \sum_{i=1}^n r_{\tau}(W_{it};a,b_t)$, $\mathbb{E}_n^* r_{\tau}(a,b_t) \equiv \frac{1}{n} \sum_{i=1}^n r_{\tau}(W_{it}^*;a,b_t)$, and $\mathbb{G}_n^* r_{\tau}(a,b_t) \equiv  \sqrt{n} ( \mathbb{E}_n^* r_{\tau}(a,b_t) - \mathbb{E}_n r_{\tau}(a,b_t) )$.
\end{Lemma}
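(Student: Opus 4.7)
The plan is to mirror the argument of Lemma 7 in the bootstrap world, invoking Lemma 8 in place of Lemma 2 for preliminary consistency, and then subtract the two $\sqrt{n}$-linearizations to isolate the desired expansion. The key new ingredient relative to Lemma 7 is that the population Taylor expansion must be combined with both the original and bootstrap empirical processes, which then cancel against each other after subtracting the Lemma 7 representation.

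First, by the computational properties of the ordinary quantile regression (\citealp{koenker1978regression}; \citealp{angrist2006quantile}) applied to the bootstrap sample, the bootstrap first-order condition yields $\sqrt{n}\,\mathbb{E}_n^* r_{\tau}(W_t;a,\tilde{\beta}_t^*(a,\tau)) = o(1)$ uniformly in $(a,\tau)\in\mathcal{A}\times\mathcal{T}$ because $Z_t$ is bounded. I would then decompose this into bootstrap-process, original-process, and population pieces via
\[
\sqrt{n}\,\mathbb{E}_n^* r_{\tau}(a,\tilde{\beta}_t^*) \;=\; \mathbb{G}_n^* r_{\tau}(a,\tilde{\beta}_t^*) \;+\; \mathbb{G}_n r_{\tau}(a,\tilde{\beta}_t^*) \;+\; \sqrt{n}\,E\,r_{\tau}\!\left(W_t;a,\tilde{\beta}_t^*(a,\tau)\right).
\]

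Next, since $\{w\mapsto r_{\tau}(w;a,b_t):a\in\mathcal{A},\,b_t\in\mathcal{B}_t,\,\tau\in\mathcal{T}\}$ is Donsker (proof of Lemma 1) and Lemma 8 provides $\sup_{a,\tau}\|\tilde{\beta}_t^*(a,\tau)-\beta_t(a,\tau)\|=o_{{a.s.}^*}(1)$, I would invoke stochastic equicontinuity of both processes (for the bootstrap process conditionally on $\{\mathbf{W}_i\}$, as in Lemma 1 of \citealp{chen2003estimation}) to replace
\[
\mathbb{G}_n^* r_{\tau}(a,\tilde{\beta}_t^*(a,\tau)) = \mathbb{G}_n^* r_{\tau}(a,\beta_t(a,\tau)) + o_{p^*}(1), \qquad \mathbb{G}_n r_{\tau}(a,\tilde{\beta}_t^*(a,\tau)) = \mathbb{G}_n r_{\tau}(a,\beta_t(a,\tau)) + o_{p^*}(1),
\]
uniformly in $(a,\tau)$. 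For the population piece, Taylor expanding around $\beta_t(a,\tau)$ and using $E\,r_{\tau}(W_t;a,\beta_t(a,\tau))=0$ gives $E r_{\tau}(W_t;a,\tilde{\beta}_t^*(a,\tau)) = E[f_{Y_t-X_t'a|Z_t}(Z_t'\bar{b}|Z_t)Z_tZ_t'](\tilde{\beta}_t^*(a,\tau)-\beta_t(a,\tau))$ for some intermediate $\bar{b}$, and the equicontinuity in Assumption 6(ii) upgrades the bracketed expectation to $J_t^b(a,\tau)+o_{p^*}(1)$ uniformly.

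Assembling these pieces and inverting $J_t^b(a,\tau)$ (nonsingular by Assumption 5(vi)) yields
\[
\sqrt{n}\!\left(\tilde{\beta}_t^*(a,\tau)-\beta_t(a,\tau)\right) = -J_t^b(a,\tau)^{-1}\!\left[\mathbb{G}_n^* r_{\tau}(a,\beta_t(a,\tau)) + \mathbb{G}_n r_{\tau}(a,\beta_t(a,\tau))\right] + o_{p^*}(1),
\]
uniformly in $(a,\tau)$. Subtracting the conclusion of Lemma 7, namely $\sqrt{n}(\tilde{\beta}_t(a,\tau)-\beta_t(a,\tau))=-J_t^b(a,\tau)^{-1}\mathbb{G}_n r_{\tau}(a,\beta_t(a,\tau))+o_p(1)$, cancels the $\mathbb{G}_n$ term and delivers the claimed representation. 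The main obstacle is the bootstrap stochastic equicontinuity step, which must hold uniformly in $(a,\tau)$ and conditionally on the data; this is a standard consequence of the Donsker property together with the conditional weak convergence results cited in the proof of Theorem 3, but the $o_{p^*}$-rate control at the random plug-in $\tilde{\beta}_t^*(a,\tau)$ needs to be handled with the same care as the non-bootstrap version in Lemma 3.
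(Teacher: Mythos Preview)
Your proposal is correct and follows essentially the same route as the paper: start from the bootstrap first-order condition, use stochastic equicontinuity of both the original and bootstrap empirical processes indexed by the Donsker class of score functions to replace $\tilde{\beta}_t^*(a,\tau)$ by $\beta_t(a,\tau)$, Taylor expand the population piece via Assumption 6(ii), and invert $J_t^b(a,\tau)$. The only cosmetic difference is that the paper centers $\sqrt{n}\,\mathbb{E}_n^*$ at the original empirical measure (writing it as $\mathbb{G}_n^* + \sqrt{n}\,\mathbb{E}_n$) and then subtracts the original first-order condition $\sqrt{n}\,\mathbb{E}_n r_\tau(a,\tilde{\beta}_t)=o(1)$ \emph{before} Taylor expanding, so that the mean-value expansion is applied directly to $\sqrt{n}\{E r_\tau(a,\tilde{\beta}_t^*) - E r_\tau(a,\tilde{\beta}_t)\}$ and yields $J_t^b(a,\tau)\sqrt{n}(\tilde{\beta}_t^*-\tilde{\beta}_t)$ in one step; you instead expand about $\beta_t(a,\tau)$ first and subtract Lemma~7 afterwards, which is algebraically equivalent.
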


\begin{proof}
Similar to Lemma 7, uniformly in $a$ and $\tau$, we have $\sqrt{n} \mathbb{E}_n r_{\tau}(a,\tilde{\beta}_t(a,\tau)) = o(1)$ and hence
\begin{eqnarray*}
o(1) &=& \sqrt{n} \mathbb{E}_n^* r_{\tau}(a,\tilde{\beta}_t^*(a,\tau)) \\
&=& \mathbb{G}_n^* r_{\tau}(a,\beta_t(a,\tau)) + o_{p^*}(1) + \sqrt{n} \mathbb{E}_n r_{\tau}(a,\tilde{\beta}_t^*(a,\tau)).
\end{eqnarray*}
These equalities imply that
\[
o_{p^*}(1) = \mathbb{G}_n^* r_{\tau}(a,\beta_t(a,\tau)) + \sqrt{n} \left\{ \mathbb{E}_n r_{\tau}(a,\tilde{\beta}_t^*(a,\tau)) - \mathbb{E}_n r_{\tau}(a,\tilde{\beta}_t(a,\tau)) \right\}.
\]
We define $\mathbb{E} r_{\tau}(a,b_t) \equiv E[r_{\tau}(W_{it};a,b_t)]$ and $\mathbb{G}_n r_{\tau}(a,b_t) \equiv \sqrt{n} (\mathbb{E}_n r_{\tau}(a,b_t) - \mathbb{E} r_{\tau}(a,b_t))$.
Because $\{w_t \mapsto r_{\tau}(w_t;a,b_t) : a \in \mathcal{A}, b_t \in \mathcal{B}_t, \tau \in \mathcal{T} \}$ is a Donsker class, uniformly in $a$ and $\tau$ we obtain
\begin{eqnarray*}
& & \sqrt{n} \left\{ \mathbb{E}_n r_{\tau}(a,\tilde{\beta}_t^*(a,\tau)) - \mathbb{E}_n r_{\tau}(a,\tilde{\beta}_t(a,\tau)) \right\} \\
&=& \mathbb{G}_n r_{\tau}(a,\tilde{\beta}_t^*(a,\tau)) - \mathbb{G}_n r_{\tau}(a,\tilde{\beta}_t^*(a,\tau)) + \sqrt{n} \left\{ \mathbb{E} r_{\tau}(a,\tilde{\beta}_t^*(a,\tau)) - \mathbb{E} r_{\tau}(a,\tilde{\beta}_t(a,\tau)) \right\} \\
&=& o_{{a.s.}^*}(1) + \sqrt{n} \left\{ \mathbb{E} r_{\tau}(a,\tilde{\beta}_t^*(a,\tau)) - \mathbb{E} r_{\tau}(a,\tilde{\beta}_t(a,\tau)) \right\}.
\end{eqnarray*}
By similar arguments given in the proof of Lemma 7, we obtain
\[
\sqrt{n} \left( \tilde{\beta}_t^*(a,\tau) - \tilde{\beta}_t(a,\tau) \right) = - J_t^b(a,\tau)^{-1} \mathbb{G}_n^* r_{\tau}(a,\beta_t(a,\tau)) + o_{p^*}(1),
\]
where the term $o_{p^*}(1)$ is uniform over $a \in \mathcal{A}$ and $\tau \in \mathcal{T}$.
\end{proof}

\clearpage
\if0
\section*{Appendix 3: Standard errors}

We use the following estimates of $\Gamma_1^t(v;\tau)$, $\Gamma_2^t(v;\tau)$, and $J_t^b(\tau)$:
\begin{eqnarray}
\hat{\gamma}_1^t(v;\tau) &\equiv & \frac{1}{2 n h_n} \sum_{i=1}^n \mathbf{1}\{ | \hat{\epsilon}_{it}(\tau) | \leq h_n \} \omega(\mathbf{X}_i,\mathbf{Z}_i,v) \left( X_{it} + \hat{B}_t(\tau)'Z_{it} \right), \nonumber \\
\hat{\Gamma}_1^t(v;\tau) &\equiv & \hat{\gamma}_1^t(v;\tau) - \frac{1}{T} \sum_{s=1}^T \hat{\gamma}_1^s(v;\tau), \nonumber \\
\hat{B}_t(\tau) &\equiv & \frac{1}{2 s_n} \left\{ \tilde{\beta}_t(\hat{\alpha}(\tau) + s_n, \tau) - \tilde{\beta}_t(\hat{\alpha}(\tau) - s_n, \tau) \right\}, \nonumber \\
\hat{\gamma}_2^{t,s}(v;\tau) &\equiv & \begin{cases}
\frac{T-1}{T} \frac{1}{2 n h_n} \sum_{i=1}^n \mathbf{1}\{ | \hat{\epsilon}_{it}(\tau) | \leq h_n \} \omega(\mathbf{X}_i,\mathbf{Z}_i,v) Z_{it}, & \text{if $s=t$} \\
-\frac{1}{T} \frac{1}{2 n h_n} \sum_{i=1}^n \mathbf{1}\{ | \hat{\epsilon}_{it}(\tau) | \leq h_n \} \omega(\mathbf{X}_i,\mathbf{Z}_i,v) Z_{it}, & \text{if $s \neq t$}
\end{cases}, \nonumber \\
\hat{\Gamma}_2^t(v;\tau) &\equiv & \left( \hat{\gamma}_2^{t,1}(v;\tau)', \ldots, \hat{\gamma}_2^{t,T}(v;\tau)' \right)', \nonumber \\
\hat{J}_t^b(\tau) &\equiv & \frac{1}{2 n h_n} \sum_{i=1}^n \mathbf{1}\{ | \hat{\epsilon}_{it}(\tau) | \leq h_n \} Z_{it}Z_{it}', \nonumber
\end{eqnarray}
where $h_n$ and $s_n$ are bandwidths and $\hat{\epsilon}_{it}(\tau) \equiv Y_{it} - X_{it}'\hat{\alpha}(\tau) - Z_{it}'\hat{\beta}_t(\tau)$.
From Lemmas 1 and 4, we have
\begin{eqnarray}
\hat{B}_t(\tau) &=& \frac{1}{2 s_n} \left\{ \beta_t(\hat{\alpha}(\tau) + s_n, \tau) - \beta_t(\hat{\alpha}(\tau) - s_n, \tau) \right\} + o_p\left( \frac{1}{s_n \sqrt{n}} \right). \nonumber 
\end{eqnarray}
Hence, if $s_n^{-1} n^{-1/2} = O_p(1)$, then we obtain $\hat{B}_t(\tau) \to_p B_t(\alpha(\tau),\tau)$.
As discussed in \cite{powell1986censored} and \cite{chernozhukov2006instrumental}, if $h_n \to 0$ and $n h_n^2 \to \infty$, then we obtain consistency of $\hat{\Gamma}_1^t(v;\tau)$, $\hat{\Gamma}_2^t(v;\tau)$, and $\hat{J}_t^b(\tau)$.

\section*{Appendix 3: Figures and tables}

\begin{spacing}{1.0}

\begin{table}[H]
\begin{center}
\caption{Results of Simulation 1} 
   \begin{tabular}{c c r r r r r r} \hline
      & & \multicolumn{3}{c}{$n=1000$} & \multicolumn{3}{c}{$n=2000$} \\ \hline
      &  & $\rho^2=0.1$ & $\rho^2=0.5$ & $\rho^2=0.9$ & $\rho^2=0.1$ & $\rho^2=0.5$ & $\rho^2=0.9$ \\ \hline \hline
      & bias & -0.017 & -0.018 & -0.013 & -0.003 & -0.015 & -0.014 \\ 
$\tau=0.25$ & std & 0.232 & 0.236 & 0.178 & 0.157 & 0.151 & 0.112 \\ 
      & mse & 0.054 & 0.056 & 0.032 & 0.025 & 0.023 & 0.013 \\ \hline
      & bias & -0.006 & -0.011 & -0.017 & 0.001 & -0.008 & -0.008 \\ 
$\tau=0.50$ & std & 0.204 & 0.204 & 0.149 & 0.140 & 0.133 & 0.099 \\ 
      & mse & 0.042 & 0.042 & 0.022 & 0.020 & 0.018 & 0.010 \\ \hline
      & bias & -0.013 & -0.018 & -0.018 & -0.002 & -0.011 & -0.013 \\ 
$\tau=0.75$ & std & 0.232 & 0.230 & 0.177 & 0.156 & 0.156 & 0.118 \\ 
      & mse & 0.054 & 0.053 & 0.032 & 0.024 & 0.024 & 0.014 \\ \hline
  \end{tabular}
\end{center}
\end{table}
\vspace{0.5in}

\begin{table}[H]
\begin{center}
\caption{Coverage probabilities of Simulation 1} 
   \begin{tabular}{c c c c} \hline
      & $\tau = 0.25$ & $\tau = 0.50$ & $\tau = 0.75$ \\ \hline \hline
 90\% & 0.894 & 0.898 & 0.892 \\
 95\% & 0.938 & 0.946 & 0.940 \\ \hline
  \end{tabular}
\end{center}
\end{table}
\vspace{0.5in}

\begin{table}[H]
\begin{center}
\caption{Results of Simulation 2} 
   \begin{tabular}{c c r r r r r r} \hline
      & & \multicolumn{3}{c}{Our estimator} & \multicolumn{3}{c}{AI estimator} \\ \hline
      & & $\rho^2=0.1$ & $\rho^2=0.5$ & $\rho^2=0.9$ & $\rho^2=0.1$ & $\rho^2=0.5$ & $\rho^2=0.9$ \\ \hline \hline
      & bias & -0.011 & -0.026 & -0.015 & -0.003 & -0.012 & -0.027 \\ 
$\tau=0.25$ & std & 0.129 & 0.129 & 0.097 & 0.136 & 0.129 & 0.108 \\ 
      & mse & 0.017 & 0.017 & 0.010 & 0.018 & 0.017 & 0.012 \\ \hline
      & bias & -0.002 & -0.010 & -0.006 & 0.002 & -0.001 & -0.004 \\ 
$\tau=0.50$ & std & 0.116 & 0.100 & 0.069 & 0.115 & 0.097 & 0.065 \\ 
      & mse & 0.013 & 0.010 & 0.005 & 0.013 & 0.009 & 0.004 \\ \hline
      & bias & -0.005 & -0.017 & -0.010 & -0.003 & -0.004 & -0.006 \\ 
$\tau=0.75$ & std & 0.123 & 0.106 & 0.077 & 0.125 & 0.110 & 0.073 \\ 
      & mse & 0.015 & 0.012 & 0.006 & 0.016 & 0.012 & 0.005 \\ \hline
  \end{tabular}
\end{center}
\end{table}
\vspace{0.5in}

\begin{table}[H]
\begin{center}
\caption{Summary Statistics}
  \begin{tabular}{l c c c c} \hline
      & Treatment & Control & Diff & P-val on Diff \\ \hline
      Number of households & 1260 & 2128 & & \\
      Area of tobacco production (mu) & 5.578 & 4.874 & 0.705 & 0.000 \\
      Age & 41.119 & 41.522 & -0.403 & 0.173 \\
      Household size & 4.877 & 4.665 & 0.212 & 0.000 \\
      Education (Primary) & 0.367 & 0.323 & 0.044 & 0.009 \\
      Education (Secondary) & 0.602 & 0.338 & 0.263 & 0.000 \\
      Education (High school or College) & 0.025 & 0.257 & -0.232 & 0.000 \\ \hline
  \end{tabular}
\end{center}
\end{table}
\vspace{0.5in}

\begin{figure}[h]
\centering
\includegraphics[width=15cm]{figure_1.png}
\caption{Description of $\mathcal{S}_t^m$.}
\end{figure}

\begin{figure}[h]
\centering
\includegraphics[width=15cm]{empirical_1.png}
\caption{The estimates of the QTE and the 95 \% confidence intervals in Section 5.1. The horizontal axis measures the value of $\tau$ and the dashed line denotes the DID estimate.}
\end{figure}

\begin{figure}[h]
\centering
\includegraphics[width=15cm]{empirical_2.png}
\caption{The estimates of the QTE and the 95 \% confidence intervals in Section 5.2. The horizontal axis measures the value of $\tau$.}
\end{figure}

\end{spacing}

\fi

\clearpage

\bibliographystyle{ecta}
\bibliography{nonseparable_model}

\end{document}